\numberwithin{equation}{section}
\def\aa{\underline a}
\def\bb{\underline b}
\def\AA{\underline A}
\def\asc{\mathfrak{\underline{a}}}
\def\bsc{\mathfrak{\underline{b}}}
\def\fsc{\mathfrak{\underline{f}}}
\let\tilde\widetilde
\def\dsum{\mathop{\mathchar"1350\mathchar"1350}\limits}
\def\ksum{\mathop{\mathchar"1350\mathchar"1350\cdots\mathchar"1350}\limits}
\begin{document}

\begin{frontmatter}
\title{Minimax Estimation of a Functional on a Structured High-dimensional Model (corrected version)}
\runtitle{Minimax Estimation on a Structured Model}

\begin{aug}
\author{James M. Robins\ead[label=e3]{robins@hsph.harvard.edu}},
\author{Lingling Li\ead[label=e2]{lingling07.li@gmail.com}},
\author{Lin Liu\ead[label=e5]{linliu.tju@gmail.com}},
\author{Rajarshi Mukherjee\ead[label=e5]{rajmrt23@gmail.com}},
\author{Eric Tchetgen Tchetgen\ead[label=e1]{etchetge@hsph.harvard.edu}}
\and 
\author{Aad van der Vaart\ead[label=e4]{avdvaart@math.leidenuniv.nl}}
\thankstext{}{The research leading to these results has 
received funding from the European Research Council 
under ERC Grant Agreement 320637, an NWO Spinoza grant,
and from the National Institutes of Health grants AI112339,  AI32475, AI113251 and ES020337.}
\runauthor{Robins et al.}
\affiliation{Harvard University, Karyopharm Therapeutics Inc.,
CMA-Shanghai, Harvard University
and TU Delft}
\address{James M. Robins\\
Harvard T.H. Chan School of Public Health\\
\printead{e3}}
\address{Lingling Li\\
Karyopharm Therapeutics Inc.\\
\printead{e2}}
\address{Lin Liu\\
Institute of Natural Sciences\\
School of Mathematical Sciences\\
CMA-Shanghai, MOE-LSC\\
\printead{e5}}
\address{Rajarshi Mukherjee\\
Department of Biostatistics\\
Harvard Unversity\\
\printead{e5}}
\address{Eric Tchetgen Tchetgen \\
The Wharton School of Business\\
University of Pennsylvania\\
\printead{e1}}
\address{Aad van der Vaart\\
Delft institute of Applied Mathemats\\
TU Delft\\
The Netherlands\\
\printead{e4}}
\end{aug}

\begin{abstract}
We introduce a new method of estimation of parameters in
semiparametric and nonparametric models. The method is
based $U$-statistics that are based on higher order
influence functions that extend
ordinary linear influence functions of the parameter of interest,
and represent higher derivatives of this parameter.
For parameters for which the representation cannot be perfect
the method often leads to a bias-variance trade-off, and 
results in estimators that converge at a slower than
$\sqrt n$-rate. In a number of examples the resulting rate
can be shown to be optimal. We are particularly interested
in estimating parameters in models with
a nuisance parameter of high dimension or
low regularity, where the parameter of interest
cannot be estimated at $\sqrt n$-rate, but we also consider
efficient $\sqrt n$-estimation using novel nonlinear estimators.
The general approach is applied in detail to
the example of estimating a mean response
when the response is not always observed.
\end{abstract}

\begin{keyword}[class=AMS]
\kwd[Primary ]{62G05, 62G20, 62G20, 62F25}
\end{keyword}

\begin{keyword}
\kwd{Nonlinear functional, nonparametric estimation,
$U$-statistic, influence function, tangent space}
\end{keyword}

\end{frontmatter}

\maketitle

\date{July 2022}

\section{Introduction}
\label{SectionIntroduction}
Let $X_1,X_2,\ldots,X_n$ be a random sample from a density
$p$  relative to a measure $\m$ on a sample space $(\X,\A)$. 
It is known that $p$ belongs to a collection  $\P$ of densities,
and the problem is to estimate the value $\chi(p)$ of a functional $\chi: \P\to \RR$.
Our main interest is in the situation of a semiparametric
or nonparametric model, where $\P$ is infinite dimensional, and especially
in the case when the model is described through parameters of low
regularity. In this case the parameter $\chi(p)$ may not be estimable 
at the ``usual'' $\sqrt n$-rate.

In low-dimensional semiparametric models 
estimating equations have been found a good strategy 
for constructing estimators \cite{BKRW,vdLaanRobins,vanderVaart98}.
In our present setting it will be more convenient to consider one-step versions
of such estimators, which take the form 
\begin{equation}
\label{EqLinearEstimator}
\hat\chi_n=\chi(\hat p_n)+\PP_n \chi_{\hat p_n},
\end{equation}
for $\hat p_n$ an initial estimator for $p$ and
$x\mapsto \chi_p(x)$ a given measurable function, for each $p\in \P$,
and $\PP_nf$ short hand notation for $n^{-1}\sumin f(X_i)$.

One possible choice in (\ref{EqLinearEstimator})
is $\chi_p=0$, leading to the
plug-in estimator $\chi(\hat p_n)$. However, unless the
initial estimator $\hat p_n$ possesses special properties, this
choice is typically suboptimal. Better functions $\chi_p$ can be
constructed by consideration of the \emph{tangent space} of the model.
To see this, we write (with $P\chi_{\hat p}$ shorthand for
$\int \chi_{\hat p}(x)\,dP(x)$)
\begin{equation}
\label{EqBiasVarianceLinearEstimator}
\hat\chi_n-\chi(p)=\bigl[\chi(\hat p_n)-\chi(p)+
P \chi_{\hat p_n}\bigr]+(\PP_n-P) \chi_{\hat p_n}.
\end{equation}
Because it is properly centered, we may expect
 the sequence $\sqrt n(\PP_n-P) \chi_{\hat p_n}$
to tend in distribution to a mean-zero normal distribution.
The term between square brackets on the right of 
(\ref{EqBiasVarianceLinearEstimator}), which we shall
refer to as the \emph{bias term},
depends on the initial estimator $\hat p_n$,
and it would be natural to construct the function $\chi_p$ 
such that this term does not contribute to the limit
distribution, or at least is not dominating the
expression. Thus we would like to choose this
function such that the ``bias term'' is
no bigger than of the order $O_P(n^{-1/2})$. 
A good choice is to ensure that the term 
$P \chi_{\hat p_n}$ acts as minus the first
derivative of the functional $\chi$ in the ``direction'' $\hat p_n-p$.
Functions $x\mapsto \chi_p(x)$ with this property are
known as \emph{influence functions} in semiparametric theory
\cite{KoshevnikLevit,Pfanzagl82,vanderVaart91,vdVStFlour,BKRW},
go back to the \emph{von Mises calculus} due to \cite{vonMises}, and
play an important role in robust statistics \cite{Huber,Hampel}, or \cite{vanderVaart98}, Chapter~20.

For  an influence function we may expect
that the ``bias term'' is quadratic in the
error $d(\hat p_n,p)$, for an appropriate distance $d$. 
In that case it is certainly negligible as soon as this error
is of order $o_P(n^{-1/4})$. Such a ``no-bias'' condition
is well known in semiparametric theory
(e.g.\ condition (25.52) in \cite{vanderVaart98} or (11) in \cite{MurphyvdV00}).
However, typically it requires that the model $\P$ be ``not too big''. 
For instance, a regression or density function on
$d$-dimensional space can be estimated at rate $n^{-1/4}$ 
if it is a-priori known to have at least  $d/2$ derivatives
(indeed $\a/(2\a+d)\ge 1/4$ if $\a\ge d/2$).
The purpose of this paper is to develop 
estimation procedures for the case that no
estimators exist that attain a $O_P(n^{-1/4})$ rate of
convergence. The estimator (\ref{EqLinearEstimator}) is then
suboptimal, because it fails to make a proper trade-off between
``bias'' and ``variance'': the two terms in
(\ref{EqBiasVarianceLinearEstimator}) have different magnitudes. 
Our strategy is to replace the
linear term $\PP_n\chi_p$ by a general 
$U$-statistic $\UU_n\chi_p$, for an appropriate 
$m$-dimensional \emph{influence function}
$(x_1,\ldots,x_m)\mapsto \chi_p(x_1,\ldots, x_m)$, chosen 
using a type of von Mises expansion of $p\mapsto \chi(p)$.  
Here the order $m$ is adapted to the size of the model $\P$ and
the type of functional to be estimated.

Unfortunately, ``exact'' higher-order influence functions
turn out to exist only for special functionals $\chi$. 
To treat general functionals $\chi$ we approximate these by
simpler functionals, or use approximate influence functions.
The rate of the resulting estimator is then determined
by a trade-off between bias and variance terms.
It may still be of order $1/\sqrt n$, but it is typically
slower. In the former case, surprisingly, one may obtain
semiparametric efficiency by estimators whose variance is
determined by the linear term, but whose bias is corrected
using higher order influence functions. The latter case will be of more
interest.

The conclusion that the ``bias term'' in
(\ref{EqBiasVarianceLinearEstimator}) is quadratic in the estimation
error $d(\hat p_n,p)$ is based on a worst case analysis. First, there
exist a large number of models and functionals of interest that permit
a first order influence function that is unbiased in the nuisance
parameter. (E.g.\ adaptive models as considered in \cite{Bickel82},
models allowing a sufficient statistic for the nuisance parameter as
in \cite{vanderVaart86,vdV88}, mixture models as considered in
\cite{Lindsay,Pfanzagl89,vanderVaart96}, and convex-linear
models in survival analysis.)  In such models there is no need for
higher-order influence functions.  Second, the analysis does not take
special, structural properties of the initial estimators $\hat p_n$
into account. An alternative approach would be to study the bias of a
particular estimator in detail, and adapt the influence function to
this special estimator. The strategy in this paper is not to use such
special properties and focus on influence functions
that work with general initial estimators $\hat p_n$.

The motivation for our new estimators stems from studies in
epidemiology and econometrics that include
covariates whose influence on an outcome of
interest cannot be reliably modelled by a simple model. These
covariates may themselves not be of interest, but are
included in the analysis to adjust the analysis for possible bias. For instance, the
mechanism that describes why certain data is missing is in terms
of conditional probabilities given several covariates, but the
functional form of this dependence is unknown.  Or, to permit a
causal interpretation in an observational study one
conditions on a set of covariates to control for 
confounding, but the form of the dependence on the confounding variables is
unknown. One may hypothesize in such situations that the
functional dependence on a set of (continuous) covariates is smooth
(e.g. $d/2$ times differentiable in the case of $d$ covariates), or even linear. Then
the usual estimators will be accurate (at order $O_P(n^{-1/2})$) if the
hypothesis is true, but they will be badly biased in the other
case. In particular, the usual normal-theory based confidence
intervals may be totally misleading: they will be both too
narrow and wrongly located. The methods in this paper yield estimators
with (typically) wider corresponding confidence intervals, but they are correct
under weaker assumptions.

The mathematical contributions of the paper  are to provide a heuristic for constructing
minimax estimators in semiparametric models, and to apply this
to a concrete  model, which is a template for a number of other models
(see \cite{RobinsetalFreedman,vdVStatScience}).
The methods connect to earlier work \cite{HasminskiiIbragimov,Nemirovski} on the estimation of functionals 
on nonparametric models, but differ by our focus on
functionals that are defined in terms of the structure of a semiparametric model.
This requires an analysis of the inverse map from the density of the observations
to the parameters, in terms of the semiparametric tangent spaces of the models.
Our second order estimators are related to work on quadratic functionals, or functionals that are well approximated
by quadratic functionals, as in 
\cite{DonohoNussbaum90, KerkPicard96,BickelRitov,BirgeMassart95,Laurent97,LaurentMassart,CaiLow05,CaiLow06}.
While we place the construction of minimax estimators for these special functionals
in a wider framework, our focus differs by going beyond quadratic estimators
and to consider semiparametric models. 

Our mathematical results are in part conditional on a scale of regularity parameters (through
the dimension given in (\ref{EqDefinitionk}) and a partition of this dimension that depends on two of these parameters). 
We hope to discuss adaptation to these parameters in future work.

General heuristics of our construction are given in Section~\ref{SectionHeuristics}. 
Sections~\ref{SectionMarFirstOrder}--\ref{SectionMinimaxRate}
are devoted to constructing new estimators for the mean response
effect in missing data problems.  The latter are introduced in 
Section~\ref{SectionMissingData}, so that they can serve as illustration
to the general heuristics in Section~\ref{SectionHeuristics}. 
In Section~\ref{SectionOtherExamples} (in the supplement \cite{Supplement}) we briefly discuss  other problems, including
estimating a density at a point, where already
first order influence functions do not exist and our heuristics
naturally lead to projection estimators, and estimating a quadratic functional, where our approach produces standard estimators
from the literature in a natural way. Section~\ref{SectionProofs} (partly in the supplement \cite{Supplement})
collects technical proofs. Sections~\ref{SectionInfluenceFunctions},~\ref{SectionProjections} and~\ref{SectionU} 
(in the supplement \cite{Supplement})
discuss three key concepts of the paper: influence functions, 
projections and $U$-statistics.

\section{Notation}
\label{SectionNotation}
Let $\UU_n$ denote the \emph{empirical $U$-statistic} measure, 
viewed as an operator on functions.
For given $m\le n$ and a function $f: \X^m\to\RR$ on the sample space
this is defined by
$$\UU_n f=\frac1{n(n-1)\cdots (n-m+1)}\ksum_
{1\le i_1\not=i_2\not=\cdots\not=i_m\le n} f(X_{i_1}, X_{i_2},\cdots,X_{i_m}).$$
We do not let the order $m$ show up in the notation $\UU_nf$.
This is unnecessary, as the notation is consistent
in the following sense: if a function $f: \X^l\to\RR$ of $l<m$ 
arguments is considered a function of $m$ arguments that is constant
in its last $m-l$ arguments, then the right side of
the preceding display is well
defined and is exactly the corresponding $U$-statistic of order
$l$. In particular, $\UU_nf$ is the \emph{empirical distribution}
$\PP_n$ applied to $f$ if $f: \X\to\RR$ depends on only one argument.  

We write $P^n\UU_n f=P^mf$ for the expectation of 
$\UU_nf$ if $X_1,\ldots, X_n$ are distributed according to 
the probability measure $P$, and for the expectation of $f$ under the product measure $P^m$ of $m$
copies of $P$. We also use this operator
notation for the expectations of statistics in general. If the distribution of the
observations is given by  a density $p$, then we use $P$ as the measure
corresponding to $p$, and use the preceding notations likewise.
Finally $\UU_n-P^m$ denotes the centered $U$-statistic empirical measure,
defined by $(\UU_n-P^m)f=\UU_nf-P^mf$, for any integrable function $f$.

We  call $f$ \emph{degenerate} relative to $P$ 
if $\int f(x_1,\ldots,x_m)\,dP(x_i)=0$ for every $i$ and every $(x_j: j\not=i)$,
and we call $f$ \emph{symmetric} if $f(x_1,\ldots, x_m)$ is invariant under
permutation of the arguments $x_1,\ldots, x_m$. 
Given an arbitrary measurable function $f: \X^m\ra\RR$ we can form a
function that is degenerate relative to $P$ by subtracting
the orthogonal projection in $L_2(P^m)$
onto the functions of at most $m-1$ variables.
This degenerate function can be written in the form (e.g.\
\cite{vanderVaart98}, Lemma~11.11)
\begin{align}
\label{EqMakeDegenerate}
(D_Pf)(X_1,\ldots, X_m)
&\!=\!\!\!\!\!\!\sum_{A\subset \{1,\ldots, m\}}\!\!\!\!\!
(-1)^{m-|A|}\E_P\Bigl[f(X_1,\ldots, X_m)\given X_i: i \in A\Bigr],\!\!
\end{align}
where the sum is over all subsets $A$ of $\{1,\ldots, m\}$,
including the empty set. Here the conditional expectation
$\E\bigl[f(X_1,\ldots, X_m)\given X_i: i\in \emptyset\bigr]$
is understood to be the unconditional expectation $\E f(X_1,\ldots, X_m) =P^mf$.
If the function $f$ is symmetric, then so is the function $D_Pf$.

Given two functions $g, h: \X\to\RR$ we write $g\times h$ for
the function $(x,y)\mapsto g(x)h(y)$. More generally, given
$m$ functions $g_1,\ldots, g_m$
we write $g_1\times\cdots\times g_m$ for the tensor product
of these functions. Such product functions are degenerate
iff all functions in the product have mean zero.

A \emph{kernel operator} $K: L_r(\X,\A,\m)\to L_r(\X,\A,\m)$ takes
the form $(Kf)(x)=\int \bar K(x,y)f(y)\,d\m(y)$ for some
measurable function $\bar K: \X^2\to \RR$. We shall
abuse notation in denoting the operator $K$ and the \emph{kernel} $\bar K$ 
with the same symbol: $K=\bar K$. A (weighted) projection onto
a finite-dimensional space is a kernel operator. We discuss such
projections in Section~\ref{SectionProjections}.

The set of measurable functions whose $r$th absolute
power is $\m$-integrable is denoted $L_r(\m)$, with norm $\|\cdot\|_{r,\m}$,
or $\|\cdot\|_r$ if the measure is clear; or also as $L_r(w)$ with norm $\|\cdot\|_{r,w}$ 
if $w$ is a density relative to a given dominating measure. For $r=\infty$
the notation $\|\cdot\|_\infty$ refers to the uniform norm.

\section{Estimating the mean response in missing data models}
\label{SectionMissingData}
In this section we introduce our main example, which will be used as a running example
in the next section. We also summarize the results obtained for this example in the remainder
of the paper.

Suppose that a typical observation is distributed as $X=(YA,A,Z)$, for
$Y$ and $A$ taking values in the two-point set $\{0,1\}$ and
conditionally independent given $Z$. 

This model is standard in biostatistical applications, with $Y$ an  ``outcome''
or ``response variable'', which is observed only if the indicator $A$ takes the value
$1$.  The covariate $Z$ is chosen such that it contains all
information on the dependence between the response and the missingness
indicator $A$, thus making the response \emph{missing at random}. 
Alternatively, we think of $Y$ as a
``counterfactual'' outcome if a treatment were given ($A=1$) and
estimate (half) the treatment effect under the assumption of 
\emph{no unmeasured confounders}. (The results also apply without the ``missing-at-random''
assumption, but with a different interpretation; see Remark~\ref{RemarkNonMAR}.)

The model can be parameterized by the marginal density $f$ of $Z$
(relative to some dominating measure $\n$) and the probabilities
$b(z)=\Pr(Y=1\given Z=z)$ and $a(z)^{-1}=\Pr(A=1\given Z=z)$. (Using
$a$ for the inverse probability simplifies later
formulas.)  Alternatively, the model can be parameterized by the pair $(a,b)$
and the function $g=f/a$, which is the conditional density of $Z$
given $A=1$, up to the norming factor $\Pr(A=1)$.  Thus the density $p$
of an observation $X$ is described by the triplet $(a,b,f)$, or
equivalently the triplet $(a,b,g)$.  For simplicity of notation we
write $p$ instead of $p_{a,b,f}$ or $p_{a,b,g}$, with the implicit
understanding that a generic $p$ corresponds one-to-one to a generic
$(a,b,f)$ or $(a,b,g)$.

We wish to estimate the \emph{mean response} $\E Y=\E b(Z)$, i.e.\ the functional 
$$\chi(p)=\int bf\,d\n=\int ab g\,d\n.$$
Estimators that are $\sqrt n$-consistent and asymptotically
efficient in the semiparametric sense have been 
constructed using a variety of methods (e.g.
\cite{RobinsRotnitzkyRegressionMissing,RotnitzkyRobins95}, or see Section~\ref{SectionMarFirstOrder}), but 
only if $a$ or $b$, or both, parameters are restricted to
sufficiently small regularity classes.  For instance, if the covariate $Z$ ranges
over a compact, convex subset $\Z$ of $\RR^d$, then 
the mentioned papers provide $\sqrt n$-consistent estimators 
under the assumption that $a$ and $b$ belong to H\"older classes
$C^\a(\Z)$ and $C^\b(\Z)$ with $\a$ and $\b$ large enough that 
\begin{equation}
\label{EqEnoughSmoothnessForRootnByLinear}
\frac\a{2\a+d}+\frac\b{2\b+d}\ge \frac12.
\end{equation}
(See e.g.\ Section~2.7.1 in \cite{vdVWellner} for the definition of
H\"older classes.)
For moderate to large dimensions $d$ this is a restrictive
requirement. In the sequel we consider estimation
for arbitrarily small $\a$ and $\b$.

\subsection{Summary of results}
Throughout we assume that the parameters $a$, $b$ and $g$ are contained in
H\"older spaces $C^\a(\Z)$, $C^\b(\Z)$  and $C^\g(\Z)$ of functions on a compact,
convex domain in $\RR^d$. We derive two types of results:
\begin{enumerate}
\item[(a)] In Section~\ref{SectionParametricRate} we show that a $\sqrt
n$-rate is attainable by using a higher order influence function (of
order determined by $\g$) as long as
\begin{equation}
\label{EqEnoughRootn}
\frac{\a+\b}2\ge \frac d4.
\end{equation}
This condition is strictly weaker than the condition (\ref{EqEnoughSmoothnessForRootnByLinear})
under which the linear estimator attains a $\sqrt n$-rate.
Thus even in the $\sqrt n$-situation higher order estimating
equations may yield estimators that are applicable in a
wider range of models. For instance, in the case that $\a=\b$
the cut-off (\ref{EqEnoughSmoothnessForRootnByLinear}) arises 
for $\a=\b\ge d/2$, whereas (\ref{EqEnoughRootn}) reduces to 
$\a=\b\ge d/4$.
\item[(b)] We consider minimax estimation in the case $(\a+\b)/2< d/4$, when 
the rate becomes slower than $1/\sqrt n$. 
It is shown in \cite{RobinsetalMinimax}
that even if $g=f/a$ were known, then the minimax rate for
$a$ and $b$ ranging over balls in the H\"older classes $C^\a(\Z)$ and $C^\b(\Z)$ cannot be faster
than $n^{-(2\a+2\b)/(2\a+2\b+d)}$. In Section~\ref{SectionMinimaxRate}
we show that this rate is attainable if $g$ is known, and also if 
$g$ is unknown, but is a-priori known to belong
to a H\"older class $C^\g(\Z)$ for sufficiently large
$\g$, as given by (\ref{EqfSmoothEnough}). 
(Heuristic arguments, not discussed in this paper,
appear to indicate that for smaller $\g$ the minimax rate is slower
than $n^{-(2\a+2\b)/(2\a+2\b+d)}$.)
\end{enumerate}

We start by discussing the first and second 
order estimators in Sections~\ref{SectionMarFirstOrder} and~\ref{SectionMarSecondOrder},
where the first is merely a summary of well known facts, but the second already 
contains some key elements of the new approach of the present paper. The preceding results
(a) and (b) are next obtained in Sections~\ref{SectionParametricRate} ($\sqrt n$-rate
if $(\a+\b)/2\ge d/4$)
and~\ref{SectionMinimaxRate} (slower rate if $(\a+\b)/2<d/4$),
using the higher-order influence functions
of an approximate functional, which is defined 
in the intermediate Section~\ref{SectionMARApproximateFunctional}.
In the next section we discuss the general heuristics of our approach.

\begin{assumption}
\label{Assumption}
We assume throughout that the functions $1/a,b,g$ and their preliminary
estimators $1/\hat a,\hat b,\hat g$ are bounded away from their extremes:
0 and 1 for the first two, and 0 and $\infty$ for the third.
\end{assumption}

\begin{remark}
\label{RemarkNonMAR}
The assumption that the responses are ``missing at random (MAR)'' is used to identify the
mean response functional. Without this assumption the results of the paper are
still valid, but concern the functional $\int b_1(z)f(z)\,dz$, in which $b_1(z)=\E(Y\given A=1,Z=z)$ has taken
the place of $b(z)=\E(Y\given Z=z)$, two functions that are identical under MAR. This follows from
the fact that the likelihoods of $X=(YA, A, Z)$ without or with assuming MAR take exactly the 
same form, as given in (\ref{EqMarLikelihood}), but with $b$ replaced by $b_1$.
After this replacement all results  go through. However, the functional $\int b_1(z)f(z)\,dz$
has the interpretation of the mean response only when MAR holds.
\end{remark}

\section{General heuristics}
\label{SectionHeuristics}
Our basic estimator has the form (\ref{EqLinearEstimator})
except that we replace the linear term by  a general $U$-statistic.
Given measurable functions $\chi_{p}: \X^m\to\RR$, for a fixed 
\emph{order} $m$, we consider estimators $\hat \chi_n$ of $\chi(p)$ of the type
\begin{equation}
\label{EqEstimator}
\hat\chi_n= \chi(\hat p_n)+\UU_n\chi_{\hat p_n}.
\end{equation}
The initial estimators $\hat p_n$ are thought to have a certain
(optimal) convergence rate $d(\hat p_n,p)\ra 0$, but need not possess
(further) special properties. Throughout we
shall treat these estimators as being based on an
independent sample of observations, so that $\hat p_n$ and $\UU_n$ in (\ref{EqEstimator}) are independent.
This takes away technical complications, and allows us to focus on rates
of estimation in full generality. (A simple way to avoid the
resulting asymmetry would be to swap the two samples, calculate the estimator a
second time and take the average.)

\subsection{Influence functions}
\label{SubSectionInfluenceFunctions}
The key is to find suitable ``influence functions'' $\chi_{p}$. A decomposition of type
(\ref{EqBiasVarianceLinearEstimator}) for the estimator (\ref{EqEstimator})
 yields 
\begin{equation}
\label{EqBiasVarianceUEstimator}
\hat\chi_n-\chi(p)=\bigl[\chi(\hat p_n)-\chi(p)+
P^m \chi_{\hat p_n}\bigr]+(\UU_n-P^m) \chi_{\hat p_n}.
\end{equation}
This suggests to construct the influence functions such that $-P^m\chi_{\hat p_n}$ represents the
first $m$ terms of the Taylor expansion of  $\chi(\hat p_n)-\chi(p)$. We shall  translate this
requirement into a manageable form, and next work it out in detail for the missing data problem.

First the requirement implies that
the influence function used in (\ref{EqEstimator}) must be unbiased:
\begin{equation}
\label{EqUnbiasedIF}
P^m\chi_p=0.
\end{equation}
Next, to operationalize a ``Taylor
expansion'' on the (infinite-dimensional) ``manifold'' $\P$
we employ ``smooth'' submodels $t\mapsto p_t$. These are defined as maps from a
neighbourhood of $0\in\RR$ to $\P$ that pass through
$p$ at $t=0$ (i.e.\ $p_0=p$) such that the derivatives in the following exist. For a
large model there will be many such submodels, approaching $p$ from various ``directions''.
Given a collection of submodels we determine $\chi_p$ such that, for each submodel $t\mapsto p_t$,
$$\frac{d^j}{dt^j}_{|t=0}\chi(p_t) 
=-\frac{d^j}{dt^j}_{|t=0} P^m\chi_{p_t},\qquad j=1,\ldots,m.$$
The subscript $|t=0$ on the differential quotients means ``derivative evaluated at $t=0$'',
i.e.\ at $p=p_0$. A slight strengthening is to impose this condition
``everywhere'' on the path, i.e.\
the $j$th derivative of $t\mapsto\chi(p_t)$ at $t$ is the
$j$th derivative of $h\mapsto - P_t^m\chi_{p_{t+h}}$ at $h=0$, for every $t$.
(Here $P_t$ is the measure corresponding to the density $p_t$ and
$P_t^mf$ the expectation of a function $f$ under the $m$-fold product of these measures.)
If the map $(s,t)\mapsto P_s^m\chi_{p_t}$ is smooth, then
the latter implies (cf.\ Lemma~\ref{LemmaTaylorExpansion}
applied with $\chi=f$ and $g(s,t)=-P_t^m \chi_{p_s}$)
\begin{equation}
\label{EqHODerivatives}
\frac{d^j}{dt^j}_{|t=0}\chi(p_t) 
=\frac{d^j}{dt^j}_{|t=0} P_t^m\chi_{p},\qquad j=1,\ldots,m.
\end{equation}
Relative to the previous formula the subscript $t$ on the right hand side has changed places,
and the negative  sign has disappeared. This is similar to the ``Bartlett equalities'' familiar from manipulating
expectations of scores and their higher derivatives. We take this equation together with unbiasedness
as the defining property. Thus a measurable function $\chi_p: \X^m\to \RR$  is said to be
an $m$th order \emph{influence function} at $p$ of the functional $p\mapsto \chi(p)$ 
relative to a given collection of one-dimensional
submodels $t\mapsto p_t$ (with $p_0=p$) if it satisfies (\ref{EqUnbiasedIF}) and (\ref{EqHODerivatives}),
for every submodel under consideration. 

Equation (\ref{EqHODerivatives}) implies a Taylor expansion of $t\mapsto\chi(p_t)$ at $t=0$ of order $m$,
but in addition requires that the derivatives of this map can be represented as \emph{expectations}
involving a function $\chi_p$. The latter is made operational by requiring the derivatives to be identical
to those of the map $t\mapsto P_t^m\chi_p$, which automatically have the desired representation.
The representation as an expectation is essential for the construction of estimators. For exploiting
derivatives up to the $m$th order, groups of $m$ observations can be used to match
the expectation $P^m$; this leads to $U$-statistics of order $m$. 

It is also essential that the expectation is relative to the law of the observations $X_1,\ldots, X_n$.
In a structured model, such as the missing data problem, 
the  law $P_\eta$ of the observations depends on a parameter $\eta$ and the functional of interest
is a quantity $\psi(\eta)$ defined in term of $\eta$. Then the representation
requires to represent the derivative of the map $\eta\mapsto\psi(\eta)$
as an expectation relative to $P_\eta$. An expansion of just $\eta\mapsto \psi(\eta)$ without reference
to the data distribution is not sufficient.
Expressing the derivates in $P_\eta$ implicitly utilises the  inverse map $P_\eta\mapsto \eta$, but
by directly defining the influence function by (\ref{EqHODerivatives}) we sidestep 
an expansion of $\eta\mapsto\psi(\eta)$ and explicit inversion of the latter map.

We allow that there may be more than one influence function.
In particular, we do not require $\chi_p$ in (\ref{EqHODerivatives}) 
to be symmetric in its arguments, although a given influence
function can always be symmetrized without loss of generality. Furthermore, as the 
collection of paths $t\mapsto p_t$ is restricted by the model, which may be smaller
than the set of all possible densities on the sample space, certain projections of
an influence function may also be influence functions.

\begin{example}
[Classical $U$-statistic]\normalfont
The mean functional  $\chi(p)=\E_p\UU_nf=P^kf$ of a $k$th order $U$-statistic 
has $m$th order influence function given by $\chi_p(x_1,\ldots,x_m)=f(x_1,\ldots,x_k)-P^kf$, for every
$m\ge k$. Alternatively,  the symmetrized version $\UU_mf-P^kf$ of this function is also an influence function.
This example connects to classical $U$-statistic theory, and may serve to gain some insight
in the definition, but our interest in influence functions will go in a different direction.

In the preceding claim we did not specify the set of paths $t\mapsto p_t$. In fact the claim is true
for the nonparametric model and all reasonable paths. 
The claim follows trivially from the fact that $t\mapsto\chi(p_t)=P_t^kf$ has
the same derivatives as $t\mapsto P_t^m\chi_p=P_t^mf-P^kf=P_t^k f-P^kf$, where
in the last equality we use that $m\ge k$. (The $j$th derivative for $j>k$ vanishes.)

For $1\le m<k$ one can   verify, with more effort,  that the orthogonal projection in $L_2(P^k)$ 
of $f$ on the subspace of functions of $m$ variables is an influence function.
\end{example}

\begin{example}
[Missing data, paths]\normalfont
The missing data model introduced in Section~\ref{SectionMissingData} is parameterized
by the parameter triplet $(a,b,f)$. 
The likelihood of a typical observation $X=(YA,A,Z)$ can be seen to take the form
\begin{equation}
\label{EqMarLikelihood}
p_{a,b,f}(X)=f(Z) \Bigl(\frac 1{a(Z)}b(Z)^Y\bigl(1-b(Z)\bigr)^{1-Y}\Bigr)^A \Bigl(1-\frac 1{a(Z)}\Bigr)^{1-A}.
\end{equation}
Submodels are naturally constructed as $t\mapsto p_{a_t, b_t, f_t}$, for given curves 
$t\mapsto a_t$, $t\mapsto b_t$ and $t\mapsto f_t$ in the respective parameter spaces.

In view of Assumption~\ref{Assumption} paths of the form
$a_t=a+t\asc$ and $b_t=b+t\bsc$, for given bounded, measurable functions
$\asc, \bsc: \Z\to\RR$ are valid curves in the parameter space, at least for $t$
in a neighbourhood of 0. We may restrict the perturbations $\asc$ and $\bsc$ to
be sufficiently smooth to ensure that these paths also belong to the appropriate 
H\"older spaces. 

It is convenient to define the perturbation of the marginal density slightly
differently in the form $f_t=f(1+t\fsc)$.  For a given bounded function $\fsc: \Z\to\RR$ with 
$\int \fsc f\,d\n=0$, and sufficiently small $|t|$, each $f_t$ is indeed a probability density. The advantage
of defining the perturbation by $f\fsc$ instead of $\fsc$ is simply that in the present form $\fsc=d/dt_{|t=0}\log f_t$
can be interpreted as the score function of the model $t\mapsto f_t$.

These paths are usually enough to identify influence functions. By slightly
changing the definitions one might also allow non-bounded functions as ``directions''
of the perturbations.
\end{example}

\subsection{Relation to semiparametric theory and tangent spaces}
In semiparametric theory (e.g.\ \cite{BKRW, Pfanzagl82, vdV88,vanderVaart91})
influence functions are described through inner products with score functions. We do not
follow this route here, but make the connection in this section. Scores give a way of 
rewriting (\ref{EqHODerivatives}), which will be useful mainly
for first order influence functions.

For a sufficiently regular submodel $t\mapsto p_t$ equation (\ref{EqHODerivatives}) 
for $m=1$ can be written in the form
\begin{equation}
\label{EqFirstOrderIF}
\frac{d}{dt}_{|t=0}\chi(p_t)=\frac{d}{dt}_{|t=0} P_t\chi_p=P(\chi_p g),
\end{equation}
where $g=(d/dt)_{t=0}p_t/p$ is the \emph{score function} of the
model $t\mapsto p_t$ at $t=0$. 
A function $\chi_p$ satisfying (\ref{EqFirstOrderIF}) is exactly what
is called an \emph{influence function} in semiparametric
theory. The linear span of all scores attached some submodel $t\mapsto p_t$ is called
the \emph{tangent space} of the model at $p$ and an influence function is an element of
$L_2(p)$ whose inner products with the elements
of the tangent space represent the derivative of the functional in the sense  of (\ref{EqFirstOrderIF})
(\cite{vanderVaart98}, page~363, or \cite{BKRW, Pfanzagl82, vdV88,vanderVaart91}).

\begin{example}
[Missing data, score functions]\normalfont
To obtain the score functions at $t=0$ of the one-dimensional submodels 
$t\mapsto p_t:=p_{a_t,b_t,f_t}$ induced by paths of the form  $a_t=a+t\asc$, $b_t=b+t\bsc$,  and $f_t=f(1+t\fsc)$,
for given measurable functions $\asc, \bsc, \fsc: \Z\to\RR$
(where $\int \fsc f\,d\n=0$), we substitute these paths in the right side of equation
(\ref{EqMarLikelihood}) for the likelihood, take the logarithm, and differentiate
at $t=0$. If we insert the perturbations for the three parameters separately, keeping the other
parameters fixed, we obtain what could be called ``partial score functions'' given by
\begin{align*}
B_p^a\asc(X)&=-\frac{A a(Z)-1}{a(Z)(a-1)(Z)}\asc(Z),
\qquad  a-\text{score},\\
B_p^b\bsc(X)&=\frac{A\bigl(Y-b(Z)\bigr)}{b(Z)(1-b)(Z)}\bsc(Z),
\qquad\  b-\text{score}, \\ 
B_p^f\fsc(X)&=\fsc(Z),
\qqqquad\qqqquad\ \ \ f-\text{score}.
\end{align*}
The scores are deliberately written in a form suggesting operators $B_p^a, B_p^b, B_p^f$ working
on the three directions $\asc, \bsc,\fsc$. These are called \emph{score operators} 
in semiparametric theory, and their direct sum is the overall score operator, which we write as $B_p$.
Thus $B_p(\asc,\bsc,\fsc)(X)$ is defined as the sum of the three left sides of the preceding equation.

We claim that the first-order influence function of the functional
$\chi: p_{a,b,f}\mapsto \int bf\,d\nu$ is given by 
\begin{equation}
\label{EqMARFOIF}
\chi_p^{(1)}(X)=A a(Z)\bigl(Y-b(Z)\bigr)+b(Z)-\chi(p).
\end{equation}
To prove this well-known fact, it suffices to verify that this function satisfies,
for every path $t\mapsto p_t$ as described previously,
$$\frac d{dt}_{|t=0}\chi(p_t)
=\E_p \bigl[\chi_p^{(1)}(X)\,B_p(\asc,\bsc,\fsc)(X)\bigr].$$
This follows by straightforward calculations, where it suffices to verify the equation
for each of the three perturbations separately. For instance, for a perturbation of only the parameter
$a$, the left side of the display is clearly zero, as the functional does not depend on $a$.
The right side with $\bsc=\fsc=0$ reduces to $\E_p \bigl[\chi_p^{(1)}(X) B_p^a\asc(X)\bigr]$,
which can be seen to be zero from the fact that $Aa(Z)-1$ and $Y-b(Z)$ are uncorrelated given $Z$.
The validity of the display for the two other types of scores can be verified similarly.

The advantage of choosing $a$ an inverse probability
is clear from the form of the (random part of the) influence function (\ref{EqMARFOIF}),
which is bilinear in $(a,b)$.

Computing (approximate) higher order influence functions for this model is a main achievement of
this paper. Expressions are given later on.
\end{example}

For  $m>1$  equation (\ref{EqHODerivatives})
can be expanded similarly in terms of inner products
of the influence function with score functions, but
``higher-order score functions'' arise next to ordinary score
functions. Here we do not follow this route, but have defined an higher
order influence function through (\ref{EqHODerivatives}), and
leave the alternative route to other papers. Suitable higher-order tangent spaces are discussed 
in \cite{RobinsetalFreedman} (also see \cite{vdVStatScience}), using score functions as defined in
\cite{watermanlindsay}. A discussion of \emph{second order} scores and tangent spaces can be
found in \cite{RobinsetalMetrika}. Second order tangent spaces
are also discussed in \cite{Pfanzagl85}, from a different point
of view of and with the  purpose of defining \emph{higher order efficiency}
of estimators. Higher-order efficient estimators attain the first order efficiency bound
(the ``asymptotic Cram\'er-Rao bound'') and also optimize certain lower order terms
in their distribution or risk. In the present paper we are interested in \emph{first order efficiency},
measured mostly by the convergence rate, which in the most interesting cases is slower than 
$\sqrt n$, and not in refinements of the first order behaviour.

\subsection{Computing the influence function}
\label{SectionComputingIF}
Equation (\ref{EqHODerivatives}) involves multiple derivatives and many paths
and is not easy to solve for $\chi_p$. For actual computation of an influence function
it is usually easier to derive higher order influence functions as influence functions
of lower order ones. 

To describe this operation, we need to decompose the influence function $\chi_p$,
or rather its symmetrized version in degenerate functions.
Any $m$th order, zero-mean $U$-statistic can be decomposed as the
sum of $m$ degenerate $U$-statistics of orders $1, 2,\ldots, m$,
by way of its Hoeffding decomposition. In the present situation
we can write
$$\UU_n\chi_{p}=\UU_n \chi^{(1)}_p
+\thalf \UU_n \chi^{(2)}_p
+\cdots+\frac1{m!}\UU_n \chi^{(m)}_p,$$
where $\chi^{(j)}_p: \X^j\to\RR$ is a degenerate kernel of
$j$ arguments, defined uniquely as a projection of $\chi_p$ (cf.\ \cite{vanZwet} and (\ref{EqMakeDegenerate})). 
Since $\chi_p$ is a function of $m$ arguments, for $m=n$ the left side evaluates to the symmetrization of the 
function $\chi_p$, and it is equal to $\chi_p$ if $\chi_p$ is already permutation symmetric in its arguments.
The functions on the right side are similarly symmetric, and the equation can be read
as a decomposition of the symmetrized version of $\chi_p$ into symmetrizations of certain degenerate
functions $\chi_p^{(j)}$. 
Suitable (symmetric) functions $\chi_p^{(j)}$ in this decomposition can be found by the following algorithm:
\begin{enumerate}
\item[{[1]}] Let $x_1\mapsto\bar\chi_p^{(1)}(x_1)$ be a  first order influence function
of the functional $p\mapsto \chi(p)$.
\item[{[2]}] Let $x_j\mapsto\bar \chi_p^{(j)}(x_1,\ldots, x_j)$ 
be a  first order influence function
of the functional $p\mapsto\bar\chi_p^{(j-1)}(x_1,\ldots, x_{j-1})$,
for each $x_1,\ldots, x_{j-1}$, and $j=2,\ldots, m$.
\item[{[3]}] Let $\chi_p^{(j)}=D_P\bar\chi_p^{(j)}$ be 
the degenerate part of $\bar\chi_p^{(j)}$ relative to $P$, as
defined in (\ref{EqMakeDegenerate}).
\end{enumerate}
See Lemma~\ref{LemmaRecursiveIF} for a proof.
Thus higher order influence functions are constructed
as first order influence functions of influence functions. 
Somewhat abusing language we shall refer to the function
$\chi_p^{(j)}$ also as a ``$j$th order influence function''.
The overall order $m$ will be fixed at a suitable value; for simplicity
we do not let this show up in the notation $\chi_p$.

The starting influence function $\bar\chi_p^{(1)}$ in step [1]
may be any first order influence function (thus satisfying (\ref{EqHODerivatives}) for $m=1$,
or alternatively a function $\chi_p$ that satisfies (\ref{EqFirstOrderIF}) for every score $g$); it does not
have to possess mean zero, or be an element of the first order tangent space. 
A similar remark applies to the (first order) influence
functions found in step [2]. It is only in step [3] that we
make the influence functions degenerate. 

\begin{example}
[Missing data, higher order scores]\normalfont
The second order score function for the missing data problem is computed 
as a derivative of the first order score function (\ref{EqMARFOIF}) in Section~\ref{SectionMarSecondOrder}.
As will be explained momentarily the result (\ref{EqMARAdhocSecondOrderIF}) is actually
only a partial second order score function.

Higher order score functions are computed in Sections~\ref{SectionParametricRate}
and~\ref{SectionMinimaxRate}.
\end{example}

\subsection{Bias-variance trade-off}
\label{SectionBiasVarianceTradeOff}
Because it is centered, the ``variance part'' in (\ref{EqBiasVarianceUEstimator}),
the variable $(\UU_n-P^m)\chi_{\hat p_n}$,
should not change noticeably if we replace $\hat p_n$ by $p$, and be of the
same order as $(\UU_n-P^m)\chi_p$.  For a fixed square-integrable
function $\chi_p$ the latter centered $U$-statistic is well known to
be of order $O_P(n^{-1/2})$, and asymptotically normal if suitably
scaled.  A completely successful representation of the ``bias''
$R_n=\chi(\hat p_n)-\chi(p)+P^m\chi_{\hat p_n}$ in
(\ref{EqBiasVarianceUEstimator}) would lead to an error
$R_n=O_P\bigl(d(\hat p_n,p)^{m+1}\bigr)$, which becomes smaller with
increasing order $m$. Were this achievable for any $m$, then a $\sqrt
n$-estimator would exist no matter how slow the convergence rate
$d(\hat p_n,p)$ of the initial estimator. Not surprisingly, in many
cases of interest this ideal situation is not real. This is due to the
non-existence of influence functions that can exactly represent the Taylor
expansion of $\chi(\hat p_n)-\chi(p)$. 

In general, we have to content ourselves with a partial representation.
Next to a first bias in the form of the remainder term $R_n$ of order $O_P\bigl(d(\hat p_n,p)^{m+1}\bigr)$,
we then also incur a ``representation bias''. The latter
bias can be made arbitrarily small by choice of the 
influence function, but only at the cost of increasing
its variance. We thus obtain
a trade-off between a variance and two biases. This typically 
results in a variance that is larger than $1/n$, and
a rate of convergence that is slower than $1/\sqrt n$, although sometimes
a nontrivial bias correction is possible without increasing the variance.

\begin{example}
[Missing data, variance and bias terms]\normalfont
The missing data problem is parameterized by the triple $(a,b,g)$ and hence 
the preliminary estimator $\hat p$ is constructed from estimates $\hat a$ and $\hat b$ and $\hat g$
of these parameters.

The remainder bias $R_n$ of the estimator for $m=1$ is given in (\ref{EqMARBias}). It is bounded by
$\|\hat a-a\|_2\,\|\hat b-b\|_2$ and hence is quadratic in the preliminary estimator, as expected.
There is no representation bias at this order. The variance of the linear estimator is of order $1/n$.
If the preliminary estimators can be constructed so that the product $\|\hat a-a\|_2\,\|\hat b-b\|_2$
is of lower or equal order than $1/n$, then the estimator is rate-optimal. Otherwise a higher order
estimator is preferable.

The bias and variance terms of the estimator for $m=2$ are given in Theorem~\ref{TheoremSecondOrderEstimator}.
The remainder bias $R_n$ is of the order $\|\hat a-a\|_r\,\|\hat b -b\|_r\,\|\hat g-g\|_r$, cubic in the preliminary estimator,
while the representation bias is of the order the product of the remainders after
projecting $\hat a-a$ and $\hat b-b$ onto a linear space chosen by the statistician.
The dimension $k$ of this space determines the variance of the estimator, adding a contribution of the order $k/n^2$.
Following the statement of the theorem it is shown how the variance can be traded off versus
the two biases. It turns concluded that in case the remainder bias of order 
$\|\hat a-a\|_r\,\|\hat b -b\|_r\,\|\hat g-g\|_r$ actively determines the outcome of this trade-off,
then an estimator of higher order is preferable.

For higher orders $m>2$ the remainder bias decreases to $\|\hat a-a\|_r\,\|\hat b -b\|_r\,\|\hat g-g\|_r^{m-1}$,
but the representation bias becomes increasingly complex. A discussion is deferred to 
Sections~\ref{SectionParametricRate} and~\ref{SectionMinimaxRate}.
\end{example}

\subsection{Approximate functionals}
\label{SectionApproximateFunctionals}
An attractive method to find
approximating influence functions is to compute exact
influence functions for an approximate functional. 
Because smooth functionals on finite-dimensional models
typically possess influence functions to any order,
projections on finite-dimensional models may deliver such
approximations.

A simple approximation would be 
$\chi(\tilde p)$ for a given
map $p\mapsto \tilde p$ mapping the model $\P$ onto
a suitable ``smaller'' model $\tilde\P$ (typically a submodel 
$\tilde\P\subset \P$).
A closer approximation can be obtained by also including 
a derivative term. Consider the functional
$\tilde\chi: \P\to \RR$ defined by, for a given map $p\mapsto \tilde p$,
\begin{equation}
\label{EqApproximatingFunctional}
\tilde\chi(p)=\chi(\tilde p)+P \chi_{\tilde p}^{(1)}.
\end{equation}
(A complete notation would be $\tilde p(p)$; the right hand side depends on 
$p$ at three places.)
By the definition of an influence function 
the term $-P \chi_{\tilde p}^{(1)}$
acts as the first order Taylor expansion of $\chi(\tilde p)-\chi(p)$.
Consequently, we may expect that
\begin{equation}
\label{EqErrorApproximateFunctional}
\bigl|\tilde\chi(p)-\chi(p)\bigr|=O\bigl(d(\tilde p,p)^2\bigr).
\end{equation}
This ought to be true for any ``projection'' $p\mapsto \tilde p$. 
If we choose the projection such that, for any path $t\mapsto p_t$,
\begin{equation}
\label{EqCleverApproximatingFunctional}
\frac d{dt}_{|t=0} \Bigl(\chi(\tilde p_t)
+P_0\chi_{\tilde p_t}^{(1)} \Bigr)=0,
\end{equation}
then  the functional $p\mapsto \tilde\chi(p)$ will be locally
(around $p_0$) equivalent to the functional 
$p\mapsto \chi\bigl(\tilde p_0\bigr)+P \chi_{\tilde p_0}^{(1)}$
(which depends on $p$ in only one place, $p_0$ being fixed) in the sense that
the first order influence functions are the same.
The first order influence function of the latter (linear) functional
at $p_0$ is equal to $\chi_{\tilde p_0}^{(1)}$, and hence
for a projection satisfying (\ref{EqCleverApproximatingFunctional})
the first order influence function
of the functional $p\mapsto \tilde\chi(p)$ will be
\begin{equation}
\label{EqIFApproximateFunctional}
\tilde\chi_p^{(1)}=\chi_{\tilde p}^{(1)}.
\end{equation}
In words, this means that the influence function of the approximating
functional $\tilde\chi$ satisfying (\ref{EqApproximatingFunctional})
and (\ref{EqCleverApproximatingFunctional}) at $p$ is obtained by 
substituting $\tilde p$ for $p$ in the influence function of the
original functional. 

This is relevant when obtaining  higher order influence functions. 
As these are recursive derivatives of the first order influence
function (see [1]--[3] in Section~\ref{SubSectionInfluenceFunctions}), 
the preceding display shows that we must compute influence functions of 
$$p\mapsto \chi_{\tilde p}^{(1)}(x),$$
i.e.\ we ``differentiate on the  model $\tilde\P$''. 
If the latter model is sufficiently
simple, for instance finite-dimensional, then exact
higher order influence functions of the functional
$p\mapsto \tilde\chi(p)$ ought to exist. We can use these as
approximate influence functions of $p\mapsto \chi(p)$.



\begin{example}
[Missing data, approximate functional]\normalfont
In the missing data problem the density $p$ corresponds one-to-one to
a triplet of parameters $(a,b,g)$ and hence the projection $p\mapsto \tilde p$
can be described as projections of the parameters. We leave $g$ invariant,
and map $a$ and $b$ onto a finite-dimensional affine space, as follows.

We fix a given finite-dimensional subspace $L$ of $L_2(\n)$ that has
good approximation properties for our model classes, the H\"older spaces
$C^\a(\Z)$ and $C^\b(\Z)$, for instance constructed from a  wavelet basis.
For fixed functions $\hat a,\aa, \hat b, \bb: \Z\to \RR^+$ 
we now let $\tilde a$ and $\tilde b$ be the functions such that 
$(\tilde a-\hat a)/\aa$ and $(\tilde b-\hat b)/\bb$ are the orthogonal projections 
of the functions $(a-\hat a)/\aa$ and $(b-\hat b)/\bb$ onto $L$ in $L_2(\aa\bb g)$.
Finally we define the map $p\mapsto \tilde p$ by  correspondence to 
$(a,b,g)\mapsto (\tilde a,\tilde b, g)$.

In Section~\ref{SectionMARApproximateFunctional} we shall see that the
orthogonal projections follow (\ref{EqCleverApproximatingFunctional}), while 
the concrete form of (\ref{EqErrorApproximateFunctional}) is valid in that
$$\Bigl|\int abg\,d\nu-\int \tilde a\tilde b g\,d\nu\Bigr|^2
\le\int \Bigl|\frac{a-\tilde a}\aa\Bigr|^2\,\aa\bb g\,d\nu\,
\int\Bigl|\frac{b-\tilde b}\bb\Bigr|^2\,\aa\bb g\,d\nu.$$
This approximation error can be made arbitrarily small by making the space $L$ large enough.
In that case the approximate functional $p\mapsto \int \tilde a\tilde b g\,d\nu$ 
is close to the parameter of interest, and we may focus instead on estimating
this functional. The advantage is that by construction this depends only on finitely many unknowns,
e.g.\ the coefficients of $(\tilde a-\hat a)/\aa$ and $(\tilde b-\hat b)/\bb$ in a basis
of $L$. Higher order influence functions exist to any order.

The bias-variance trade-off of Section~\ref{SectionBiasVarianceTradeOff} arises
as the approximation error must be traded off against the ``variance of estimating the coefficients''
as well as against the remainder of using an $m$th order estimator.
\end{example}

\section{First order estimator}
\label{SectionMarFirstOrder}
The first order estimator (\ref{EqLinearEstimator}) is well studied for the missing data problem.
The first order influence function is given in (\ref{EqMARFOIF}), where $\chi_p=\chi_p^{(1)}$.
As it depends on the parameter $(a,b,f)$ only through $a$ and $b$, preliminary
estimators $\hat a $ and $\hat b$ suffice.

The ``first order bias'' of this estimator, the first term in (\ref{EqBiasVarianceLinearEstimator}),
can explicitly be computed as
\begin{align}
\label{EqMARBias}
\chi(\hat p)-\chi(p) + P\chi_{\hat p}^{(1)}
&=\E_p\bigl[(A\hat a(Z)-1)(Y-\hat b(Z))+\hat b(Z)\bigr]-\int bf\,d\nu\nonumber\\
&=-\int (\hat a-a)(\hat b-b)\,g\,d\n.
\end{align}
In agreement with the heuristics given in 
Sections~\ref{SectionIntroduction} and~\ref{SectionHeuristics} 
this bias is quadratic in the errors of the initial estimator. 

Actually, the form of the bias term is special in that square estimation errors $(\hat a-a)^2$ and $(\hat b-b)^2$ of
the two initial estimators $\hat a$ and $\hat b$ do not arise, but
only the product $(\hat a-a)(\hat b-b)$ of their errors. This property, termed
``double robustness'' in \cite{vdLaanRobins}, makes that for first order inference it suffices
that one of the two parameters be estimated well. A
prior assumption that the parameters $a$ and $b$ are $\a$ and $\b$ regular,
respectively, would allow estimation errors of the orders
 $n^{-\a/(2\a+d)}$ and 
$n^{-\b/(2\b+d)}$. If the product of these rates is $O(n^{-1/2})$,
then the bias term matches the variance. This leads
to the (unnecessarily restrictive) condition (\ref{EqEnoughSmoothnessForRootnByLinear}).

If the preliminary estimators $\hat a$ and $\hat b$ are 
solely selected for having small errors $\|\hat a-a\|$ and $\|\hat b-b\|$
(e.g.\ minimax in the $L_2$-norm), 
then it is hard to see why (\ref{EqMARBias})
would be small unless the product 
$\|\hat a-a\|\|\hat b-b\|$ of the errors is small. Special estimators
might exploit that the bias is an integral, in which cancellation of errors could occur.
As we do not wish to use special estimators,
our approach will be to replace the linear estimating
equation by a higher order one, leading to an analogue of (\ref{EqMARBias})
that is a cubic or higher order polynomial of the estimation errors.

As noted the marginal density $f$ (or $g$) does not enter into
the first order influence function (\ref{EqMARFOIF}). Even though the functional
depends on $f$ (or $g$), a rate on the initial estimator
of this function is not needed for the construction of the
first order estimator. This will be different
at higher orders. 

\section{Second order estimator}
\label{SectionMarSecondOrder}
In this section we derive a second order influence function for the
missing data problem,
and analyze the risk of the corresponding estimator. This estimator
is minimax if $(\a+\b)/2\ge d/4$ and
\begin{equation}
\label{EqCutoffSecondOrder}
\frac{\g}{2\g+d}\ge \frac12\wedge \frac{2\a+2\b}{d+2\a+2\b}  -\frac{\a}{2\a+d}-\frac{\b}{2\b+d}.
\end{equation}
In the other case, higher order estimators have smaller risk, as
shown in Sections~\ref{SectionParametricRate}-\ref{SectionMinimaxRate}.
However, it is worth while to treat the second order estimator
separately, as its construction exemplifies essential elements, 
without involving technicalities attached to the higher order estimators.

To find a second order influence
function, we follow the strategy [1]--[3] of 
Section~\ref{SubSectionInfluenceFunctions}, 
and try and find a function $\chi_p^{(2)}:\X^2\to\RR$ such that,
for every $x_1=(y_1a_1,a_1,z_1)$, and all directions $\asc, \bsc,\fsc$,
$$\frac d{dt}_{|t=0}\Bigl[\chi_{p_t}^{(1)}(x_1)+\chi(p_t)\Bigr]
=\E_p \chi_p^{(2)}(x_1,X_2)\, B_p(\asc,\bsc,\fsc)(X_2).$$
Here the expectation $\E_p$ on the right side
 is relative to the variable $X_2$ only, with $x_1$ fixed. This
equation expresses that $x_2\mapsto \chi_p^{(2)}(x_1,x_2)$ is a first order
influence function of $p\mapsto \chi_p^{(1)}(x_1)+\chi(p)$, for fixed $x_1$.
On the left side we added the ``constant''
$\chi(p_t)$ to the first order influence function
(giving another first order influence function) to facilitate 
the computations. This is justified as the strategy [1]--[3] works with
any influence function. 
In view of (\ref{EqMARFOIF}) and the definitions of the
paths $t\mapsto a+t\asc$, $t\mapsto b+t\bsc$ and $t\mapsto f(1+t\fsc)$,
this leads to the equation
\begin{align}
&a_1\bigl(y_1-b(z_1)\bigr)\asc(z_1)-\bigl(a_1a(z_1)-1\bigr)\bsc(z_1)\nonumber\\
&\hskip5cm=\E_p \chi_p^{(2)}(x_1,X_2)
\, B_p(\asc,\bsc,\fsc)(X_2).\label{EqSecondOrderIP}
\end{align}
Unfortunately, no function $\chi_p^{(2)}$ that solves
this equation for every $(\asc,\bsc,\fsc)$ exists. To see this note that
for the special triplets with $\bsc=\fsc=0$ the requirement can be written in
the form
$$\asc(z_1)=\E_p\left[\frac{\chi_p^{(2)}(x_1,X_2)}{a_1\bigl(y_1-b(z_1)\bigr)}
\frac{1-A_2 a(Z_2)}{a(Z_2)(a-1)(Z_2)}\right]\asc(Z_2).$$
The right side of the equation can be written as 
$\int K(z_1,z_2)\asc(z_2)\,dF(z_2)$, for $K(z_1,Z_2)$ 
the conditional expectation of the function in square brackets given $Z_2$.
Thus it is the image of $\asc$ under the kernel operator with kernel $K$.
If the equation were true for any $\asc$, then this kernel operator would work 
as the identity operator. However, on infinite-dimensional domains the identity operator
is not given by a kernel. (Its kernel would be
a ``Dirac function on the diagonal''.) 

Therefore, we have to be satisfied with an influence function that gives
a partial representation only. In particular, a projection
onto a finite-dimensional linear space possesses a kernel,
and acts as the identity on this linear space. A ``large'' linear
space gives representation in ``many'' directions. By reducing
the expectation in (\ref{EqSecondOrderIP}) to an integral
relative to the marginal distribution of $Z_2$, we can use
an orthogonal projection $\Pi_p: L_2(g)\to L_2(g)$ onto a subspace 
$L$ of $L_2(g)$. Writing also $\Pi_p$ for its kernel,
and letting $S_2h$ denote the symmetrization 
$\bigl(h(X_1,X_2)+h(X_2,X_1)\bigr)/2$ of a function $h: \X^2\to \RR$,
we define
\begin{equation}
\chi_p^{(2)}(X_1,X_2)
=-2S_2\Bigl[A_1\bigl(Y_1-b(Z_1)\bigr)\Pi_p(Z_1,Z_2)
\bigl(A_2a(Z_2)-1\bigr)\Bigr].
\label{EqMARAdhocSecondOrderIF}
\end{equation}

\begin{lemma}
\label{LemmaSecondOrderIF}
For $\chi_p^{(2)}$ defined by (\ref{EqMARAdhocSecondOrderIF})
with $\Pi_p$ the kernel of an
orthogonal projection $\Pi_p: L_2(g)\to L_2(g)$ onto a subspace 
$L\subset L_2(g)$, equation (\ref{EqSecondOrderIP}) 
is satisfied for every path
$t\mapsto p_t$ corresponding to directions $(\asc,\bsc,\fsc)$ such
that $\asc\in L$ and $\bsc\in L$.
\end{lemma}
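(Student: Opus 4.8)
The plan is to verify directly that the function $\chi_p^{(2)}$ in (\ref{EqMARAdhocSecondOrderIF}) satisfies the defining equation (\ref{EqSecondOrderIP}) whenever the perturbation directions $\a$ and $\b$ lie in the subspace $L$ onto which $\Pi_p$ projects. First I would substitute the explicit form of $\chi_p^{(2)}$ into the right-hand side of (\ref{EqSecondOrderIP}), namely $\E_p\bigl[\chi_p^{(2)}(x_1,X_2)\,B_p(\a,\b,\phi)(X_2)\bigr]$, where the score $B_p(\a,\b,\phi)$ is the sum $B_p^a\a+B_p^b\b+B_p^f\phi$ of the three score components displayed in Section~\ref{SectionTangentSpaceMissing}. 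Since $\chi_p^{(2)}$ is a symmetrized sum of two terms, I would treat the two halves of $S_2$ separately; in each half the $X_2$-dependence factors through either $A_2\bigl(Y_2-b(Z_2)\bigr)$-type quantities or $\bigl(A_2a(Z_2)-1\bigr)$-type quantities, which are exactly the random parts of the $b$-score and the $a$-score respectively.

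The heart of the computation is a set of orthogonality relations under $\E_p$. Using the conditional independence of $Y$ and $A$ given $Z$ and the identities $\E_p[A\given Z]=1/a(Z)$, $\E_p[Y\given Z]=b(Z)$, one checks the following facts about the $X_2$-integral: (i) the factor $A_2\bigl(Y_2-b(Z_2)\bigr)$ is orthogonal to both the $a$-score and the $f$-score and pairs with the $b$-score $B_p^b\b$ to produce $\E_p\bigl[A_2\bigl(Y_2-b(Z_2)\bigr)^2 / \bigl(b(1-b)\bigr)(Z_2)\,\b(Z_2)\,\cdot\,\bigr]$, whose conditional expectation given $Z_2$ collapses to $\E_g\bigl[\Pi_p(z_1,Z_2)\b(Z_2)\bigr]$ after the $a$-weighting is absorbed into the change from $F$ to $g\,d\n$; and (ii) the factor $\bigl(A_2a(Z_2)-1\bigr)$ is orthogonal to the $b$-score and pairs with the $a$-score and the $f$-score to produce, again after reducing the conditional expectation given $Z_2$, a term proportional to $\E_g\bigl[\Pi_p(z_1,Z_2)\a(Z_2)\bigr]$ (the $f$-score contribution vanishes because $\E_p\bigl[(A_2a(Z_2)-1)\given Z_2\bigr]=0$). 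Here is where the projection does its work: because $\Pi_p$ is the kernel of the orthogonal projection of $L_2(g)$ onto $L$, and because $\a,\b\in L$, one has $\E_g\bigl[\Pi_p(z_1,Z_2)\b(Z_2)\bigr]=\b(z_1)$ and $\E_g\bigl[\Pi_p(z_1,Z_2)\a(Z_2)\bigr]=\a(z_1)$. Collecting the constants from the $-2S_2$ prefactor and from the score normalizations, the right-hand side of (\ref{EqSecondOrderIP}) reduces exactly to $a_1\bigl(y_1-b(z_1)\bigr)\a(z_1)-\bigl(a_1a(z_1)-1\bigr)\b(z_1)$, which is the left-hand side.

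I expect the main obstacle to be bookkeeping rather than conceptual: keeping track of which of the two symmetrized terms contributes to the $\a$-part and which to the $\b$-part, and making sure the weight factors $a(Z_2)$, $b(Z_2)(1-b(Z_2))$ and the Jacobian in passing from $\E_p$ over $X_2$ to an integral $\int\cdot\,g\,d\n$ cancel cleanly so that the reproducing property of $\Pi_p$ on $L$ can be invoked with the correct inner product $L_2(g)$. One should also take care that the cross terms — e.g. the $A_2\bigl(Y_2-b(Z_2)\bigr)$ factor paired against the $a$-score, or $\bigl(A_2a(Z_2)-1\bigr)$ paired against the $b$-score — genuinely vanish; these follow from $\E_p\bigl[(A_2a(Z_2)-1)\bigl(Y_2-b(Z_2)\bigr)\given Z_2\bigr]=0$, again a consequence of conditional independence. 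Once these elementary conditional-expectation identities are in hand, the verification is a direct substitution, and the restriction $\a\in L$, $\b\in L$ is seen to be exactly what is needed (and all that is needed) for the kernel $\Pi_p$ to act as the identity on the relevant directions.
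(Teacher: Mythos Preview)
Your proposal is correct and follows essentially the same approach as the paper's own proof: both compute $\E_p\bigl[\chi_p^{(2)}(x_1,X_2)\,B_p(\a,\b,\phi)(X_2)\bigr]$ directly using the conditional-expectation identities $\E(A\given Z)=1/a(Z)$, $\E(Y\given Z)=b(Z)$, $\var(Aa(Z)\given Z)=a(Z)-1$, $\var(Y\given Z)=b(Z)(1-b)(Z)$, reduce the result to $a_1(y_1-b(z_1))\Pi_p\a(z_1)-(a_1a(z_1)-1)\Pi_p\b(z_1)$, and then invoke $\Pi_p\a=\a$, $\Pi_p\b=\b$ for $\a,\b\in L$. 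Your write-up simply spells out the cross-term vanishings and the passage from $f$ to $g$ that the paper leaves as ``direct computation.''
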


\begin{proof}
By definition $\E(A\given Z)=(1/a)(Z)$ and $\E(Y\given Z)=b(Z)$.
Also  $\var\bigl(Aa(Z)\given Z\bigr)=a(Z)-1$ and
$\var(Y\given Z)=b(Z)(1-b)(Z)$.
By direct computation using these identities, we find that for the
influence function (\ref{EqMARAdhocSecondOrderIF}) the right side
of (\ref{EqSecondOrderIP}) reduces to 
$$a_1\bigl(y_1-b(z_1)\bigr)\Pi_p\asc(z_1)
-\bigl(a_1a(z_1)-1\bigr) \Pi_p\b(z_1).$$
Thus (\ref{EqSecondOrderIP}) holds for every $(\asc,\bsc,\fsc)$ 
such that $\Pi_p\asc=\asc$ and $\Pi_p\bsc=\bsc$. 
\end{proof}

Together with the first order influence function 
(\ref{EqMARFOIF}) the influence function (\ref{EqMARAdhocSecondOrderIF})
defines the (approximate) influence function
$\chi_p=\chi_p^{(1)}+\thalf \chi_p^{(2)}$. For an
initial estimator $\hat p$ based on independent 
observations we now construct the estimator (\ref{EqEstimator}), i.e.
\begin{equation}
\label{EqEstimatorSecondOrder}
\hat\chi_n=\chi(\hat p)+\PP_n\chi_{\hat p}^{(1)}+
\thalf \UU_n\chi_{\hat p}^{(2)}.
\end{equation}
Unlike the first order influence function, 
the second order influence function does depend on the
density $f$ of the covariates, or rather the
function $g=f/a$ (through the kernel $\Pi_p$, which is defined relative to $L_2(g)$), and
hence the estimator (\ref{EqEstimatorSecondOrder}) involves a preliminary estimator of $g$. 
As a consequence, the quality of the estimator of the functional $\chi$
depends on the precision by which $g$ (as part of the plug-in $\hat p=(\hat a,\hat b,\hat g)$) 
can be estimated. The intuitive reason is that the bias (\ref{EqMARBias}) depends
on $g$, and it can only be made smaller by estimating it.

Let $\hat\E_p$ and $\hat\var_p$ denote conditional expectations given
the observations used to construct $\hat p$, let $\|\cdot\|_r$ be
the norm of $L_r(g)$, and let $\|\Pi\|_r$ denote
the norm of an operator $\Pi: L_r(g)\to L_r(g)$.

\begin{theorem}
\label{TheoremSecondOrderEstimator}
The estimator $\hat\chi_n$ given in (\ref{EqEstimatorSecondOrder}) 
with influence functions
$\chi_p^{(1)}$ and $\chi_p^{(2)}$ defined by 
(\ref{EqMARFOIF}) and (\ref{EqMARAdhocSecondOrderIF}), for
$\Pi_p$ the kernel of an orthogonal projection in $L_2(g)$ onto 
a $k$-dimensional linear subspace, satisfies, for $r\ge 2$ (with $r/(r-2)=\infty$ if $r=2$),
\begin{align*}
\hat\E_p\hat \chi_n-\chi(p)&=O\Bigl(\|\Pi_p\|_r
\|\Pi_{\hat p}\|_r\|\hat a-a\|_r\|\hat b-b\|_r \|\hat g-g\|_{r/(r-2)}\Bigr)\\
&\qqqquad+O\Bigl(\bigl\|(I- \Pi_p) (a-\hat a)\bigr\|_2
\bigr\|(I-\Pi_p) (b-\hat b)\bigr\|_2\Bigr),\\
\hat\var_p \hat\chi_n&= O\Bigl( \frac 1n+\frac k{n^2}\Bigr).
\end{align*}
\end{theorem}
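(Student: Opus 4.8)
The plan is to split $\hat\chi_n-\chi(p)$ as in (\ref{EqBiasVarianceUEstimator}) into the bias $\chi(\hat p)-\chi(p)+P\chi_{\hat p}^{(1)}+\thalf P^2\chi_{\hat p}^{(2)}$ and the centered term $(\PP_n-P)\chi_{\hat p}^{(1)}+\thalf(\UU_n-P^2)\chi_{\hat p}^{(2)}$, and to treat the two separately while conditioning throughout on the independent sample used to form $\hat p=(\hat a,\hat b,\hat g)$: conditionally the centered term has mean zero, so $\hat\E_p\hat\chi_n-\chi(p)$ equals the bias and $\hat\var_p\hat\chi_n$ equals the conditional variance of the centered term. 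For the bias I would take the first two terms from (\ref{EqMARBias}), namely $-\int(\hat a-a)(\hat b-b)\,g\,d\n$, and compute $\thalf P^2\chi_{\hat p}^{(2)}$ by inserting $\hat p$ into (\ref{EqMARAdhocSecondOrderIF}) and taking the expectation under the true $p$, using $\E_p(A\given Z)=(1/a)(Z)$, $\E_p(Y\given Z)=b(Z)$ and the conditional independence of $A$ and $Y$ given $Z$, exactly as in the proof of Lemma~\ref{LemmaSecondOrderIF}. This reduces $\thalf P^2\chi_{\hat p}^{(2)}$ to the double integral of $(\hat b-b)(z_1)\,\Pi_{\hat p}(z_1,z_2)\,(\hat a-a)(z_2)$ against $g(z_1)g(z_2)$; since the kernel $\Pi_{\hat p}$ lives in $L_2(\hat g)$ and not $L_2(g)$, an extra factor $w:=g/\hat g$ appears, and after combining with $-\int(\hat a-a)(\hat b-b)\,g\,d\n$ and using $\langle u,v\rangle-\langle\Pi u,\Pi v\rangle=\langle(I-\Pi)u,(I-\Pi)v\rangle$ for an orthogonal projection (with $\langle\cdot,\cdot\rangle_{\hat g}$ the $L_2(\hat g)$ inner product), the bias equals $-\langle(I-\Pi_{\hat p})[(\hat a-a)w],\,(I-\Pi_{\hat p})[(\hat b-b)w]\rangle_{\hat g}+\int(\hat a-a)(\hat b-b)(g-\hat g)\,w\,d\n$.

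The next step is to disentangle the two projections. Writing $\Pi_p$ for the $L_2(g)$-orthogonal projection onto the same $k$-dimensional space $L$, one has $(I-\Pi_{\hat p})[(\hat a-a)w]=w\,(I-\Pi_p)(\hat a-a)+(I-\Pi_{\hat p})[(w-1)\Pi_p(\hat a-a)]$, because $\Pi_{\hat p}[(\hat a-a)w]$ and $\Pi_{\hat p}[w\,\Pi_p(\hat a-a)]$ are both equal to the Riesz representer in $(L,\langle\cdot,\cdot\rangle_{\hat g})$ of the functional $e\mapsto\int(\hat a-a)\,e\,g\,d\n$, and $\Pi_{\hat p}$ fixes $\Pi_p(\hat a-a)\in L$; likewise for $\hat b-b$. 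Substituting, the leading contribution to the bias is $-\int w\,(I-\Pi_p)(\hat a-a)\,(I-\Pi_p)(\hat b-b)\,g\,d\n$, which, $w$ being bounded by Assumption~1, Cauchy--Schwarz bounds by $O_P\bigl(\|(I-\Pi_p)(a-\hat a)\|_2\,\|(I-\Pi_p)(b-\hat b)\|_2\bigr)$, the second asserted term. Every remaining contribution carries a factor $w-1=(g-\hat g)/\hat g$, and is bounded by H\"older's inequality with exponents $r$, $r$, $r/(r-2)$ (note $1/r+1/r+(r-2)/r=1$), using the boundedness of $w$ and $1/\hat g$, the operator-norm inequalities $\|\Pi_p h\|_r\le\|\Pi_p\|_r\|h\|_r$ and $\|(I-\Pi_{\hat p})h\|_r\lesssim\|\Pi_{\hat p}\|_r\|h\|_r$, the self-adjointness identity $\|\Pi_{\hat p}\|_{r/(r-1)}=\|\Pi_{\hat p}\|_r$, and the projection-perturbation estimates of Section~\ref{SectionProjections}; this produces the first asserted term $O_P\bigl(\|\Pi_p\|_r\|\Pi_{\hat p}\|_r\|\hat a-a\|_r\|\hat b-b\|_r\|\hat g-g\|_{r/(r-2)}\bigr)$.

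For the variance, decompose $\hat\chi_n-\chi(\hat p)=\PP_n\chi_{\hat p}^{(1)}+\thalf\UU_n\chi_{\hat p}^{(2)}$. The first summand has conditional variance $n^{-1}\var_p\chi_{\hat p}^{(1)}=O_P(n^{-1})$, since $\chi_{\hat p}^{(1)}$ in (\ref{EqMARFOIF}) is bounded under Assumption~1. For the $U$-statistic I would use the standard variance bound $\var_p(\UU_n h)\lesssim n^{-1}\var_p\bigl(\E_p[h(X_1,X_2)\given X_1]\bigr)+n^{-2}\E_p[h(X_1,X_2)^2]$ (Section~\ref{SectionU}). By the identities of Lemma~\ref{LemmaSecondOrderIF} the conditional mean of $\chi_{\hat p}^{(2)}(X_1,X_2)$ given $X_1$ is a bounded-weighted combination of $\Pi_{\hat p}[(\hat a-a)w](Z_1)$ and $\Pi_{\hat p}[(\hat b-b)w](Z_1)$, whose $L_2(p)$-norm is $O_P(\|\hat a-a\|_2+\|\hat b-b\|_2)=O_P(1)$ since $\Pi_{\hat p}$ is a contraction on $L_2(\hat g)$ and $w$ is bounded; this part therefore contributes $O_P(n^{-1})$. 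Conditioning the second moment on $Z_1,Z_2$ and using the bounded conditional moments of $A$ and $Y$ gives $\E_p[\chi_{\hat p}^{(2)}(X_1,X_2)^2]\lesssim\dint\Pi_{\hat p}(z_1,z_2)^2 f(z_1)f(z_2)\,d\n(z_1)\,d\n(z_2)$, which, after the change of measure $f=a\,\hat g\,w$ with bounded weight, is at most a constant times the squared Hilbert--Schmidt norm of the $k$-dimensional orthogonal projection $\Pi_{\hat p}$ on $L_2(\hat g)$, namely $\trace\Pi_{\hat p}=k$; this part contributes $O_P(k/n^2)$. A Cauchy--Schwarz bound finally controls the cross-covariance between $\PP_n\chi_{\hat p}^{(1)}$ and the linear part of $\UU_n\chi_{\hat p}^{(2)}$ at order $O_P(n^{-1})$, so that $\hat\var_p\hat\chi_n=O_P(n^{-1}+k/n^2)$.

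The delicate part is the bias reorganization of the second paragraph: because no exact second-order influence function exists (the identity operator is not a kernel), $\chi_{\hat p}^{(2)}$ can represent only the directions lying in $L$, so a quadratic contribution in the projection defects $(I-\Pi_p)(\hat a-a)$ and $(I-\Pi_p)(\hat b-b)$ inevitably survives; and because $\Pi_{\hat p}$ is built from $\hat g$ rather than $g$, the covariate-density error $\hat g-g$ must be tracked carefully as it propagates both through the weight $w=g/\hat g$ and through the discrepancy between $\Pi_{\hat p}$ and $\Pi_p$. Confining all of these error contributions to the single five-factor bound, with the operator norms entering only through $\|\Pi_p\|_r$ and $\|\Pi_{\hat p}\|_r$, is exactly where the projection estimates of Section~\ref{SectionProjections} do the real work.
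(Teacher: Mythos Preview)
Your argument is correct and complete in outline, but both the bias and the variance are handled by a different route than the paper's.

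For the bias the paper never rewrites the inner product in $L_2(\hat g)$ or invokes the Riesz-representer identity $\Pi_{\hat p}[(\hat a-a)w]=\Pi_{\hat p}[w\,\Pi_p(\hat a-a)]$ that you use. Instead it simply adds and subtracts $\Pi_p$ inside the double integral: the piece with $\Pi_p$ collapses to $\int(\hat a-a)\Pi_p(\hat b-b)\,g\,d\n$ and combines with the first-order bias to give $-\int(\hat a-a)(I-\Pi_p)(\hat b-b)\,g\,d\n$ directly, while the difference $\hat\Pi-\Pi$ is bounded in one shot by Lemma~\ref{LemmaDifferenceOfProjections}(i). This yields exactly one factor each of $\|\Pi_p\|_r$ and $\|\Pi_{\hat p}\|_r$, whereas your cross-term bounds naturally produce extra powers of these operator norms (harmless once they are assumed bounded, but not literally the displayed five-factor form). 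Your Riesz-representer decomposition is an elegant alternative that makes the ``representation bias'' $(I-\Pi_p)$-term appear symmetrically in $a$ and $b$ from the start, at the cost of a slightly more intricate bookkeeping of the remainders.

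For the variance the paper does not compute the Hoeffding projection of $\chi_{\hat p}^{(2)}$ under $P$ at all. It changes measure from $P$ to $\hat P$ via Lemma~\ref{LemmaVarianceUQ}, under which $\chi_{\hat p}^{(2)}$ is exactly degenerate, and then reads off ${n\choose 2}^{-1}\hat P^2(\chi_{\hat p}^{(2)})^2\lesssim k/n^2$ from Lemma~\ref{LemmaNormProjection}. Your approach---bounding the linear Hoeffding component by $O_P(\|\hat a-a\|_2+\|\hat b-b\|_2)$ and the quadratic component by the Hilbert--Schmidt norm $k$---is more elementary in that it avoids the change-of-measure device, but it requires the extra computation of $\E_p[\chi_{\hat p}^{(2)}(X_1,X_2)\mid X_1]$. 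Either route gives the stated $O_P(n^{-1}+k/n^2)$.
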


The two terms in the bias result from having to estimate
$p$ in the second order influence function (giving ``third order
bias'') and using an approximate influence function (leaving the remainders $I-\Pi_p$
after projection), respectively.
The terms $1/n$ and $k/n^2$ in the variance appear as the variances
of $\UU_n\chi_p^{(1)}$ and $\UU_n\chi_p^{(2)}$, the second being 
a degenerate second order $U$-statistic (giving $1/n^2$, see (\ref{EqVarDegU})) with a kernel of variance $k$.

The proof of the theorem is deferred to 
Section~\ref{SectionProofofTheoremSecondOrderEstimator}.

Assume now that the range space of the projections $\Pi_p$ can be chosen
such that, for some constant $C$,
\begin{equation}
\label{EqApproxByPi}
\|a-\Pi_p a\|_2\le C \Bigl(\frac 1 k\Bigr)^{\a/d},
\qquad 
\|b-\Pi_p b\|_2\le C \Bigl(\frac 1 k\Bigr)^{\b/d}.
\end{equation}
Furthermore, assume that there exist estimators $\hat a$ and $\hat b$
and $\hat g$ that achieve convergence rates $n^{-\a/(2\a+d)}$,
$n^{-\b/(2\b+d)}$ and $n^{-\g/(2\g+d)}$, respectively, in $L_r(g)$ and $L_{r/(r-2)}(g)$, uniformly
over these a-priori models and a model for $g$ (e.g.\ for $r=3$), and that the preceding
displays also hold for $\hat a$ and $\hat b$. These assumptions
are satisfied if the unknown functions $a$ and $b$ are 
``regular'' of orders $\a$ and $\b$ on a compact subset of $\RR^d$
(see e.g.\ \cite{tsybakov}).
Then the estimator $\hat\chi_n$ of Theorem~\ref{TheoremSecondOrderEstimator}
attains the square rate of convergence 
\begin{equation}
\label{EqRate2ndOrder}
\Bigl(\frac 1n\Bigr)^{2\a/(2\a+d)
+2\b/(2\b+d)+2\g/(2\g+d)}\vee \Bigl(\frac 1k\Bigr)^{(2\a+2\b)/d}
\vee \frac 1 n\vee \frac k{n^2}.
\end{equation}
We shall see in the next section that the first of the four
terms in this maximum can be made smaller by choosing an
influence function of order higher than 2, 
while the other three terms arise at any order.
This motivates to determine a ``second order `optimal'' value of $k$ by
balancing the second, third and fourth terms. We next would
use the second order estimator if $\g$ is large enough so that
the first term is negligible relative to the other terms.

For $(\a+\b)/2\ge d/4$ we can choose $k=n$ and the resulting rate
(the square root of (\ref{EqRate2ndOrder}))
is $n^{-1/2}$ provided that (\ref{EqCutoffSecondOrder}) holds.
The latter condition is certainly satisfied under the sufficient condition
(\ref{EqEnoughSmoothnessForRootnByLinear}) for the linear estimator to
yield rate $n^{-1/2}$.   

More interestingly, for $(\a+\b)/2< d/4$ we choose
$k\sim n^{2d/(d+2\a+2\b)}$ and obtain the rate, provided that (\ref{EqCutoffSecondOrder}) holds,
$$n^{-(2\a+2\b)/(d+2\a+2\b)}.$$
This rate is slower than $n^{-1/2}$, but better than the rate 
$n^{-\a/(2\a+d)-\b/(2\b+d)}$ obtained by the linear estimator.
In \cite{RobinsetalMinimax} this rate is shown to be the fastest
possible in the minimax sense, for the model in which $a$ and $b$
range over balls in $C^\a(\Z)$ and $C^\b(\Z)$, and $g$ being known.

In both cases the second order estimator is better than the linear
estimator, but minimax only for sufficiently large $\g$. This motivates
to consider higher order estimators.

\section{Approximate functional}
\label{SectionMARApproximateFunctional}
Even though the functional of interest does not possess an exact
second-order influence function, we might proceed to higher orders by
differentiating the approximate second-order influence function
$\chi_p^{(2)}$ given in (\ref{EqMARAdhocSecondOrderIF}), and balancing
the various terms obtained. However, the formulas are much more
transparent if we compute \emph{exact} higer-order influence functions of an
approximating functional instead. In this section we first define a suitable functional
and next compute its influence functions.

Following the heuristics of Section~\ref{SectionApproximateFunctionals},
we define an approximate functional by equation 
(\ref{EqApproximatingFunctional}), using a
particular projection $p\mapsto \tilde p$ of the parameters.
We choose this projection to map the parameters $a$ and $b$ onto finite-dimensional
models and leave the parameter $g$ unaltered: 
$p$ is mapped into an element $\tilde p$ of
the approximating model, or equivalently a triplet $(a,b,g)$
into a triplet $(\tilde a,\tilde b,g)$ in the approximating model
for the three parameters (where $g$ is unaltered). (Even though
this is not evident in the notation, the projection is joint
in the three parameters: the induced maps $(a,b,g)\mapsto\tilde a$ and 
$(a,b,g)\mapsto \tilde b$ do not reduce to maps
$a\mapsto \tilde a$ and $b\mapsto \tilde b$, but $\tilde a$ and
$\tilde b$ depend on the full triplet $(a,b,g)$.)

As ``model'' for $(a,b)$ we consider the
product of two affine linear spaces
\begin{equation}
\label{EqAffineModels}
\bigl(\hat a+\aa L\bigr)\times\bigl(\hat b+\bb L\bigr),
\end{equation}
for a given finite-dimensional subspace $L$ of $L_2(\n)$
and fixed functions $\hat a,\aa, \hat b, \bb: \Z\to \RR$ that are bounded away from zero and infinity.
(Later the functions $\hat a$ and $\hat b$ are taken equal to the
preliminary estimators; one choice for the other functions is $\aa=\bb=1$.) 
The pair $(\tilde a, \tilde b)$ of
projections are defined as elements of the model (\ref{EqAffineModels}) satisfying
equation (\ref{EqCleverApproximatingFunctional}). In view of 
(\ref{EqMARBias}), for any path $\tilde p_t\leftrightarrow (\tilde a_t,\tilde b_t, g)
=(\tilde a+t\aa\, l, \tilde b+t\bb\, l',g)$, for given $l,l'\in L$, 
\begin{equation}
\label{EqChiTilde2}
\chi(\tilde p_t)+P\chi_{\tilde p_t}^{(1)}=\chi(p)-\int \bigl(\tilde a+t\aa\, l-a\bigr) \bigl(\tilde b+t\bb\, l'-b\bigr)\,g\,d\n.
\end{equation}
Equation (\ref{EqCleverApproximatingFunctional}) requires that the derivative
of this expression with respect to $t$ at $t=0$ vanishes. Thus 
the functions $\tilde a$ and $\tilde b$ must be chosen to satisfy 
the set of stationary equations, for every $l,l'\in L$,
\begin{align}
\label{EqApproxa}
0&=\int (\tilde a-a)\bb\, l'\,g\,d\n=\int \Bigl(\frac{\tilde a-\hat a}{\aa}-
\frac{a-\hat a}{\aa}\Bigr) \,l'\,\aa\bb g\,d\n,\qquad l'\in L,\\
0&=\int \aa\, l(\tilde b-b)\,g\,d\n=\int \Bigl(\frac{\tilde b-\hat b}{\bb}-
\frac{b-\hat b}{\bb}\Bigr)\, l\, \aa\bb g\,d\n,\qquad l\in L.
\label{EqApproxb}
\end{align}
Because the functions $(\tilde a-\hat a)/\aa$ and $(\tilde b-\hat b)/\bb$ are required to be
in $L$, the second way of writing these equations shows that the latter two functions are
the orthogonal projections 
of the functions $(a-\hat a)/\aa$ and $(b-\hat b)/\bb$ onto $L$ in $L_2(\aa\bb g)$.

As explained in Section~\ref{SectionApproximateFunctionals},
as it satisfies (\ref{EqCleverApproximatingFunctional})
the projection $(a,b,g)\mapsto (\tilde a, \tilde b,g)$
renders the first order influence function
of the approximate functional $\tilde\chi$ equal to the 
first order influence function of $\chi$ evaluated at the
projection. Furthermore, the difference between
$\chi$ and $\tilde \chi$ is quadratic in the distance
between $\tilde p$ and $p$ (see (\ref{EqErrorApproximateFunctional})). The 
following theorem summarizes the preceding and
verifies these properties in the present concrete situation.

\begin{theorem}
\label{TheoremCharacTildes}
For given measurable functions $\hat a, \aa, \hat b,\bb: \Z\to \RR$
with $\aa$ and $\bb$ bounded away from zero and infinity, define a map
$(a,b,g)\mapsto (\tilde a,\tilde b, g)$ by letting
$(\tilde a -\hat a)/\aa$ and $(\tilde b -\hat b)/\bb$ be
the orthogonal projections of $(a -\hat a)/\aa$ and $(b-\hat b)/\bb$ 
in $L_2(\aa\bb g)$ onto a closed subspace $L$. 
Let $\tilde p$ correspond to $(\tilde a,\tilde b,g)$
and define $\tilde\chi(p)=\chi(\tilde p)+P\chi_{\tilde p}^{(1)}$. Then 
$\tilde \chi$ has influence function
\begin{align}
\tilde\chi_p^{(1)}(X)
&=A \tilde a(Z)\bigl(Y-\tilde b(Z)\bigr)+\tilde b(Z)
-\chi\bigl(\tilde p\bigr).
\label{EqMARIFApprox}
\end{align}
Furthermore, for $\underline g=\aa\bb g$,
$$\bigl|\tilde\chi(p)-\chi(p)\bigr|\le 
\Bigl\|(I-\Pi_p)\frac{\hat a- a}{\aa}\Bigr\|_{2,\underline g}
\Bigl\|(I-\Pi_p)\frac{\hat b- b}{\bb}\Bigr\|_{2,\underline g}.$$
\end{theorem}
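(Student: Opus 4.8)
The plan is to verify the two displayed claims separately, both by direct computation from the definitions already set up in the section. For the formula \eqref{EqMARIFApprox}, the approach is to appeal to the heuristic \eqref{EqIFApproximateFunctional} of Section~\ref{SectionApproximateFunctionals}: since the projection $(a,b,g)\mapsto(\tilde a,\tilde b,g)$ was constructed precisely to satisfy the stationarity condition \eqref{EqCleverApproximatingFunctional} (this is exactly what equations \eqref{EqApproxa}--\eqref{EqApproxb} encode), the first order influence function of $\tilde\chi$ at $p$ equals the first order influence function of $\chi$ evaluated at $\tilde p$, i.e.\ $\tilde\chi_p^{(1)}=\chi_{\tilde p}^{(1)}$. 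Substituting $\tilde p\leftrightarrow(\tilde a,\tilde b,g)$ into \eqref{EqMARFOIF} then gives \eqref{EqMARIFApprox}. To make this rigorous rather than heuristic, I would instead verify \eqref{EqMARIFApprox} head-on: compute $\frac{d}{dt}_{|t=0}\tilde\chi(p_t)$ along an arbitrary path $t\mapsto p_t$ with score directions $(\alpha,\beta,\phi)$, using that $\tilde\chi(p)=\chi(\tilde p)+P\chi_{\tilde p}^{(1)}$ and that $\chi(\tilde p_t)+P_0\chi_{\tilde p_t}^{(1)}$ has vanishing $t$-derivative by \eqref{EqCleverApproximatingFunctional}, so only the derivative of $t\mapsto P_t\chi_{\tilde p_0}^{(1)}$ through the ``outer'' $p_t$ survives; this derivative is $\E_p\chi_{\tilde p}^{(1)}(X)B_p(\alpha,\beta,\phi)(X)$, which identifies $\chi_{\tilde p}^{(1)}$ as the influence function, and it is square-integrable and of the stated form by inspection.

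For the bound on $|\tilde\chi(p)-\chi(p)|$, the approach is pure algebra starting from \eqref{EqChiTilde2} at $t=0$, namely
\[
\tilde\chi(p)-\chi(p)=\chi(\tilde p)+P\chi_{\tilde p}^{(1)}-\chi(p)=-\int(\tilde a-a)(\tilde b-b)\,g\,d\n.
\]
Now rewrite $\tilde a-a=(\hat a-a)-(\hat a-\tilde a)$ and factor out $\aa$ and $\bb$: $(\tilde a-a)(\tilde b-b)\,g = \bigl(\frac{\tilde a-\hat a}{\aa}-\frac{a-\hat a}{\aa}\bigr)\bigl(\frac{\tilde b-\hat b}{\bb}-\frac{b-\hat b}{\bb}\bigr)\,\underline g$ with $\underline g=\aa\bb g$. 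Since $(\tilde a-\hat a)/\aa=\Pi_p\bigl((a-\hat a)/\aa\bigr)$ and $(\tilde b-\hat b)/\bb=\Pi_p\bigl((b-\hat b)/\bb\bigr)$ (this is the conclusion of \eqref{EqApproxa}--\eqref{EqApproxb}, with $\Pi_p$ the $L_2(\underline g)$-orthogonal projection onto $L$), the two factors are exactly $-(I-\Pi_p)\frac{a-\hat a}{\aa}$ and $-(I-\Pi_p)\frac{b-\hat b}{\bb}$. Hence
\[
\tilde\chi(p)-\chi(p)=-\int\Bigl[(I-\Pi_p)\tfrac{\hat a-a}{\aa}\Bigr]\Bigl[(I-\Pi_p)\tfrac{\hat b-b}{\bb}\Bigr]\,d\underline g,
\]
and the Cauchy--Schwarz inequality in $L_2(\underline g)$ gives the claimed product bound.

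I do not expect a serious obstacle here; the main point requiring care is the first step of the influence-function computation, namely being explicit that the stationarity equations \eqref{EqApproxa}--\eqref{EqApproxb} are precisely the condition \eqref{EqCleverApproximatingFunctional} specialized to the affine model \eqref{EqAffineModels}, so that the ``inner'' variation of $\tilde p$ contributes nothing to the first derivative of $\tilde\chi(p_t)$ and only the ``outer'' dependence $p\mapsto P\chi_{\tilde p_0}^{(1)}$ through the empirical-measure-like term matters. Once that is pinned down, identifying the influence function is a one-line verification against the defining identity $\frac{d}{dt}_{|t=0}\tilde\chi(p_t)=\E_p\,\tilde\chi_p^{(1)}(X)B_p(\alpha,\beta,\phi)(X)$, and the error bound is immediate from \eqref{EqChiTilde2} and Cauchy--Schwarz. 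A minor bookkeeping check is that $\tilde a,\tilde b$ inherit boundedness away from $0$ and $1$ (needed for $\tilde\chi_p^{(1)}$ to lie in $L_2(p)$), which follows from the Assumption and the boundedness of $\aa,\bb$ together with the contractivity of $\Pi_p$ on $L_\infty$-controlled inputs via the finite-dimensionality of $L$; this is routine and I would state it without dwelling on it.
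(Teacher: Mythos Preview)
Your proposal is correct and follows essentially the same route as the paper: the paper also derives the influence function by combining \eqref{EqIFApproximateFunctional} with \eqref{EqMARFOIF} (noting it ``can also be verified directly''), and for the error bound it likewise starts from $\tilde\chi(p)-\chi(p)=-\int(\tilde a-a)(\tilde b-b)\,g\,d\n$, rewrites this as an integral against $\underline g\,d\n$, identifies $(\tilde a-a)/\aa=(I-\Pi_p)\bigl((\hat a-a)/\aa\bigr)$ (and similarly for $b$), and applies Cauchy--Schwarz. Your write-up is in fact more detailed than the paper's; the only caveat is that your closing remark about $\tilde a,\tilde b$ inheriting boundedness via ``contractivity of $\Pi_p$ on $L_\infty$'' is not as automatic as you suggest (finite-dimensionality alone does not give uniform $L_\infty$-bounds for $L_2$-projections), but the paper does not address this point either and it is not needed for the theorem as stated.
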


\begin{proof}
The formula for the influence function agrees with the
combination of equations (\ref{EqIFApproximateFunctional}) 
and  (\ref{EqMARFOIF}), and can also be verified directly.
In view of (\ref{EqApproximatingFunctional}) and (\ref{EqMARBias}),
$$\tilde\chi(p)-\chi(p)
=-\int \bigl(\tilde a-a\bigr)\bigl(\tilde b-b\bigr)\, g\,d\n.$$
We rewrite the right side as an integral relative
to $\underline g\,d\n$, and next apply the Cauchy-Schwarz inequality.
Finally we note that $(\tilde a-a)/\aa=
(\tilde a-\hat a)/\aa -(a-\hat a)/\aa=(I-\Pi_p)\bigl((\hat a-a)/\aa\bigr)$,
and similarly for $b$.
\end{proof}

The approximation error $\tilde \chi(p)-\chi(p)$
can be rendered arbitrarily small by choosing the 
space $L$ large enough. Of course, we choose $L$ to
be appropriate relative to a-priori assumptions on the
functions $a$ and $b$. If these functions
are known to belong to H\"older classes, then  $L$ can for instance
be chosen as the linear span of the first $k$ basis elements
of a suitable orthonormal wavelet basis of $L_2(\n)$.

To compute higher order influence functions of $\tilde\chi$ we
recursively determine influence functions of influence functions,
according to the algorithm [1]--[3] in Section~\ref{SectionComputingIF},
starting with the influence function of $p\mapsto \tilde\chi_p^{(1)}(x_1)
+\chi(\tilde p)$, for a fixed $x_1$. We defer the details
of this derivation to Section~\ref{SectionDerivationHigherOrderIFApproximate},
and summarize the result in the following theorem.

To simplify notation, define
\begin{align}
\tilde Y&=A\bigl(Y-\tilde b(Z)\bigr)\aa(Z),\nonumber\\
\tilde A&=\bigl(A\tilde a(Z)-1\bigr)\bb(Z),\label{EqDeftildeYtildeA}\\
\AA  &=A \aa(Z)\bb(Z).\nonumber
\end{align}
These are the generic variables; indexed versions $\tilde Y_i, \tilde A_i,\AA_i,\ldots$ are defined
by adding an index to every variable in the equalities.
With this notation and with $\aa=\bb=1$ the second order influence 
function (\ref{EqMARAdhocSecondOrderIF}) at $p=\tilde p$
can be written as the symmetrization of 
$-2\tilde Y_1 \Pi_p(Z_1,Z_2)\tilde A_2$. This function was derived in an ad-hoc manner as
an approximate or partial influence function of $\chi$, but it is also the exact
influence function of $\tilde\chi$. The higher
order influence functions of $\tilde\chi$ possess an equally attractive form.

\begin{theorem}
\label{TheoremMARHigherOrderIF}
An $m$th order influence function $\tilde\chi^{(m)}_p$
evaluated at $(X_1,\ldots, X_m)$ of the functional $\tilde\chi$ 
defined in Theorem~\ref{TheoremCharacTildes}
is the degenerate (in $L_2(p)$) part of the variable
$$(-1)^{m-1}m!\,\tilde A_1 \Pi_{1,2}\AA_2 \Pi_{2,3}\AA_3\Pi_{3,4}\AA_4
\times\cdots\times \AA_{m-1}\Pi_{m-1,m}\tilde Y_m.$$
Here $\Pi_{i,j}$ is the kernel of the orthogonal projection
in $L_2(\aa\bb g)$ onto $L$, evaluated at $(Z_i,Z_j)$.
\end{theorem}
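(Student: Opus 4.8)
The plan is to prove the displayed formula by induction on $m$, running the recursive recipe [1]--[3] of Section~\ref{SubSectionInfluenceFunctions} on the approximate functional $\tilde\chi$ and then invoking Lemma~\ref{LemmaRecursiveIF} to conclude that the functions produced are genuine higher order influence functions. The base cases are essentially in hand: for $m=1$, Theorem~\ref{TheoremCharacTildes} gives the first order influence function $\tilde\chi_p^{(1)}$, and adding the constant $\chi(\tilde p)$ yields the equivalent starting function $\bar\chi_p^{(1)}(X)=A\tilde a(Z)\bigl(Y-\tilde b(Z)\bigr)+\tilde b(Z)$; for $m=2$, I would check --- exactly as in Lemma~\ref{LemmaSecondOrderIF}, but now for \emph{all} directions, since $(\tilde a-\hat a)/\aa$ and $(\tilde b-\hat b)/\bb$ lie in $L$ by construction --- that $-2\,\tilde A_1\Pi_{1,2}\tilde Y_2$ is a pre-degenerate second order influence function of $\tilde\chi$, and note that it is already degenerate in $L_2(p)$, so that step [3] is vacuous there.

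For the inductive step I would assume that $\bar\chi_p^{(m-1)}=(-1)^{m-2}(m-1)!\,\tilde A_1\Pi_{1,2}\AA_2\Pi_{2,3}\cdots\AA_{m-2}\Pi_{m-2,m-1}\tilde Y_{m-1}$ is a valid pre-degenerate $(m-1)$st order influence function and carry out step [2]: differentiate $p\mapsto\bar\chi_p^{(m-1)}(x_1,\dots,x_{m-1})$ along a submodel $t\mapsto p_t$ with directions $(\a,\b,\phi)$ and rewrite the derivative as $\E_p\bigl[\bar\chi_p^{(m)}(x_1,\dots,x_{m-1},X_m)\,B_p(\a,\b,\phi)(X_m)\bigr]$. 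The $p$-dependence of $\bar\chi_p^{(m-1)}$ sits in the two endpoint factors $\tilde A_1$ (through $\tilde a$) and $\tilde Y_{m-1}$ (through $\tilde b$) and in each of the $m-2$ kernels $\Pi_{i,i+1}$ (through the weight $\underline g=\aa\bb g$). Two computations drive the step. First, the stationary equations (\ref{EqApproxa})--(\ref{EqApproxb}) give $(\tilde a-\hat a)/\aa=\Pi_p\bigl((a-\hat a)/\aa\bigr)$ and the analogous identity for $b$; differentiating these, and using that the derivative of the orthogonal projection onto the \emph{fixed} space $L$ is $\dot\Pi_p h=\Pi_p\bigl[(\dot{\underline g}/\underline g)(I-\Pi_p)h\bigr]$ with $\dot{\underline g}/\underline g=\phi-\a/a$ along the path, gives $\dot{\tilde a}=\aa\bigl[\Pi_p(\a/\aa)+\Pi_p\bigl((\dot{\underline g}/\underline g)(a-\tilde a)/\aa\bigr)\bigr]$ and likewise for $\dot{\tilde b}$ (the factor $\aa$, resp. $\bb$, recombines with the observed $A$ to reproduce $\AA_1$, resp. $\AA_{m-1}$). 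Second, the conditional moment identities of Lemma~\ref{LemmaSecondOrderIF} --- $\E(A\given Z)=1/a$, $\E(Y\given Z)=b$, $\var(Aa(Z)\given Z)=a-1$, $\var(Y\given Z)=b(1-b)$, plus conditional independence of $Y$ and $A$ given $Z$ --- let me recognize an integral $\int\Pi_p(z_i,z_m)(\cdot)\,d\n(z_m)$ as an expectation $\E_p\bigl[(\cdot)\,B_p^a\a(X_m)\bigr]$, $\E_p\bigl[(\cdot)\,B_p^b\b(X_m)\bigr]$ or $\E_p\bigl[(\cdot)\,B_p^f\phi(X_m)\bigr]$, each such rewriting inserting exactly one new link ($\AA_m$, $\tilde A_m$ or $\tilde Y_m$) and one new kernel.

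The main obstacle will be the bookkeeping showing that all these contributions collapse --- modulo the permutation and tangent-space ambiguity inherent in an influence function --- to $m$ copies of the single chain $\tilde A_1\Pi_{1,2}\AA_2\cdots\AA_{m-1}\Pi_{m-1,m}\tilde Y_m$, each with sign $-1$, whence $\bar\chi_p^{(m)}=(-1)^{m-2}(m-1)!\cdot(-m)\cdot(\text{chain})=(-1)^{m-1}m!\,(\text{chain})$. Three points will need care: the ``$\Pi_p(\text{direction})$'' parts of $\dot{\tilde a}$, $\dot{\tilde b}$ first yield the chain written in the true $a,b$ rather than in $\tilde a,\tilde b$ at the new vertex $X_m$, but the discrepancy is repaired exactly by the ``$\dot\Pi_p$'' parts, which carry the very same factors $a-\tilde a$, $b-\tilde b$, up to leftover terms that are deterministic in $Z_m$ and hence orthogonal to the scores; differentiating $\tilde A_1$ and the interior kernels produces the chain with the new index inserted at the front or in the middle, which matches the canonical chain only after relabelling the dummy arguments (this is where the permutation freedom enters); and the $\phi$-pieces of the several $\dot\Pi_p$ terms, which individually look like non-representable ``Dirac on the diagonal'' contributions in an already-occupied variable, must be seen to cancel in the sum, so that --- unlike for $\chi$ itself --- no obstruction survives and $\tilde\chi$ genuinely has an exact $m$th order influence function. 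Counting the $m$ differentiated factors of $\bar\chi_p^{(m-1)}$ (two endpoints plus $m-2$ kernels) then gives the multiplicity $m$ and the sign; a convenient organizing device is to write the chain as $\tilde A_1\circ T_p^{\,m-2}\circ\tilde Y_m$ for the transfer operator $T_p=\Pi_p\circ(\text{multiplication by }\AA)$ and differentiate the $(m-2)$-fold composition by the product rule, the endpoints supplying the remaining two terms --- though one still has to verify each of the $m$ terms reduces to the same chain. Step [3] then replaces $\bar\chi_p^{(m)}$ by its $L_2(p)$-degenerate part $D_P\bar\chi_p^{(m)}$, which is precisely the assertion of the theorem, and Lemma~\ref{LemmaRecursiveIF} certifies that this is the $m$th term in the Hoeffding decomposition of the influence function.
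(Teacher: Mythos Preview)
Your induction strategy is the same as the paper's, but you are taking a detour that creates the three complications you anticipate, all of which the paper avoids. The paper computes, once and for all, the first order influence functions of the three building blocks of the chain as separate lemmas: for fixed $z$ the influence function of $p\mapsto\tilde a(z)$ is $x'\mapsto -\aa(z)\Pi_p(z,z')\tilde A(x')$, that of $p\mapsto\tilde b(z)$ is $x'\mapsto\bb(z)\Pi_p(z,z')\tilde Y(x')$ (Lemma~\ref{LemmaMARIFofaaAndbb}), and that of $p\mapsto\Pi_p(z_1,z_2)$ is $x'\mapsto -\Pi_p(z_1,z')\AA(x')\Pi_p(z',z_2)$ (Lemma~\ref{LemmaIFProjection}). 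Each of these is already a chain link written in $\tilde A,\tilde Y,\AA$, so the product rule on the $m$ factors $\tilde A_1,\Pi_{1,2},\ldots,\Pi_{m-2,m-1},\tilde Y_{m-1}$ produces $m$ copies of the $(m{-}1{+}1)$-chain, each with sign $-1$ and the new argument inserted at a different slot; symmetrization identifies them, and the factorial and sign update exactly as you say. That is the whole induction.

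By first writing $\dot{\tilde a}$ in the directions $(\a,\b,\phi)$ you land with the true $a$ at the new vertex and then have to ``repair'' it; Lemma~\ref{LemmaMARIFofaaAndbb} does this repair internally and hands you $\tilde A$ directly. Similarly, your formula $\dot\Pi_p h=\Pi_p\bigl[(\dot{\underline g}/\underline g)(I-\Pi_p)h\bigr]$ is the derivative of the projection of a \emph{fixed} $h$, not of the kernel; applied na\"ively to $h=\Pi_p(z_i,\cdot)\in L$ it would vanish, which is what leads you to suspect non-representable Dirac pieces. The correct object is the influence function of the kernel itself, which is given by Lemma~\ref{LemmaIFProjection} and is a perfectly good function of the new observation --- there is no Dirac obstruction and nothing to cancel. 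The approximate functional $\tilde\chi$ factors through the finite-dimensional space $L$, and this is precisely why exact influence functions exist at every order.
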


To obtain the degenerate part of the variable
in the preceding lemma, we apply the general formula
(\ref{EqMakeDegenerate}) together with 
Lemma~\ref{LemmaMARMakingDegenerate}. Assertions
(i) and (ii) of the latter lemma show that the
variable is already degenerate relative to $X_1$ and $X_m$,
while assertion (iii) shows 
that integrating out the variable $X_i$ for $1<i<m$ simply
collapses $\Pi_{i-1,i}\AA_i\Pi_{i,i+1}$ into $\Pi_{i-1,i+1}$.
For instance, with $S_m$ denoting symmetrization of a 
function of $m$ variables,
\beginskip
\footnote{Alternative expressions are obtained by expressing
the kernel $\Pi_p$ into a basis $e_1,\ldots, e_k$ of $L$.
By Lemma~\ref{LemmaProjectionKernelExpressedInBasis}
we have $\Pi_p(z_1,z_2)= \vec e_k(z_1)^T C_p^{-1}\vec e_k(z_2)$, for
$C_p$ the matrix with $(i,j)$th element 
$(C_p)_{ij}=\E_p e_i(Z)e_j(Z)\AA$ and 
$\vec e_k(z)=\bigl(e_1(z),\ldots, e_k(z)\bigr)$.
It follows that the variable
$\tilde A_1 \Pi_{1,2}\AA_2 \Pi_{2,3}\AA_3\Pi_{3,4}\AA_4
\times\cdots\times \AA_{m-1}\Pi_{m-1,m}\tilde Y_m$ can be
written as 
$$\tilde A_1 \vec e_k(Z_1)^T
\biggl[\prod_{j=2}^{m-1} C_p^{-1}\AA_j \vec e_k(Z_j)\vec e_k(Z_j)^T\biggr]
C_p^{-1}\vec e_k(Z_m)\tilde Y_m.$$
Because the degenerate part of
a product of independent variables 
is simply the product of the centered variables,
the degenerate part of this variable is
$$\tilde A_1 \vec e_k(Z_1)^T
\biggl[\prod_{j=2}^{m-1} C_p^{-1}\bigl(\AA_j \vec e_k(Z_j)\vec e_k(Z_j)^T
-C_p\bigr)\biggr] \,C_p^{-1}\vec e_k(Z_m)\tilde Y_m.$$
The $m$th order influence function, evaluated
at the observations, is the symmetrization of this variable.
Because the expression depends on the space $L$ only,
and not on the particular basis $e_1,\ldots, e_k$,
we may choose a basis that is orthogonal relative to the
weights $\AA$. Then $e_1,\ldots, e_k$ will depend on $p$
(which will be replaced by the initial estimator $\hat p$ in the
construction of the estimator), but $C_p$ reduces to the identity.}
\endskip
\begin{align}
\tilde\chi_p^{(2)}(X_1,X_2)
&=-2S_2[\tilde A_1\Pi_{1,2}\tilde Y_2],\nonumber\\
\tilde\chi^{(3)}_p(X_1,X_2, X_3)
&=6S_3\Bigl[\tilde A_1 \Pi_{1,2}\AA_2\Pi_{2,3}\tilde Y_3
-\tilde A_1 \Pi_{1,3}\tilde Y_3\Bigr],
\label{EqIFsConcrete}\\
\tilde\chi^{(4)}_p(X_1,X_2, X_3,X_4)
&=-24S_4\Bigl[\tilde A_1 \Pi_{1,2}\AA_2\Pi_{2,3}\AA_3\Pi_{3,4}\tilde Y_4
\nonumber\\
&\hskip-1cm-\tilde A_1 \Pi_{1,3}\AA_3\Pi_{3,4}\tilde Y_4
-\tilde A_1 \Pi_{1,2}\AA_2\Pi_{2,4}\tilde Y_4
+\tilde A_1 \Pi_{1,4}\tilde Y_4\Bigr].\nonumber
\end{align}
As shown on the left, but not on the right of the equations, these quantities depend on the unknown
parameter $p=(a,b,g)$. In the right sides, the variables $\tilde Y_i$ and $\tilde A_i$
depend on $p$ through $\tilde b$ and $\tilde a$, and hence are not observable variables.
Furthermore, the kernels $\Pi_{i,j}$  depend on $g$ as they are orthogonal projections
in $L_2(\aa\bb g)$.

\section{Parametric rate ($(\a+\b)/2\ge d/4$)}
\label{SectionParametricRate}
In this section we show that the parameter $\chi(p)$ is estimable
at $1/\sqrt n$-rate provided the average smoothness 
$(\a+\b)/2$ is at least $d/4$. We achieve this using the
estimator 
\begin{equation}
\label{EqEstimatorParametricRate}
\hat\chi_n=\chi(\hat p)+\UU_n\Bigl(\tilde\chi_{\hat p}^{(1)}
+\thalf\tilde\chi^{(2)}_{\hat p}+\cdots
+\frac1{m!}\tilde\chi_{\hat p}^{(m)}\Bigr),
\end{equation}
with the influence functions $\tilde\chi_p^{(j)}$ those of the approximate functional $\tilde\chi$
in Section~\ref{SectionMARApproximateFunctional}: they are given in
Theorems~\ref{TheoremCharacTildes} and~\ref{TheoremMARHigherOrderIF} for $j=1$, and
$j=2,\ldots, m$, respectively. (Because the map $p\mapsto \tilde p$ maps $\hat p$ into itself,
the influence function for $j=1$ in the display is also the first order influence function (\ref{EqMARIFApprox}) of
of $\chi$, when evaluated at $p=\hat p$.)

We assume that the projections $\Pi_p$ and $\Pi_{\hat p}$ 
map $L_s\bigl(\aa\bb g)$ to $L_s\bigl(\aa\bb g)$, for every $s\in \bigl[r/(r-1),r\bigr]\cup\{4\}$,
with uniformly bounded norms. 
(For $r=2$ this entails only $s\in\{2,4\}$; in this case we define $r/(r-2)=\infty$.)

\begin{theorem}
\label{TheoremParametricRate}
The estimator (\ref{EqEstimatorParametricRate}), with $\Pi_p$
a kernel of an orthogonal projection in $L_2\bigl(\aa\bb g)$
satisfying (\ref{EqProjectionKernel}) with $\sup_x \Pi_p(x,x)\lesssim k$,
satisfies, for a constant $c$ that depends on the supremum norms
of $\aa,\bb, 1/a, b, p/\hat p,g/\hat g$, the norms of the operators 
$\Pi_{\hat p}^{(0,l]}: L_s(\aa\bb \hat g)\to L_s(\aa\bb \hat g)$, for $l=1,\ldots, k$ only, and $r\ge 2$,
\begin{align*}
\hat\E_p\hat \chi_n-\chi(p)
&=O\Bigl(\|\hat a-a\|_{r}\|\hat b-b\|_{r}
\|\hat g-g\|_{(m-1)r/(r-2)}^{m-1}\Bigr)\\
&\qqqquad +O\Bigl(\Bigl\|(I-\Pi_p)\frac{\hat a- a}{\aa}\Bigr\|_2
\Bigl\|(I-\Pi_p)\frac{\hat b- b}{\bb}\Bigr\|_2\Bigr),\\
\hat\var_p \hat\chi_n
&\lesssim \sum_{l=1}^m\Bigl(\sum_{j=l}^mc^jj^{2l}\e_n^{2(j-l)}\Bigr)\frac{k^{l-1}}{n^l}.
\end{align*}
Here $\e_n$ is the maximum of the three rates $\|a-\hat a\|_4$, $\|b-\hat b\|_4$ and 
$\|g-\hat g\|_\infty$.
\end{theorem} 

The first term in the bias is of the order $1+1+(m-1)=m+1$, as to be expected 
for an estimator based on an $m$th order influence function;
the second term  is due to estimating $\tilde \chi$ rather
than $\chi$; it is independent of $m$, and the same as in
Theorem~\ref{TheoremSecondOrderEstimator} if $\aa=\bb=1$. The bound on the variance
is a sum of terms of the order $k^{j-1}/n^j$, which can roughly be understood in that each of the degenerate $U$-statistics
$\UU_n\tilde\chi_{\hat p}^{(j)}$ in (\ref{EqEstimatorParametricRate}) contributes a term of order $k^{j-1}/n^j$.
(The inner sums will typically be dominated by the terms with $j=l$, as the terms with  $l<j$ include a positive power of
the estimation error $\e_n$; the latter are lower order terms resulting from higher order $U$-statistics.)

For $\a$-, $\b$- and $\g$-regular parameters $a,b,g$ on a $d$-dimensional domain
the range space of the projections $\Pi_p$ can be chosen
so that (\ref{EqApproxByPi}) holds and such that there exist estimators $\hat a, \hat b,\hat g$ of $a, b,g$,
with the first two taking values in this range space, with convergence rates 
$n^{-\a/(2\a+d)}$, $n^{-\b/(2\b+d)}$ and $n^{-\g/(2\g+d)}$.
Then the second term in the bias (with $\aa=\bb=1$) is of order $(1/k)^{\a/d+\b/d}$.
If  $(\a+\b)/2\ge d/4$ and we choose $k=n$, then this is of order 
$1/\sqrt n$. For $k=n$ the standard deviation of the 
resulting estimator is also of the
order $1/\sqrt n$, while the first
term in the bias can be made arbitrarily small by choosing
a sufficiently large order $m$. Specifically, the
estimator $\hat\chi_n$ attains a $\sqrt n$-rate of convergence as soon as
\begin{equation}
\label{EqParametricOrder}
m-1\ge\Bigl(\frac12-\frac{\a}{2\a+d}-\frac{\b}{2\b+d}\Bigr)
\Bigl(\frac{2\g+d}\g\Bigr).
\end{equation}
For any $\g>0$ there exists an order $m$
that satisfies this, and hence
the parameter is $\sqrt n$-estimable as soon as
$(\a+\b)/2\ge d/4$.

More ambitiously, we may aim at attaining the parametric rate for every $\g>0$,
without a-priori knowledge of $\g$. This can be achieved if $(\a+\b)/2>d/4$
by using orders $m=m_n$ that increase to infinity with the sample size.
In this case the estimator can also be shown to be asymptotically efficient
in the semiparametric sense.

\begin{theorem}
\label{TheoremParametricRateEfficient}
If $(\a+\b)/2>d/4$, then 
the estimator (\ref{EqEstimatorParametricRate}), with $m=\log n$  and $\Pi_p$
a kernel of an orthogonal projection in $L_2\bigl(\aa\bb g)$ on a $k=n/(\log n)^3$-dimensional
space satisfying (\ref{EqApproxByPi})  and (\ref{EqProjectionKernel}) with $\sup_x \Pi_p(x,x)\lesssim k$,
based on preliminary estimators $\hat a, \hat b, \hat g$ that attain
rates $(\log n/n)^{-\d/(2\d+d)}$ relative to the uniform norm, satisfies
$$\sqrt n\bigl(\hat\chi_n-\chi(p)-\PP_n \tilde\chi_p^{(1)}\bigr)\prob 0.$$
\end{theorem}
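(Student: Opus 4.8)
The plan is to apply Theorem~\ref{TheoremParametricRate} with the stated choices $m=\log n$, $k=n/(\log n)^2$, and preliminary estimators converging at uniform rate $(\log n/n)^{\d/(2\d+d)}$ for each of $a,b,g$ (so $\d$ plays the role of $\a$, $\b$ and $\g$ simultaneously), and then to show that all the error terms produced by that theorem are $o_P(n^{-1/2})$. Since the estimator in \eqref{EqEstimatorParametricRate} can be written as $\chi(\hat p)+\UU_n\tilde\chi_{\hat p}^{(1)}+\sum_{j=2}^m \frac1{j!}\UU_n\tilde\chi_{\hat p}^{(j)}$, and the decomposition \eqref{EqBiasVarianceUEstimator} gives $\hat\chi_n-\chi(p)=[\hat\E_p\hat\chi_n-\chi(p)]+(\UU_n-P^n)\tilde\chi_{\hat p}$, it suffices to prove three things: (a) the bias $\hat\E_p\hat\chi_n-\chi(p)$ is $o_P(n^{-1/2})$; (b) the centered $U$-statistic based on the degenerate parts of order $\ge 2$ is $o_P(n^{-1/2})$; and (c) the remaining first-order part $(\PP_n-P)\tilde\chi_{\hat p}^{(1)}$ can be replaced by $(\PP_n-P)\tilde\chi_p^{(1)}=\PP_n\tilde\chi_p^{(1)}$, up to $o_P(n^{-1/2})$.

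For (a): the first bias term in Theorem~\ref{TheoremParametricRate} is $O(\|\hat a-a\|_r\|\hat b-b\|_r\|\hat g-g\|_{(m-1)r/(r-2)}^{m-1})$, which under uniform rates $(\log n/n)^{\d/(2\d+d)}$ is bounded (up to constants) by $(\log n/n)^{(m+1)\d/(2\d+d)}$; since $m=\log n\to\infty$, this exponent exceeds $1/2$ for all large $n$, so this term is $o(n^{-1/2})$ — in fact it decays faster than any fixed power of $n^{-1/2}$. The second bias term is $\|(I-\Pi_p)(\hat a-a)/\aa\|_2\,\|(I-\Pi_p)(\hat b-b)/\bb\|_2$, and by the triangle inequality this is bounded by $(\|(I-\Pi_p)a\|_2+\|\hat a-a\|_2)(\|(I-\Pi_p)b\|_2+\|\hat b-b\|_2)$; using the approximation bound \eqref{EqApproxByPi} with $k=n/(\log n)^2$ this is $O((1/k)^{\d/d}+(\log n/n)^{\d/(2\d+d)})^2=O((\log n)^{4\d/d} n^{-2\d/d}+\cdots)$. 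Since $(\a+\b)/2>d/4$, i.e. $\d>d/2$ when $\a=\b=\g=\d$, we have $2\d/d>1$, so $n^{-2\d/d}=o(n^{-1/2})$ even after absorbing the logarithmic factors. For (b): the variance bound is $\sum_{j=1}^m \binom{n}{j}^{-1}c^j k^{j-1}$; the $j=1$ term is $c/n$ and produces the first-order part which is genuinely of order $n^{-1/2}$, while for $j\ge2$ the term is of order $c^j k^{j-1}/n^j=c^j(k/n)^{j-1}/n=c^j(\log n)^{-2(j-1)}/n$, and summing over $j=2,\dots,m$ gives $O(1/(n(\log n)^2))=o(1/n)$ provided $c(\log n)^{-2}<1$ eventually, which holds. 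Hence the contribution of the degenerate parts of order $\ge2$ has variance $o(1/n)$, so by Chebyshev it is $o_P(n^{-1/2})$; combined with the bias this handles everything except the first-order linear term.

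For (c), the one genuinely delicate point: I must argue that $(\PP_n-P)(\tilde\chi_{\hat p}^{(1)}-\tilde\chi_p^{(1)})=o_P(n^{-1/2})$ so that the linear term can be re-centered at the true $p$. Because the sample used in $\UU_n$ is independent of the sample defining $\hat p$ (as stipulated in Section~\ref{SectionHeuristics}), conditionally on $\hat p$ this is a centered empirical process increment with conditional variance $n^{-1}\Var_p(\tilde\chi_{\hat p}^{(1)}-\tilde\chi_p^{(1)})\le n^{-1}\|\tilde\chi_{\hat p}^{(1)}-\tilde\chi_p^{(1)}\|_{2,p}^2$; so it suffices to show this $L_2(p)$-distance tends to $0$ in probability. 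Inspecting the explicit form \eqref{EqMARIFApprox}, $\tilde\chi_p^{(1)}(X)=A\tilde a(Z)(Y-\tilde b(Z))+\tilde b(Z)-\chi(\tilde p)$, the difference is controlled by $\|\tilde a(\hat p)-\tilde a(p)\|_2+\|\tilde b(\hat p)-\tilde b(p)\|_2$ plus the difference of the projections $\Pi_{\hat p}-\Pi_p$ acting on the relevant functions; since $\tilde a$, $\tilde b$ are orthogonal projections onto the fixed finite-dimensional $L$ (in the $p$- versus $\hat p$-weighted inner product) of functions close in the uniform norm, and $\|p/\hat p\|_\infty\to1$ by the uniform consistency, one gets $\|\tilde\chi_{\hat p}^{(1)}-\tilde\chi_p^{(1)}\|_{2,p}=o_P(1)$. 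The main obstacle is precisely this continuity-in-$\hat p$ estimate for the approximate influence function and the associated projections: one must verify that replacing $p$ by $\hat p$ inside $\Pi$, $\tilde a$, $\tilde b$ (and inside $\chi(\tilde p)$) perturbs $\tilde\chi^{(1)}$ by $o_P(1)$ in $L_2(p)$, using only uniform consistency of $\hat a,\hat b,\hat g$ and the boundedness assumptions; the finite dimensionality of $L$ and Assumption~4.1 make this routine but it is the one place where the structure of the model, rather than the abstract bias–variance bookkeeping of Theorem~\ref{TheoremParametricRate}, is actually used. Assembling (a), (b), (c) gives $\sqrt n(\hat\chi_n-\chi(p)-\PP_n\tilde\chi_p^{(1)})\prob0$, which is the claim.
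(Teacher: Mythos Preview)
Your overall strategy is the same as the paper's: invoke Theorem~\ref{TheoremParametricRate}, show the bias is $o(n^{-1/2})$, show the degenerate $U$-statistics of order $\ge2$ contribute $o_P(n^{-1/2})$, and finally recenter the linear part at $p$.

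There is, however, a genuine gap in step~(b). You estimate the $j$th variance term $\binom{n}{j}^{-1}c^jk^{j-1}$ by $c^jk^{j-1}/n^j$, i.e.\ you replace $\binom{n}{j}$ by $n^j$. For bounded $j$ this is harmless, but here $j$ ranges up to $m=\log n$, and $\binom{n}{j}^{-1}\ge j!/n^j$; the missing factor $j!$ can be as large as $(\log n)!$, which grows faster than any fixed power of $n$. The paper repairs this via $\binom{n}{j}\ge (n/2)^j/j!$ and Stirling's bound $j!\lesssim (j/e)^j\sqrt j$, giving the $j$th term at most $(1/n)\,(2ckj/n)^{j-1}\cdot 2cj\sqrt j/e^j$. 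Note the extra factor $j$ inside the parentheses: this is precisely why $k=n/(\log n)^2$ (with the square) is needed, so that $2ckm/n\lesssim 1/\log n$ even for $j=m=\log n$. Your naive bound would already work with $k=n/\log n$, so as written it does not account for the actual choice of $k$.

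Two smaller points. In (a), the symbol $\d$ in the theorem is a placeholder for each of $\a,\b,\g$ separately, not a common value; and $(\a+\b)/2>d/4$ translates to $(\a+\b)/d>1/2$, which is exactly what is needed for $(1/k)^{(\a+\b)/d}=o(n^{-1/2})$ --- your ``$\d>d/2$'' is an arithmetic slip. In (c), you make the argument harder than it needs to be: since the map $p\mapsto\tilde p$ fixes $\hat p$ (Theorem~\ref{TheoremCharacTildes}), the function $\tilde\chi_{\hat p}^{(1)}$ is simply \eqref{EqMARFOIF} with $(\hat a,\hat b)$ in place of $(a,b)$, and $\hat\var_p(\tilde\chi_{\hat p}^{(1)}-\tilde\chi_p^{(1)})\to0$ follows directly from uniform consistency of $\hat a,\hat b$ and dominated convergence; no separate continuity analysis of $\Pi_{\hat p}$ versus $\Pi_p$ is needed for the first-order term.
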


An estimator that is asymptotically linear in the first order efficient influence function, as in the
theorem, is asymptotically optimal in terms of the local asymptotic minimax and
convolution theorems (see e.g.\ \cite{vanderVaart98}, Chapter~25). The present
estimator $\hat\chi_n$ actually looses its efficiency  by splitting the sample
in a part used to construct the preliminary estimators and a part to form $\PP_n$.
This can be easily remedied by crossing over the two parts of the split, and
taking the average of the two estimators so obtained. By the theorem these are
both asymptotically linear in their sample, and hence their average is asymptotically
linear in the full sample and asymptotically efficient. 

The proofs of the theorems are deferred to 
Section~\ref{SectionProofofTheoremParametricRate}.

\section{Minimax rate at lower smoothness ($(\a+\b)/2< d/4$)}
\label{SectionMinimaxRate}
If the average a-priori smoothness $(\a+\b)/2$ 
of the functions $a$ and $b$ falls below $d/4$, then
the functional $\chi$ cannot be estimated any more at
the parametric rate (\cite{RobinsetalMinimax}). The estimator (\ref{EqEstimatorParametricRate})
of Theorem~\ref{TheoremParametricRate} can still be used and, with its bias and
variance as given in the theorem properly balanced, 
attains a certain rate of convergence, faster than the current state-of-the-art linear estimators.
However, in this section we present an estimator that is always better, and attains the
minimax rate of convergence $n^{-(2\a+2\b)/(2\a+2\b+d)}$
provided that the parameter $g$ is sufficiently regular. 

This estimator takes the same general form 
\begin{equation}
\label{EqEstimatorGeneral}
\hat\chi_n=\chi(\hat p)+\UU_n\Bigl(\tilde\chi_{\hat p}^{(1)}
+\thalf\tilde\chi^{(2)}_{\hat p}+\cdots
+\frac1{m!}\chi_{\hat p}^{(m)}\Bigr),
\end{equation}
as the estimator (\ref{EqEstimatorParametricRate}), but
the influence functions $\chi_p^{(j)}$ for $j\ge 3$ will be different.
The idea is to ``cut out'' certain terms from the influence functions in (\ref{EqEstimatorParametricRate})
in order to decrease the variance, but without increasing the bias.
For clarity we first consider the third order estimator, and next extend to
the general $m$th order. To attain the minimax rate the order $m$ must be fixed 
to a large enough value so that the first term in the bias given in Theorem~\ref{TheoremParametricRate}
is no larger than $n^{-(2\a+2\b)/(2\a+2\b+d)}$. (Apart from added complexity there is no loss
in choosing $m$ \emph{larger} than needed.)

The third order kernel $\tilde\chi_p^{(3)}$ in 
(\ref{EqIFsConcrete}) is the symmetrization of the variable
$$6\tilde A_1\Bigl(\Pi_p(Z_1,Z_2)\AA_2\Pi_p(Z_2,Z_3)-
\Pi_p(Z_1,Z_3)\Bigr) \tilde Y_3.$$
Here  $\Pi_p$ is the kernel of an orthogonal projection in $L_2(\aa\bb g)$
onto a $k$-dimensional linear space, which we may view as the sum
of $k$ projections on one-dimensional spaces. The quantity $k^2$ in the
order $O(k^2/n^3)$ of the variance in Theorem~\ref{TheoremParametricRate}
for $m=3$ arises as the number of terms in the product
$\Pi_p(Z_1,Z_2)\AA_2\Pi_p(Z_2,Z_3)$ of the two $k$-dimensional projection kernels.
It turns out that this order can be reduced without
increasing the bias by cutting out ``products of projections on
higher base elements''.

To make this precise, we partition the projection space in blocks, 
and decompose the two projections in the influence function over the blocks: 
\begin{equation}\label{EqDecomposeProjections}
\Pi_p=\sum_{r=0}^R \Pi_p^{(k_{r-1},k_r]},
\qquad
\Pi_p=\sum_{s=0}^S \Pi_p^{(l_{s-1},l_s]}.
\end{equation}
Here $\Pi_p^{(m,n]}$ is the projection onto the subspace spanned by base elements 
with index in intervals $(m,n]$, and $1=k_{-1}<k_0<k_1<\cdots<k_R=k$ 
and $1=l_{-1}<l_0<l_1<\cdots<l_S=k$ are suitable partitions of the set $\{1,\ldots, k\}$.
(``Full'' partitions in singleton sets would make the construction conceptual simpler, but
a small number of blocks will be needed in our proofs.)
The product of the two kernels now becomes a double sum, from
which we retain only terms with small values of $(r,s)$. The improved
 third order
influence function is, with as before $S_3$ denoting symmetrization,
\begin{align}
\nonumber
\chi_p^{(3)}(X_1,X_2,X_3)
&=6S_3\biggl[\dsum_{{(r,s):r+s\le D}\atop{ \vee r=0 \vee s=0}}
\!\!\tilde A_1\Bigl(\Pi_p^{(k_{r-1},k_r]}(Z_1,Z_2)\AA_2\Pi_p^{(l_{s-1},l_s]}(Z_2,Z_3)\\
&\qqqqqquad-\Pi_p^{(k_{r-1}\vee l_{s-1},k_r\wedge l_s]}(Z_1,Z_3)\Bigr)\tilde Y_3\biggr].
\label{EqDoubleThirdOrderIF}
\end{align}
The negative term in the display is the conditional expectation given
$Z_1, Z_3$ of the leading term, and maintains the degeneracy of the kernel.

For the decomposition (\ref{EqDecomposeProjections}) to be valid, the subspaces corresponding
to the blocks must be orthogonal in $L_2(\aa\bb g)$. We may achieve this by starting with a
standard basis $e_1,e_2,\ldots$, with good approximation properties for a target model,
and next replacing this by an orthonormal basis in $L_2(\aa \bb g)$ by the Gram-Schmidt procedure.
For a bounded $g$ the approximation properties will be preserved.

The grids are defined by 
\begin{align}
\label{EqDefinitionGridk}
k_{-1}&=1,& k_r&\sim n 2^{r/\a},&\qquad r&=0,\ldots, R,\\
\label{EqDefinitionGridl}
l_{-1}&=1,& l_s&\sim n 2^{s/\b},&\qquad  s&=0,\ldots, S,
\end{align}
where $R$ and $S$ are chosen such that $k_R\sim l_S\sim k$ (note that $k_0=l_0=n$).
In these definitions the notation $\sim$ means 
``equal up to a fixed multiple'' (needed to allow that $k_r$ and $l_s$ are (dyadic) integers). 
For ease of notation let $l_s=l_{-1}$ for $s\le -1$, and $l_s=l_S$ for $s\ge S$.

The grids $k_0<k_1<\cdots<k_R$ and $l_0<l_1<\cdots<l_S$ partition the
integers $n,n+1,\ldots,k$ in $R$ and $S$ groups. As $k_r^\a
l_s^\b=2^{r+s}n^{\a+\b}$, for every $r,s\ge0$, the cut-off $r+s\le D$
in (\ref{EqDoubleThirdOrderIF}) is delimited by the ``hyperbola'' $i^\a
j^\b\sim 2^Dn^{\a+\b}$ in the space of indices 
$(i,j)\in\{1,\ldots,k\}^2$ of base elements used in the two kernels, 
with only the pairs below the hyperbola retained (see Figure~\ref{Figure}). The intuition 
behind this hyperbolic cut-off is the product form of the bias (\ref{EqMARBias}):
a higher order correction on the estimator of $a$ may combine with a lower
order correction on $b$, and vice versa, to give an overall correction of the
desired order. The overall bias is smaller if the cut-off $D$ is chosen
larger, but then more terms are included in the estimator and the variance will be bigger.

\begin{figure}
\centerline{\resizebox{1.6in}{!}{\includegraphics{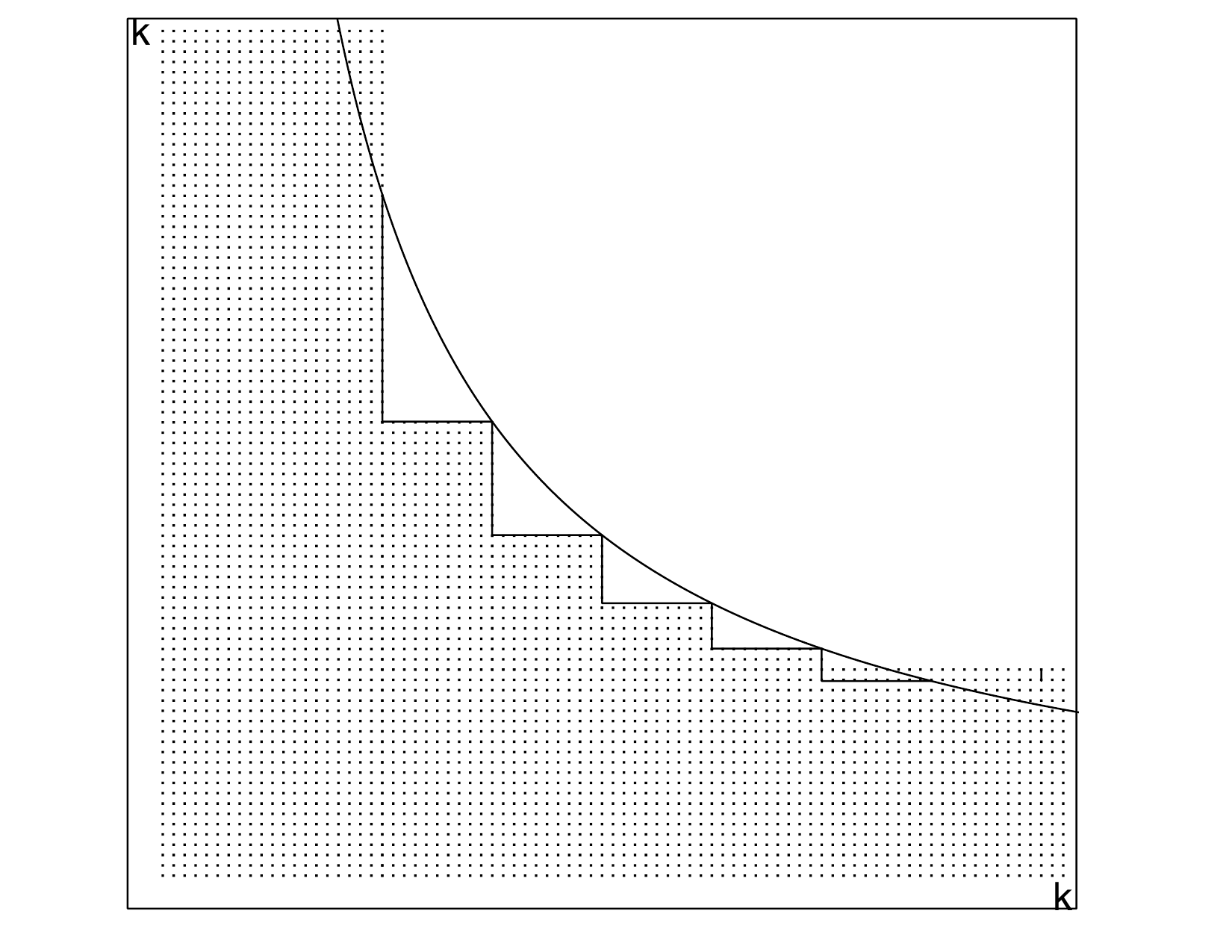}}}
\caption{Both axis carry the indices of the basis functions spanning the projection space $L$, and 
point in the plane refers to a product of two projections. 
Products of projections on pairs of basis functions in the shaded area are included in the third
order influence function. The step function refers to the partitions of the indices as in (\ref{EqDecomposeProjections}).}
\label{Figure}
\end{figure}

Before deriving an optimal value of $D$, we introduce 
the  $m$th order estimator for general $m\ge 3$.
Again we take the estimator of Theorem~\ref{TheoremParametricRate} as starting point,
but modify the higher order influence functions $\tilde\chi^{(j)}_p$, for $j=4,\ldots,m$,
similar and in addition to the modification of the third order influence function.
For given $j$ the former influence function is given in Theorem~\ref{TheoremMARHigherOrderIF} 
(with $m$ of the theorem taken equal to $j$), and is based on a product of $j-1$ projection kernels.
We modify this in two steps.
For each of the $j-2$ contiguous pairs of kernels ($(1st,2nd), (2nd,3rd),\ldots, ((j-2)th,(j-1)th)$) 
we form a new kernel by truncating the pair at the hyperbola as described previously for the third
order kernel,  and truncating all other kernels at $n$. Next the modified $j$th order kernel
is the sum of the resulting $j-2$ kernels. More formally, the modified $j$th order kernel is equal to 
\begin{equation}
\label{EqHigherOrderCutIF}
\chi_p^{(j)}(X_1,\ldots,X_j)=\sum_{i=1}^{j-2}\chi_p^{(j,i)}(X_1,\ldots,X_j),
\end{equation}
where $\chi_p^{(j,i)}(X_1,\ldots,X_j)$ is the symmetrized, degenerate (relative to $L_2(p)$) part of the variable,
for $i=1,\ldots, j-2$, written in the notation of Theorem~\ref{TheoremParametricRate},
\begin{align*}
&j!(-1)^{j-1}\, \tilde Y_1 \Pi_{1,2}^{(0,n]}\AA_2\times\cdots\times\AA_{i-1}\Pi_{i-1,i}^{(0,n]}\AA_i\times\\
&\times\Bigl[\dsum_{{(r,s):r+s\le D}\atop{ \vee r=0 \vee s=0}}\Pi_{i,i+1}^{(k_{r-1},k_r]}\AA_{i+1}\Pi_{i+1,i+2}^{{(l_{s-1},l_s]}}\Bigr]
\AA_{i+2}\Pi_{i+2,i+3}^{(0,n]}\times\cdots\times\AA_{j-1}\Pi_{j-1,j}^{(0,n]}\tilde A_j.
\end{align*}
For $j=3$ there is only one pair of kernels, and the construction reduces
to the modification (\ref{EqDoubleThirdOrderIF}) as discussed previously. 

We assume that the projections $\Pi_p^{(0,l]}$ and
$\Pi_{\hat p}^{(0,l]}$ map $L_s\bigl(\aa\bb g)$
to $L_s\bigl(\aa\bb g)$, for every $s\in \bigl[r/(r-1),r\bigr]$,
with uniformly bounded norms.

\begin{theorem}
\label{TheoremMinimaxRate}
The estimator (\ref{EqEstimatorGeneral}) for $m\ge 3$
with the influence functions $\tilde \chi_p^{(j)}$ and $\chi_p^{(j)}$ given in
(\ref{EqMARIFApprox}) and (\ref{EqIFsConcrete}) for $j=1,2$, respectively, and
in  (\ref{EqHigherOrderCutIF}) for $j\ge 3$, and with $\Pi_p^{(0,l]}$ kernels of orthogonal projections in $L_2\bigl(\aa\bb g)$
satisfying (\ref{EqProjectionKernel}) with 
$\sup_x \Pi_{\hat p}^{(0,l]}(x,x)\lesssim l$, satisfies, for $r\ge 2$ (and $r/(r-2)=\infty$ if $r=2$),
\begin{align*}
\hat\E_p&\hat \chi_n-\chi(p)
=O\Bigl(\|\hat a-a\|_{r}\|\hat b-b\|_{r}
\|\hat g-g\|_{\frac{mr}{r-2}}^{m-1}\Bigr)\\
& +O\Bigl(\Bigl\|(I-\Pi_p^{(0,k]})\frac{\hat a- a}{\aa}\Bigr\|_2
\Bigl\|(I-\Pi_p^{(0,k]})\frac{\hat b- b}{\bb}\Bigr\|_2\Bigr),\\
&+O\Bigl(\sum_{r=1}^R 
\Bigl\|(I-\Pi_{\hat p}^{(0,k_{r-1}]})\Bigl(\frac{\hat a-a}\aa\Bigr)\Bigr\|_{r}
\Bigl\|(I-\Pi_{\hat p}^{(0,l_{D-r}]})\Bigl(\frac{\hat b-b}\bb\Bigr)\Bigr\|_{r}
\|\hat g-g\|_{\frac r{r-2}} \Bigr)\\
&+O\Bigl(R\Bigl\|(I-\Pi_{\hat p}^{(0,n]})\frac{\hat a- a}{\aa}\Bigr\|_r
\Bigl\|(I-\Pi_{\hat p}^{(0,n]})\frac{\hat b- b}{\bb}\Bigr\|_r\|\hat g-g\|_{\frac{mr}{r-2}}^2\Bigr),\\
&\hat\var_p \hat\chi_n\lesssim \frac{1+R^2\e_n^4}n+\frac{k(1+R\e_n^2)}{n^2}
+\frac{D 2^{(\frac1\a\vee\frac1\b)D}}{n}.
\end{align*}
Here $\e_n$ is the maximum of the three rates $\|a-\hat a\|_4$, $\|b-\hat b\|_4$ and 
$\|g-\hat g\|_\infty$ and the constant $c$ depends on the supremum norms
of $\aa,\bb, 1/a, b, p/\hat p,g/\hat g$, the norms of the operators 
$\Pi_{\hat p}^{(0,l]}: L_s(\aa\bb \hat g)\to L_s(\aa\bb \hat g)$, for $l=1,\ldots, k$ only, and $r\ge 2$,
\end{theorem}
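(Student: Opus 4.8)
The plan is to treat the bias $\hat\E_p\hat\chi_n-\chi(p)$ and the conditional variance $\hat\var_p\hat\chi_n$ separately, both through the decomposition (\ref{EqBiasVarianceUEstimator}), where now $\chi_{\hat p}=\tilde\chi_{\hat p}^{(1)}+\thalf\tilde\chi_{\hat p}^{(2)}+\sum_{j=3}^m(1/j!)\,\chi_{\hat p}^{(j)}$ is the influence function in (\ref{EqEstimatorGeneral}). The first point is that the projection $p\mapsto\tilde p$ fixes $\hat p$: when $a=\hat a$ the projected function $(a-\hat a)/\aa$ vanishes, so $\tilde a=\hat a$, and likewise $\tilde b=\hat b$, whence $\tilde{\hat p}=\hat p$. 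Since $\chi_p^{(1)}$ has vanishing mean under the law it indexes, evaluating $\tilde\chi(p)=\chi(\tilde p)+P\chi_{\tilde p}^{(1)}$ at $p=\hat p$ gives $\tilde\chi(\hat p)=\chi(\hat p)$, so
$$\hat\E_p\hat\chi_n-\chi(p)=\bigl[\tilde\chi(\hat p)-\tilde\chi(p)+P^m\chi_{\hat p}\bigr]+\bigl[\tilde\chi(p)-\chi(p)\bigr].$$
The second bracket is exactly the approximation error of the functional $\tilde\chi$, which Theorem~\ref{TheoremCharacTildes} (applied with $L$ the full $k$-dimensional space) bounds by $\|(I-\Pi_p^{(0,k]})((\hat a-a)/\aa)\|_2\,\|(I-\Pi_p^{(0,k]})((\hat b-b)/\bb)\|_2$, the second term in the assertion. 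Everything else is the analysis of the first bracket, the bias incurred by representing $\tilde\chi(\hat p)-\tilde\chi(p)$ through the \emph{truncated} influence functions.

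First I would replace each truncated kernel $\chi_{\hat p}^{(j)}$, $j\ge3$, by the exact $j$th-order influence function $\tilde\chi_{\hat p}^{(j)}$ of Theorem~\ref{TheoremMARHigherOrderIF}; since the latter exactly represent the von Mises expansion of $\tilde\chi$, the quantity $\tilde\chi(\hat p)-\tilde\chi(p)+\sum_{j=1}^m(1/j!)P^m\tilde\chi_{\hat p}^{(j)}$ is the order $(m+1)$ remainder of that expansion. The key computation is that the $U$-statistic expectation $P^m$ of a product kernel $\tilde A_1\Pi_{1,2}\AA_2\cdots\AA_{m-1}\Pi_{m-1,m}\tilde Y_m$ factorises: by Lemma~\ref{LemmaMARMakingDegenerate}(iii) each inner block $\Pi\AA\Pi$ collapses under $P$ to a single $\Pi$, while the stationarity equations (\ref{EqApproxa})--(\ref{EqApproxb}) defining $\tilde a,\tilde b$ annihilate the would-be leading contributions at the two ends; what survives carries one factor $\hat a-a$ and one factor $\hat b-b$ from the ends $\tilde A_1,\tilde Y_m$, and one factor $\hat g-g$ from each of the $m-1$ mismatches between the kernels $\Pi_{\hat p}$ used in the estimator and the $\Pi_p$ arising in the collapse. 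H\"older's inequality and the standing boundedness assumption then give the first term $\|\hat a-a\|_r\|\hat b-b\|_r\|\hat g-g\|_{mr/(r-2)}^{m-1}$.

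The truncation produces the remaining two bias terms. By construction $\tilde\chi_{\hat p}^{(j)}-\chi_{\hat p}^{(j)}$ is supported on products of base elements lying \emph{above} the hyperbola $i^\a j^\b\sim 2^Dn^{\a+\b}$ in the truncated pair of slots, or \emph{above} $n$ in the remaining slots; applying $P^m$ and the same collapsing identities, a pair above the hyperbola with block index $r$ contributes the two tail factors $(I-\Pi_{\hat p}^{(0,k_{r-1}]})((\hat a-a)/\aa)$ and $(I-\Pi_{\hat p}^{(0,l_{D-r}]})((\hat b-b)/\bb)$, and summing over $r=1,\ldots,R$ yields the third term, while truncating a non-hyperbola slot at $n$ leaves an $(I-\Pi_{\hat p}^{(0,n]})$ tail at each end together with two further $\hat g-g$ factors, the overall multiplicity $R$ (the number of blocks of that slot) giving the fourth term; one checks that only the leading $(r+s=D)$ contributions matter, so no worse tails appear.

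For the variance, conditionally on the preliminary sample the statistic $\UU_n\chi_{\hat p}$ is a $U$-statistic with a fixed kernel, so its variance is, by the Hoeffding/ANOVA decomposition (cf.\ (\ref{EqVarDegU}) and the $U$-statistic lemmas of Section~\ref{SectionU}), a sum over $j=1,\ldots,m$ of $\binom{n}{j}^{-1}$ times the squared $L_2(P^j)$-norm of the $j$th term of its ANOVA decomposition; the discrepancy between degeneracy under $\hat p$ and under $P$ only produces lower-order ANOVA pieces whose size is governed by the (small) conditional means of $\tilde Y,\tilde A$ and is absorbed. The first-order kernel is bounded, giving $1/n$; the second-order kernel $\tilde\chi_{\hat p}^{(2)}$ has squared norm of order $\trace\Pi_{\hat p}\asymp k$, giving $k/n^2$; and for $j\ge3$ each $\chi_{\hat p}^{(j,i)}$ expands, in a basis orthonormal in $L_2(\aa\bb\hat g)$, over a set of index tuples of cardinality $\lesssim n^{j-3}\sum_{r+s\le D}k_rl_s$, and since $k_rl_s=n^22^{r/\a+s/\b}$ the geometric sums give $\sum_{r+s\le D}k_rl_s\lesssim Dn^22^{(1/\a\vee1/\b)D}$ (using the trace bound $\sup_x\Pi_{\hat p}^{(0,l]}(x,x)\lesssim l$ and the standing boundedness assumption to control the coefficients), so dividing by $\binom{n}{j}\asymp n^j$ gives $D2^{(1/\a\vee1/\b)D}/n$ for each of the finitely many $(j,i)$. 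The main obstacle is the first bracket of the bias: one must track precisely how $P^m$ of the block-partitioned, hyperbola-truncated product kernels decomposes, so that cutting the high-index products costs exactly the tail factors $(I-\Pi_{\hat p}^{(0,k_{r-1}]})(\hat a-a)$ and $(I-\Pi_{\hat p}^{(0,l_{D-r}]})(\hat b-b)$ and nothing larger; the double-block bookkeeping, together with the repeated use of the stationarity and orthogonality of $\tilde a,\tilde b$ and of the collapsing identity $\Pi\AA\Pi\to\Pi$, is the delicate part.
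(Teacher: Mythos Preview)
Your overall bias decomposition is correct and coincides with the paper's: splitting off the approximation error $\tilde\chi(p)-\chi(p)$ gives the second term via Theorem~\ref{TheoremCharacTildes}, and the remaining piece naturally separates into (a) the von Mises remainder built from the \emph{full} influence functions $\tilde\chi_{\hat p}^{(j)}$, which is exactly the bias of the estimator in Theorem~\ref{TheoremParametricRate} shifted by $\hat R$ and hence yields the first term, and (b) the ``truncation cost'' $\sum_{j\ge3}(1/j!)P^j\bigl[\chi_{\hat p}^{(j)}-\tilde\chi_{\hat p}^{(j)}\bigr]$.

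The gap is in (b). For $m=3$ your sketch is essentially the paper's argument in Section~\ref{SectionProofofTheoremMinimaxRate3}. But for $m>3$ the truncation cost \emph{cannot} be bounded order by order: the paper states explicitly that ``not every of the additional bias terms of orders $j=4,\ldots,m-1$ is individually small, but the sum is small due to a cancellation among these terms.'' Your appeal to ``the same collapsing identities'' does not work here: the identity $\int\hat\Pi(z_1,z_2)\AA_2\hat\Pi(z_2,z_3)\,d\hat\rho(z_2)=\hat\Pi(z_1,z_3)$ holds for integration against $\hat\rho$, whereas after exploiting degeneracy the middle variables are integrated against $\rho-\hat\rho$, so no collapse occurs and each middle slot contributes a factor $\|\hat g-g\|$. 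A term at order $j$ with, say, the hyperbolic pair in the interior and $\Pi^{(0,n]}$ at the remaining $j-3$ slots then carries $j-2$ factors of $\|\hat g-g\|$, not the single factor your third term allows. The paper resolves this with a combinatorial bookkeeping in string notation and three ``reduction formulas'' (display (\ref{EqReductionFormulas}) in the supplement) that strip outer $0$-symbols at the cost of changing the outermost integration from $\hat\rho$ to $\rho-\hat\rho$; this shows that left-out strings at order $j$ cancel against specific strings at orders $j+1$ and $j+2$, and only strings of type $\plowhat 1\cdots 1\lowhat$ (yielding the fourth term, with exactly two $\|\hat g-g\|$ factors), the single $\overline H$ string (yielding the third term), and boundary strings at orders $m-1,m$ (absorbed in the first term) survive. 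Without this cancellation mechanism your bound on the truncation cost is off by powers of $\|\hat g-g\|$ that grow with $m$.

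A smaller point on the variance: your treatment of the mismatch between degeneracy under $\hat P$ and under $P$ (``lower-order ANOVA pieces\ldots absorbed'') is not what the paper does and would require separate justification. The paper instead changes measure at the outset via Lemma~\ref{LemmaVarianceUQ}, paying a factor $(1+\|p/\hat p\|_\infty)^{2j}$, and then works entirely under $\hat P$ where the kernels are genuinely degenerate and (\ref{EqVarDegU}) applies directly. Your index-counting for the $j\ge3$ contribution is otherwise in the right spirit; the paper obtains the same bound by integrating out the $\Pi^{(0,n]}$ factors one by one via Lemma~\ref{LemmaVarianceProductOfKernels} (each contributing $\sup_x\Pi^{(0,n]}(x,x)\lesssim n$, cancelled by the extra $1/n$ from higher-order degeneracy) until only the hyperbolic pair remains, which is then handled as in the $m=3$ case.
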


A proof of the theorem is presented in
Sections~\ref{SectionProofofTheoremMinimaxRate3} and~\ref{SectionProofofTheoremMinimaxRate}.

The first two terms in the bias are the same as in Theorem~\ref{TheoremParametricRate};
the third and fourth terms are the price paid for cutting out
terms from the influence function. The benefit is a reduced
variance. We shall show that the boundary parameter $D$ can be
chosen such that the third term in the variance (resulting from
the third and higher order parabolic parts of the influence function) is not bigger
than the second term, while the increase in bias is negligible.
The number $R$ will be logarithmic and $\e_n$ a negative power of $n$, the product
$R\e_n^2$ will tend to zero and the first two terms of the variance will be of 
order $1/n$ and $k/n^2$.

Assume that the functions $a$ and $b$ and their estimates are known to
belong to models that are well approximated by the base functions
$e_1,e_2,\ldots$ in the sense that, for $p\in\{p,\hat p\}$, and every value $l$ in
one of the two grids (\ref{EqDefinitionGridk})-(\ref{EqDefinitionGridl}),
\begin{align}
\label{EqApproxByPiRefined}
\Bigl\|(I-\Pi_p^{(0,l]})\Bigl(\frac{\hat a-a}\aa\Bigr)\Bigr\|_r
&\lesssim  \Bigl(\frac 1 l\Bigr)^{\a/d},\\
\Bigl\|(I-\Pi_p^{(0,l]})\Bigl(\frac{\hat b-b}\bb\Bigr)\Bigr\|_r
&\lesssim  \Bigl(\frac 1 l\Bigr)^{\b/d}.
\label{EqApproxByPiRefinedTwo}
\end{align}
Then the second term in the bias is of the
order $(1/k)^{\a/d+\b/d}$, as in Theorem~\ref{TheoremParametricRate},
which is smaller than the minimax rate $n^{-(2\a+2\b)/(2\a+2\b+d)}$ for
\begin{equation}
\label{EqDefinitionk}
k\sim n^{2d/(2\a+2\b+d)}.
\end{equation}
With this choice of $k$, the upper bound on the variance
is of the square minimax rate
$n^{-(4\a+4\b)/(2\a+2\b+d)}$ if $D$ is chosen to satisfy
\begin{equation}
\label{EqDefinitionD}
2^{(\frac1\a\vee\frac1\b)D}\sim \frac{1}{\log n}n^{(d-2\a-2\b)/(d+2\a+2\b)}.
\end{equation}
Furthermore, under (\ref{EqDefinitionk})
the numbers $R,S$  of grid points are of the order $\log n$. 

In the third term of the bias we  apply assumptions 
(\ref{EqApproxByPiRefined})-(\ref{EqApproxByPiRefinedTwo}) and the
identity $k_{r-1}^\a l_{D-r}^\b \sim n^{\a+\b}2^{D}$, which results from
(\ref{EqDefinitionGridk})-(\ref{EqDefinitionGridl}), 
to see that the third term of the bias is of order
$$\sum_{r=1}^R \Bigl(\frac 1{k_{r-1}}\Bigr)^{\a/d}\Bigl(\frac 1{l_{D-r}}\Bigr)^{\b/d}\|\hat g-g\|_{r/(r-2)}
\le R \Bigl(\frac 1{n^{\a+\b}2^D}\Bigr)^{1/d}\|\hat g-g\|_{r/(r-2)}.$$
If the convergence rate of $\hat g$ is $n^{-\g/(2\g+d)}$,
then, for the choice of $D$ given in (\ref{EqDefinitionD}),
this can (by a calculation) seen to be of 
smaller order than the minimax rate $n^{-(2\a+2\b)/(2\a+2\b+d)}$
if $\g$ is large enough that 
\begin{equation}
\label{EqfSmoothEnough}
\frac{\g}{2\g+d}> \Bigl(\frac{\a\vee\b}{d}\Bigr)
\Bigl(\frac{d-2\a-2\b}{d+2\a+2\b}\Bigr).
\end{equation}
The fourth term in the bias can by a similar analysis be seen to be of the order
$$R\Bigl(\frac 1{n}\Bigr)^{\a/d}\Bigl(\frac 1{n}\Bigr)^{\b/d}\|\hat g-g\|_{(m-2)r/(r-2)}^2.$$
Again this is smaller than the minimax rate if $\g$ satisfies assumption (\ref{EqfSmoothEnough}).

Finally, if the convergence rates of $\hat a$ and $\hat b$
are $n^{-\a/(2\a+d)}$ and $n^{-\b/(2\b+d)}$, then the first term
in the upper bound of the bias is of the order
$$\Bigl(\frac 1n\Bigr)^{\a/(2\a+d)+\b/(2\b+d)+(m-1)\g/(2\g+d)}.$$
We choose $m$ large enough so that this is of smaller order than the preceding terms.
In particular, we can choose it so that this is smaller than the minimax rate.

We summarize this in the following corollary, which is the most advanced result of the paper.

\begin{corollary}
If (\ref{EqApproxByPiRefined})--(\ref{EqfSmoothEnough}) hold, 
and $\Pi_p^{(0,l]}$ are kernels of orthogonal projections in $L_2\bigl(\aa\bb g)$ satisfying (\ref{EqProjectionKernel}) with 
$\sup_x \Pi_{\hat p}^{(0,l]}(x,x)\le l$, then the $m$th order estimator with the kernels
(\ref{EqHigherOrderCutIF}) for $j\ge 3$ and sufficiently large $m$ and suitable initial estimators,  
attains the rate $n^{-(2\a+2\b)/(2\a+2\b+d)}$ for estimating $\chi(p)$.
\end{corollary}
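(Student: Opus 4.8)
The plan is to read the corollary off Theorem~\ref{TheoremMinimaxRate} by substituting the parameter choices (\ref{EqDefinitionk}) for $k$, (\ref{EqDefinitionD}) for $D$, and a sufficiently large fixed order $m$, and then checking that each of the four bias terms in that theorem is $O\bigl(n^{-(2\a+2\b)/(2\a+2\b+d)}\bigr)$ and that the variance bound is $O\bigl(n^{-(4\a+4\b)/(2\a+2\b+d)}\bigr)$, i.e.\ the square of the claimed rate. Most of this computation is already carried out in the paragraphs preceding the corollary; what remains is to assemble it. First I would fix preliminary estimators $\hat a,\hat b,\hat g$ attaining the H\"older rates $n^{-\a/(2\a+d)}$, $n^{-\b/(2\b+d)}$, $n^{-\g/(2\g+d)}$ in the relevant $L_s$-norms (these exist under the stated a-priori smoothness, cf.\ the discussion following Theorem~\ref{TheoremParametricRate}), and record the two elementary consequences of (\ref{EqDefinitionk})--(\ref{EqDefinitionD}) in the regime $(\a+\b)/2<d/4$: the grids (\ref{EqDefinitionGridk})--(\ref{EqDefinitionGridl}) contain $R,S\sim\log n$ points, and $D\sim\log n$.

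Next I would dispatch the three variance terms of Theorem~\ref{TheoremMinimaxRate}. The term $k/n^2$ equals $n^{-(4\a+4\b)/(2\a+2\b+d)}$ by the definition (\ref{EqDefinitionk}) of $k$; the term $1/n$ is at most this because $(\a+\b)/2<d/4$ forces $(4\a+4\b)/(2\a+2\b+d)<1$; and the term $D\,2^{(\frac1\a\vee\frac1\b)D}/n$ becomes, after inserting (\ref{EqDefinitionD}) and using $D\sim\log n$, of order $n^{-(4\a+4\b)/(2\a+2\b+d)}$ up to a bounded factor. Hence $\hat\var_p\hat\chi_n$ is of the square minimax rate.

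For the bias I would treat the four terms in turn, using (\ref{EqApproxByPiRefined})--(\ref{EqApproxByPiRefinedTwo}) for the projection defects and the identity $k_r^\a l_s^\b\sim 2^{r+s}n^{\a+\b}$, which along the cut-off gives $k_{r-1}^\a l_{D-r}^\b\sim 2^D n^{\a+\b}$. The second term is of order $(1/k)^{(\a+\b)/d}=n^{-(2\a+2\b)/(2\a+2\b+d)}$, i.e.\ exactly the minimax rate. The third term telescopes to at most $R\,(n^{\a+\b}2^D)^{-1/d}\|\hat g-g\|_{r/(r-2)}$; substituting (\ref{EqDefinitionD}) and the rate of $\hat g$ gives a power of $n$ that is strictly below the minimax rate precisely when (\ref{EqfSmoothEnough}) holds, the strict margin absorbing the $R\sim\log n$ factor. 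The fourth term, $R\,n^{-(\a+\b)/d}\|\hat g-g\|^2_{(m-2)r/(r-2)}$, carries a second factor $n^{-\g/(2\g+d)}$ and so is dominated under the same condition. Finally the first bias term equals $n^{-\a/(2\a+d)-\b/(2\b+d)-(m-1)\g/(2\g+d)}$, whose exponent is affine in $m$ with positive slope $\g/(2\g+d)$; fixing $m$ large enough makes it negligible compared with the other terms. Combining and applying Markov's inequality yields $\hat\chi_n-\chi(p)=O_P\bigl(n^{-(2\a+2\b)/(2\a+2\b+d)}\bigr)$.

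The only step that needs genuine care, rather than mechanical substitution, is the verification that the smoothness threshold (\ref{EqfSmoothEnough}) is exactly what the third (and fourth) bias terms require. I would check that $k_{r-1}^\a l_{D-r}^\b\sim 2^D n^{\a+\b}$ holds uniformly in $r$ along the hyperbola, so that the sum over $r$ costs only the factor $R\sim\log n$, and then that the exponent obtained by combining $(n^{\a+\b}2^D)^{-1/d}$ with the rate $n^{-\g/(2\g+d)}$ of $\hat g$ coincides, after elementary algebra and the choice (\ref{EqDefinitionD}), with the inequality $\g/(2\g+d)>(\frac{\a\vee\b}{d})(\frac{d-2\a-2\b}{d+2\a+2\b})$ of (\ref{EqfSmoothEnough}). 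I would display this algebraic equivalence explicitly, as it is what pins down the scope of the corollary; every other term is a direct application of Theorem~\ref{TheoremMinimaxRate}.
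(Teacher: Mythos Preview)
Your proposal is correct and follows exactly the paper's approach: the corollary is explicitly presented as a summary of the discussion immediately preceding it, which substitutes the choices (\ref{EqDefinitionk}) for $k$ and (\ref{EqDefinitionD}) for $D$ into the bias and variance bounds of Theorem~\ref{TheoremMinimaxRate}, and then checks each of the four bias terms and three variance terms against the target rate, exactly as you outline. Your identification of the one non-mechanical step---verifying that the third and fourth bias terms fall below the minimax rate precisely when (\ref{EqfSmoothEnough}) holds, with the strict inequality absorbing the $R\sim\log n$ factor---matches what the paper flags (``can by a calculation seen to be of smaller order\ldots if $\g$ is large enough that (\ref{EqfSmoothEnough})'').
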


\section{Proofs}
\label{SectionProofs}

\subsection{Proof of Theorem~\ref{TheoremSecondOrderEstimator}}
\label{SectionProofofTheoremSecondOrderEstimator}
Write $\hat\Pi$ and $\Pi$ for $\Pi_{\hat p}$ and $\Pi_p$, respectively,
for both the kernels and the corresponding projection operators,
 and drop $p$ also in $\hat \E_p$ and $\hat\var_p$.
From (\ref{EqMARBias}) and (\ref{EqMARAdhocSecondOrderIF}) we have
\begin{align*}
&\hat\E \hat\chi_n-\chi(p)\\
&\quad=-\int (\hat a-a)(\hat b-b)\,g\,d\n
-\hat\E A_1\bigl(Y_1-\hat b(Z_1)\bigr)\bigl(A_2\hat a(Z_2)-1\bigr) 
\hat\Pi(Z_1,Z_2)\\
&\quad=-\int (\hat a-a)(\hat b-b)\,g\,d\n
+\int\!\!\int\bigl[(\hat a-a)\times(\hat b-b)\bigr]\,\bigl(g\times g\bigr)\,
\hat\Pi\,d\n\times\n.
\end{align*}
The double integral on the far right with $\hat\Pi$ replaced by $\Pi$
can be written as the single integral 
$\int (\hat a-a)\Pi(\hat b-b)\,g\,d\n$, for $\Pi(\hat b-b)$ the
image of $\hat b-b$ under the projection $\Pi$.
Added to the first integral on the right this
gives $-\int (\hat a-a)(I-\Pi)(\hat b-b)\,g\,d\n$, which is bounded
in absolute value by the second term in the upper bound for the bias.

Replacement of $\hat\Pi$ by $\Pi$ in the double integral 
gives a difference
\begin{align*}
&\int\!\!\int\bigl[(\hat a-a)\times(\hat b-b)\bigr]\,g\times g\,
(\hat\Pi-\Pi)\,d\n\times\n\\
&\qqqquad=\int(\hat a-a)\left(\hat\Pi\Bigl((\hat b-b)\frac g{\hat g}
\Bigr)-\Pi(\hat b-b)\right)\,g\,d\n\\
&\qqqquad\le\|\hat a-a\|_s
\left\|\hat\Pi\Bigl((\hat b-b)\frac g{\hat g}
\Bigr)-\Pi(\hat b-b)\right\|_{r,\hat g} \|g/\hat g\|_\infty^{1/r},
\end{align*}
by H\"older's inequality, for a conjugate pair $(r,s)$. 
Considering $\hat \Pi$ as the projection in $L_2(\hat g)$ with weight 1,
and $\Pi$ as the weighted projection 
in $L_2(\hat g)$ with weight function $\hat w=g/\hat g$,
we can apply Lemma~\ref{LemmaDifferenceOfProjections}(i) (with $q=s/r$ and $rp=s/(s-2)$)
to see that this is bounded in absolute value by
$$\|\hat a-a\|_s
\|\hat\Pi\|_{s,\hat g}\|\Pi\|_{s,\hat g} \|\hat b-b\|_{s,\hat g}\|\hat w-1\|_{s/(s-2),\hat g}\|w\|_\infty^{1/r}.$$
Because $\hat w$ is assumed bounded away from 0 and infinity,
this is of the same order as the first term in the upper bound on the
bias (if $r$ replaces $s$).

Because the function $\chi_{\hat p}^{(1)}$ is uniformly bounded,
the (conditional) variance of $\UU_n\chi_{\hat p}^{(1)}$ is of the 
order $O(1/n)$. Thus for the variance bound it suffices
to consider the (conditional) variance of $\UU_n\chi_{\hat p}^{(2)}$.
In view of Lemma~\ref{LemVarUBound}
this is bounded above by a multiple of
$$\frac 1{n^2}P^2(\chi_{\hat p}^{(2)})^2+\max_{i\in\{1,2\}}\frac 1n P\Bigl(\E \bigl(\chi_{\hat p}^{(2)}\given X_i\bigr)\Bigr)^2.$$
The variables $A\bigl(Y-\hat b(Z)\bigr)$ and $\bigl(A\hat a(Z)-1\bigr)$
are uniformly bounded. Hence the first term 
is bounded above by a multiple of 
$n^{-2} \int \hat \Pi^2\,(\hat g\times \hat g)\,d\n\times\n$,
which is equal to $k/n^2$, by Lemma~\ref{LemmaNormProjection}.
The conditional expectations in the second  term can be written
$A_1\bigl(Y_1-\hat b(Z_1)\bigr) \Pi_{\hat p}\bigl(\hat a/a-1)g/\hat g\bigr)(Z_1)$ and
$\Pi_{\hat p}\bigl(\hat b-b)g/\hat g\bigr)(Z_2)(A_2\hat a(Z_2)-1) $, for $i=1$ and $i=2$,
respectively, where $\Pi_p$ is the operator defined by the kernel. Because the second
moments of these variables under $\hat p$ are uniformly bounded, the second term
contributes  a factor of order $1/n$ only.

\subsection{Proof of Theorems~\ref{TheoremParametricRate} 
and~\ref{TheoremParametricRateEfficient}}
\label{SectionProofofTheoremParametricRate}
Let $\hat A$ and $\hat Y$ be $\tilde A$ and $\tilde Y$ 
as in (\ref{EqDeftildeYtildeA}) with
$a$ and $b$ in their definitions replaced by
$\hat a$ and $\hat b$. Because $\hat a$ and $\hat b$ are
projected onto themselves under the map $(a,b,g)\mapsto (\tilde a,\tilde b,g)$
(see Theorem~\ref{TheoremCharacTildes}),
we actually obtain the same variables by replacing $\tilde a$ and $\tilde b$
by $\hat a$ and $\hat b$, respectively:
$\hat A=\bigl(A\hat a(Z)-1\bigr)\bb(Z)$ and 
$\hat Y=A\bigl(Y-\hat b(Z)\bigr)\aa(Z)$. Furthermore,
let $\Pi$ and $\hat \Pi$ denote the operators $\Pi_p$ and $\Pi_{\hat p}$,
respectively, and $\Pi_{i,j}$ and $\hat \Pi_{i,j}$ their kernels
evaluated at $(Z_i,Z_j)$.

By explicit calculations,
\begin{equation}
\label{EqproofFirstOrder}
\chi(\hat p)+\hat\E_p \tilde\chi_{\hat p}^{(1)}-\chi(p)
=-\int (\hat a-a)(\hat b-b)\, g\,d\n
=\hat\E \hat A_1\Pi_{1,2}\hat Y_2 -\hat R,
\end{equation}
for $\hat R$ defined by
$$\hat R=\int \Bigl(\frac{\hat a-a}\aa\Bigr)(I-\Pi)
\Bigl(\frac{\hat b-b}\bb\Bigr) \aa\bb g\,d\n.$$
The variable $\hat R$ is bounded by the second term in 
the expression for $\hat\E_p\hat \chi_n-\chi(p)$ in the statement of the theorem.
We next show by induction on $m$ that
\begin{align}
&\label{EqInductionBias}
\qquad\qqqquad\qqqquad\hat R+\chi(\hat p)+\hat\E \tilde\chi_{\hat p}^{(1)}
+\cdots+\frac 1{m!}\hat\E \tilde\chi_{\hat p}^{(m)}-\chi(p)\\
&=(-1)^{m-1}\hat\E \hat A_1(\hat\Pi-\Pi)_{1,2}
\AA_2(\hat\Pi-\Pi)_{2,3}\times\cdots\times
\AA_{m-1}(\hat\Pi-\Pi)_{m-1,m}\hat Y_m.\nonumber
\end{align}
The analysis of the bias can then be concluded by showing
that the right side of (\ref{EqInductionBias})
is of the order as the first term
given in the theorem.

Equation (\ref{EqproofFirstOrder}) 
and the definition of $\tilde \chi_p^{(2)}$ readily show
that identity (\ref{EqInductionBias})
is true for $m=2$. We proceed to general $m$ by induction.
Relative to its value for $m$ the left
side receives for $(m+1)$ the extra term
$\hat\E \chi_{\hat p}^{(m+1)}/(m+1)!$, which is equal to
$(-1)^m$ times $\hat\E \hat A_1\hat\Pi_{1,2}\AA_2\hat\Pi_{2,3}\times\cdots\times
\AA_m\hat\Pi_{m,m+1}\hat Y_{m+1}$ minus
a sum of terms resulting from projections of
this leading term. This extra term without the factor $(-1)^m$
(but including the projections) can be written (cf.\
(\ref{EqIFsConcrete}) and (\ref{EqMakeDegenerate}))
\begin{equation}
\label{EqHulpBias}
\sum_{i=0}^{m-1} {m-1\choose i}
\hat\E \hat A_1\hat\Pi_{1,2}\AA_2\hat\Pi_{2,3}\times\cdots\times
\AA_{m-i}\hat\Pi_{m-i,m-i+1}\hat Y_{m-i+1}(-1)^{i}.
\end{equation}
To prove the induction hypothesis for $m+1$ it suffices
to show that this is equal to 
\begin{align}
\nonumber
&\hat\E \hat A_1(\hat\Pi-\Pi)_{1,2}
\AA_2(\hat\Pi-\Pi)_{2,3}\times\cdots\times
\AA_{m-1}(\hat\Pi-\Pi)_{m-1,m}\hat Y_m\\
&\qquad+\hat\E \hat A_1(\hat\Pi-\Pi)_{1,2}
\AA_2(\hat\Pi-\Pi)_{2,3}\times\cdots\times
\AA_{m}(\hat\Pi-\Pi)_{m,m+1}\hat Y_{m+1}.
\label{EqSumTwoTerms}
\end{align}
To achieve this we expand the two terms of the preceding
display into sums of expressions of the form,
with each $K_{j,j+1}^{(j)}$ equal to $\hat\Pi_{j,j+1}$ or $\Pi_{j,j+1}$
and $l$ the number of $j$ for which the first alternative is true,
\begin{equation}
\label{EqExpansionProduct}
B_l:=(-1)^{m-1-l}\,\hat\E \hat A_1 K_{1,2}^{(1)}
\AA_2 K_{2,3}^{(2)}\times\cdots\times\AA_{m-1}K_{m-1,m}^{(m-1)}\hat Y_m,
\end{equation}
and of the same form with $m+1$ replacing $m$ for the second term
of (\ref{EqSumTwoTerms}).  As the notation suggests the
expression in (\ref{EqExpansionProduct}) depends on $l$ (and $m$,
but this is fixed), but not on which $K$ are equal to $\hat\Pi$ or $\Pi$.
To see this we use that $\Pi$ is a projection
onto $L$ in $L_2(\aa\bb g)$, so that 
$\int \Pi_{1,2}\, \g(z_2)\,(\aa\bb g)(z_2)\,d\n(z_2)=\g(z_1)$ for every $\g\in L$;
and $\hat \Pi$ is also a projection onto $L$, so that
as a function of one argument $\hat\Pi_{1,2}$ is contained in $L$.
This observation yields the identities, for
$K$ equal to $\hat \Pi$ or $\Pi$,
$$\hat\E_{Z_j} \Pi_{j-1,j}\AA_jK_{j,j+1}=K_{j-1,j+1}
=\hat\E_{Z_j} K_{j-1,j}\AA_j\Pi_{j,j+1}.$$
This allows to reduce (\ref{EqExpansionProduct}) to
\begin{align*}
B_l&=(-1)^{m-1-l}\,\hat\E \hat A_1 \hat\Pi_{1,2}
\AA_2\hat\Pi_{2,3}\times\cdots\times\AA_{l}\hat\Pi_{l,l+1}\hat Y_{l+1},\qquad l\ge 1,\\
B_0&=(-1)^{m-1}\hat\E \hat A_1 \Pi_{1,2}\hat Y_2.
\end{align*}
Thus after expanding the two terms of (\ref{EqSumTwoTerms}) 
in the quantities $B_l$, and simplifying these quantities,
we can write their sum (\ref{EqSumTwoTerms}) 
$$(B_0-B_0)+\sum_{l=1}^{m-1}\left({m\choose l}-{m-1\choose l}\right)
B_l(-1)^{m-l}+B_m.$$
The difference of the binomial coefficients is ${m-1\choose l-1}$.
The expression is equal to (\ref{EqHulpBias}), as claimed.
This completes the proof of (\ref{EqInductionBias}).

Next we bound the right side of (\ref{EqInductionBias}),
by taking the expectation in turn with respect to
$X_m, X_{m-1},\ldots, X_1$. For  $M_{\hat w}$ multiplication 
by the  function $\hat w=g/\hat g$, 
$$\hat\E_{X_m}(\hat \Pi-\Pi)_{m-1,m}\hat Y_m
=(\hat\Pi M_{\hat w}-\Pi)\Bigl(\frac{\hat b-b}\bb\Bigr)(Z_{m-1}).$$
Next, for any function $h$ and $i=m-1,m-2,\ldots,2$, 
$$\hat\E_{X_i}(\hat\Pi-\Pi)_{i-1,i} \AA_i h(Z_i)=(\hat\Pi M_{\hat w} -\Pi) h(Z_{i-1}).$$
Combining these equations, we can write
the right side of (\ref{EqInductionBias}) in the form
$$(-1)^{m-1}\int \Bigl(\frac{a-\hat a}\aa\Bigr)
\biggl[\bigl(\hat\Pi M_{\hat w}-\Pi\bigr)^{m-1}
\Bigl(\frac{\hat b-b}{\bb}\Bigr)\biggr]\,\aa\bb g\,d\n.$$
We bound this by first applying H\"older's inequality, with conjugate pair $(\t,t)$ with $\t$ equal to
$r$ as in the statement of the theorem, and next Lemma~\ref{LemmaDifferenceOfProjections}(iii),
with $\hat\Pi$ and $\Pi$ viewed as weighted orthogonal projections 
in $L_2(\aa\bb\hat g)$ with weights 1 and $\hat w$, respectively,
and $r=\t(m-1)/(m+\t-3)$, $p=(m+\t-3)/(\t-2)$ and $q=(m+\t-3)/(m-1)$, so that  $rp=(m-1)\t/(\t-2)$
and $rq=\t$ (and $m$ of the lemma taken equal to the present $m$ minus 1).

By Lemma~\ref{LemVarUBound} the (conditional) variance of  $(j!)^{-1}\UU_n\chi_{\hat p}^{(j)}$ is bounded above by 
$$\sum_{l=1}^j\frac{2^jj^{2l}}{n^l}
\E_p\biggl(\E_p \Bigl(S_j D_{\hat p} \bigl[\hat A_1\hat\Pi_{1,2}\AA_2\hat\Pi_{2,3}\cdots\hat\Pi_{j-1,j}\hat Y_j\bigr]
\given X_1,\ldots,X_l\Bigr)\biggr)^2.$$
Here $\hat \Pi_{i,j}$ are the estimated kernels (the original ones with $p$ replaced by $\hat p$) and $D_{\hat p}$ is
the operation of making degenerate under $\hat p$ (not under $p$!).

By Lemma~\ref{LemSumAverage}(ii), for any function $h(X_1,\ldots, X_j)$
 the second moment of $\E \bigl(S_jh(X_1,\ldots, X_j)\given X_1,\ldots, X_l\bigr)
=(j!)^{-1}\sum _\s \E \bigl(h(X_{\s(1)},\ldots, X_{\s(j)})\given X_1,\ldots, X_l\bigr)$ is bounded above by
the maximum over all permutations $\s$  of the second moments of  
the variables $\E \bigl(h(X_{\s(1)},\ldots, X_{\s(j)})\given X_1,\ldots, X_l\bigr)$.
Because $X_1,\ldots, X_j$ are i.i.d., we can move the permutation $\s$ from the argument of $h$ to the conditioning variables,
and conclude that the second moment in the $l$th term on the right side is bounded above by 
$$\max_{{B\subset\{1,\ldots, j\}:}\atop {|B|=l}}
\E_p\biggl(\E_p \Bigl(D_{\hat p}\bigl[\hat A_1\hat\Pi_{1,2}\AA_2\hat\Pi_{2,3}\cdots\AA_{j-1}\hat\Pi_{j-1,j}\hat Y_j\bigr]
\given X_B\Bigr)\biggr)^2.$$
These are bounded in Lemma~\ref{LemSecondMoment}.

We complete the proof of Theorem~\ref{TheoremParametricRate} by bounding the square of 
$\hat\chi_n-\E_p\hat \chi_n$
by $\sum_{j=1}^m2^j\bigl((\UU_n-P^n)\tilde\chi_{\hat p}^{(j)}/j!\bigr)^2 \sum_j 2^{-j}$.
The extra factor $2^j$ can be incorporated in the constant $c$ in the
theorem. We finish by changing the order of summation in the double sum $\sum_{j=2}^m\sum_{l=1}^j$, so as
to collect the terms by the order of $k^{l-1}/n^l$.

For the proof of Theorem~\ref{TheoremParametricRateEfficient} it clearly suffices to
show that
\begin{align*}
\hat\E_p \sqrt n\bigl(\hat\chi_n-\chi(p)-\PP_n \tilde\chi_p^{(1)}\bigr)&\prob 0,\\
\hat\var_p \sqrt n\bigl(\hat\chi_n-\chi(p)-\PP_n \tilde\chi_p^{(1)}\bigr)&\prob 0.
\end{align*}
Because an influence function is centered at mean zero, the first
is simply $\sqrt n$ times the bias of $\hat\chi_n$.
By Theorem~\ref{TheoremParametricRate} the bias is of the order
$$\Bigl(\frac{\log n}n\Bigr)^{\a/(2\a+d)+\b/(2\b+d)+\g (m-1)/(2\g+d)}
+\Bigl(\frac 1k\Bigr)^{(\a+\b)/d}.$$
The first term is trivially $o(n^{-1/2})$, as $m_n\ra\infty$.
In the second we write $(\a+\b)/d=r/2$, where $r>1$ by assumption, and see that
it is $o(n^{-1/2})$, since $k n^{-1/r}\ra\infty$.

To handle the variance we split the estimator $\hat\chi_n$ in its linear and
higher order terms. By Lemma~\ref{LemSumAverage}(i) and the argument given previously,
for $c$ a sufficiently large constant, the variance of the higher order terms satisfies
\begin{align*}
&\var \sum_{j=2}^m\frac 1{j!}\UU_n\tilde \chi_{\hat p}^{(j)}
\le \sum_{j=2}^m\sum_{l=1}^j\frac{c^j j^{2l}}{n^l} k^{l-1} \e_n^{2(j-l)}\\
&\qquad=\sum_{j=2}^m\frac{j^2\e_n^{2j}c^j}{n\e_n^2}\sum_{l=1}^j\Bigl(\frac{j^2k}{n\e_n^2}\Bigr)^{l-1}
=\sum_{j=2}^m\frac{j^2\e_n^{2j}c^j}{n\e_n^2}\frac{(\bigl(j^2k/(n\e_n^2)\bigr)^j-1}{j^2k/(n\e_n^2)-1}.
\end{align*}
By assumption  $\e_n=O(n^{-\h})$ for some $\h>0$ and $k\sim n/(\log n)^3$. Thus  $j^2k/(n\e_n^2)\ra \infty$ uniformly in $j\ge 1$,
and the preceding display is bounded above by
$$2\sum_{j=2}^m \frac{j^2\e_n^{2j}c^j}{n\e_n^2}\Bigl(\frac{j^2k}{n\e_n^2}\Bigr)^{j-1}
\lesssim \frac1n\sum_{j=2}^m \Bigl(\frac{2cj^2k}{n}\Bigr)^{j-1}\lesssim \frac 1n\sum_{j=2}^\infty \Bigl(\frac1{\log n}\Bigr)^{j-1}
\lesssim \frac 1{n\log n},$$
since $j^2\lesssim 2^{j-1}$ and $2cj^2k/n\le 2cm^2k/n\le 1/\log n$, for every $j\le m$.
Finally the linear term in $\hat\chi_n$ gives the contribution 
$$\hat\var_p \sqrt n\bigl(\PP_n \tilde\chi_{\hat p}^{(1)}-\chi(p)-\PP_n \tilde\chi_p^{(1)}\bigr)
=\hat\var (\tilde\chi^{(1)}_{\hat p}-\tilde\chi^{(1)}_{p}).$$ 
From the explicit expression (\ref{EqMARFOIF}) 
for the first order influence function (or (\ref{EqMARIFApprox}) 
in the case of $\hat p$, which gives an identical function), this is seen to tend to zero
by the dominated convergence theorem.

\subsection{Proof  of Theorem~\ref{TheoremMinimaxRate} for $m=3$}
\label{SectionProofofTheoremMinimaxRate3}
The theorem asserts that the bias of the estimator $\hat\chi_n$ is equal to
the sum of four terms, the first two of which also arise 
in the bias of the estimator considered in 
Theorem~\ref{TheoremParametricRate}. Therefore, we can prove the
assertion on the bias by showing that the expected values
of the current estimator $\hat\chi_n$ (for $m=3$) and the
estimator in Theorem~\ref{TheoremParametricRate} differ by less
than the additional bias terms in Theorem~\ref{TheoremMinimaxRate}.

The two estimators differ only in their
third order influence functions, where the present estimator
retains only the terms  in the double sum (\ref{EqDoubleThirdOrderIF})
with $r=0$, $s=0$, or $r+s\le D$. Thus the difference
of the expectations of the two estimators is equal to 
\begin{align*}
\dsum_{{r+s>D}\atop{r,s\ge1}} \hat\E_p\hat A_1\Bigl[\hat\Pi^{(k_{r-1},k_r]}(Z_1,Z_2)\AA_2&\hat\Pi^{(l_{s-1},l_s]}(Z_2,Z_3)\\
&-\hat\Pi^{(k_{r-1}\vee l_{s-1},k_r\wedge l_s]}(Z_1,Z_3)\Bigr]\hat Y_3.
\end{align*}
The expectation $\hat\E_p$ refers to the variable $(X_1,X_2,X_3)$ for fixed values of the
preliminary samples, which are indicated in the ``hat'' symbols on $\hat A_1, \hat Y_3$ and the
kernels, and hence is an integral relative to the density $(x_1,x_2,x_3)\mapsto p(x_1)p(x_2)p(x_3)$.
If we replace $p(x_2)$ in this density by $\hat p(x_2)$, then the integral will be zero, as the
kernel is degenerate under $\hat P$. Thus we may integrate against
$(x_1,x_2,x_3)\mapsto p(x_1)(p-\hat p)(x_2)p(x_3)$. In that case the projection term 
$\hat A_1\hat\Pi^{(k_{r-1}\vee l_{s-1},k_r\wedge l_s]}(Z_1,Z_3)\hat Y_3$ integrates to zero, as it does not depend
on $X_2$ and $\int (p-\hat p)(x_2)\,d\m(x_2)=0$, and hence can be dropped. Next we condition
$\hat A_1$ and $\hat Y_3$ on $Z_1,Z_2,Z_3$ and write the preceding display in the form
\begin{align*}
\dsum_{{r+s>D}\atop{r,s\ge1}} \int\!\!\int\!\!\int \frac{\hat a-a}\aa(z_1)
\hat\Pi^{(k_{r-1},k_r]}(z_1,z_2)&\hat\Pi^{(l_{s-1},l_s]}(z_2,z_3) \frac {\hat b-b}\bb(z_3)\\
&\times\,d\r(z_1)\, d(\r-\hat\r)(z_2)\, d\r(z_3).
\end{align*}
for $\r$ and $\hat\r$ the measures defined by $d\r=\aa\bb g\,d\n$ and 
$d\hat\r=\aa\bb \hat g\,d\n$. 
The double sum can be rewritten as the sum over $r$ running
from $1$ to $R$ and over $s$ from $D-r+1$ to $S$, which gives the equivalent
representation, with the $\times$ referring to ``tensor products'' as explained in Section~\ref{SectionNotation},
\begin{align*}
&\sum_{r=1}^R \int 
\Bigl(\frac{\hat a-a}\aa\times 1\times\frac {\hat b-b}\bb\Bigr)\,
\, \Bigl(\hat\Pi^{(k_{r-1},k_r]}\times\hat\Pi^{(l_{D-r},k]}\Bigr) \,d\bigl(\r\times(\r-\hat\r)\times\r\bigr).
\end{align*}
We write $\hat \Pi^{(k_{r-1},k_r]}=\hat\Pi^{(k_{r-1},k]}-\hat \Pi^{(k_{r},k]}$, 
and next arrive at the difference of two expressions of the
type, with $k_r'=k_{r-1}$ and $k_r'=k_r$, respectively,
\begin{align*}
&\sum_{r=1}^R \int \Bigl(\frac{\hat a-a}\aa\times 1\times\frac {\hat b-b}\bb\Bigr)\,
\Bigl(\hat\Pi^{(k_r',k]}\times\hat\Pi^{(l_{D-r},k]}\Bigr)\,d\bigl(\r\times(\r-\hat\r)\times\r\bigr).
\end{align*}
If the measure of integration were $\hat\r\times(\r-\hat\r)\times\hat\r$ (with $\hat\r$ instead of
$\r$), then we could perform the integrals on $z_1$ and $z_3$
and next apply H\"older's inequality to bound the resulting expression in absolute value by
$$\sum_{r=1}^R\Bigl\|\hat\Pi^{(k_r',k]}\Bigl(\frac{\hat a-a}\aa\Bigr)\Bigr\|_{r}
\Bigl\|\hat\Pi^{(l_{D-r},k]}\Bigl(\frac{\hat b-b}\bb\Bigr)\Bigr\|_{r} 
\Big\|\frac{g}{\hat g}-1\Bigr\|_{r/(r-2)},$$
where the norms are those of $L_2(\aa\bb \hat g)$, which are
equivalent to those of $L_2(\n)$, by assumption. We can write
$\hat\Pi^{(l,k]}=\hat\Pi^{(0,k]}(I-\hat\Pi^{(0,l]})$ and use the assumed boundedness of $\hat\Pi^{(0,l]}$
as an operator on $L_r(\aa\bb g)$ to bound this by the third term in the bias. 

Replacing $\r\times(\r-\hat\r)\times\r$ by $\hat\r\times(\r-\hat\r)\times\hat\r$ can be achieved
by writing the first and last occurrence of $\r$
as $\r=\hat\r+(\r-\hat\r)$ and expanding the resulting expression on the $+$ signs
into four terms. One of these has the measure $\hat\r\times(\r-\hat\r)\times\hat\r$. 
The other three terms have two or three occurrences of $\r-\hat \r$, and can be bounded 
by the first term in the bias (with $m=3$). This is argued precisely 
under (\ref{EqRemainingEstimationBias}) below.

Because the first and second order influence functions
are equal to those of the estimator considered in 
Theorem~\ref{TheoremParametricRate}, the (conditional) variances
of $\UU_n\tilde \chi_{\hat p}^{(j)}$ for $j=1,2$
can be seen to be of the orders $O(1/n)$ and $O(k/n^2)$, respectively,
by the same proof. 
By Lemma~\ref{LemVarUBound} the variance for $j=3$ satisfies (see (\ref{EqVarDegU}))
$$\var \UU_n\chi_{\hat p}^{(3)}
\lesssim \sum_{l=1}^3\frac 1{n^l}\E_p\biggl(\E_p\Bigl(\tilde\chi_{\hat p}^{(3)}(X_1,X_2,X_3)
\given X_1,\ldots,X_l\Bigr)\biggr)^2.$$
Here the influence function is given in (\ref{EqDoubleThirdOrderIF}) and can also be written
\begin{align*}
\frac 16\chi_{\hat p}^{(3)}(X_1,X_2,X_3)
&=\dsum_{{(r,s):r+s\le D}\atop{ \vee r=0 \vee s=0}}
\!\!S_3D_{\hat p}\Bigl[\hat A_1\hat \Pi_{1,2}^{(k_{r-1},k_r]}\AA_2\hat \Pi_{2,3}^{(l_{s-1},l_s]}\hat Y_3\Bigr]\\
&=\sum_{r=0}^R
\!\!S_3D_p\Bigl[\hat A_1\hat\Pi_{1,2}^{(k_{r-1},k_r]}\AA_2\hat\Pi_{2,3}^{(0,l_{D-r}']}\hat Y_3\Bigr],
\end{align*}
where $l_{D-r}'=l_{D-r}\vee n$. The degeneracy $D_{\hat p}$ operator works on $X_2$ only.

In the term for $l=3$ we change measure from $p$ to $\hat p$, bound out $\hat A_1$ and $\hat Y_3$,
and  pull out the degeneracy operator to obtain the upper bound a multiple of
\begin{align*}
&\frac 1{n^3}\Big\|\frac p{\hat p}\Bigr\|_\infty^2
\hat P^3 \Bigl(\sum_{r=0}^R\hat\Pi_{1,2}^{(k_{r-1},k_r]}\AA_2 \hat\Pi_{2,3}^{(0, l_{D-r}']}\Bigr)^2.
\end{align*}
After bounding out $\hat A_1^2$ and $\hat Y_3^2$, we write the
squared sum as a double sum. From the fact that the projections $\hat\Pi^{(k_{r-1},k_r]}$ are
orthogonal for different $r$, it follows that the off-diagonal terms of the double sum vanish (the
expectation with respect to $X_1$ is zero). Thus 
the preceding display is bounded above by a multiple of 
$$\frac 1{n^3} \sum_{r=0}^R \hat P^3\bigl(\hat\Pi^{(k_{r-1},k_r]}(Z_1,Z_2)\AA_2\hat\Pi^{(0,l_{D-r}']}(Z_2,Z_3)\bigr)^2.$$
By Lemmas~\ref{LemmaVarianceProductOfKernels} and~\ref{LemmaNormProjection} and the assumption
that $\sup_z\hat\Pi^{(0,l]}(z,z)\lesssim l$ this is bounded by a multiple of 
\begin{align*}
&\frac 1{n^3}\sum_{r=0}^R(k_r-k_{r-1})l_{D-r}'\le \frac 1{n^3} \Bigl(nk+\sum_{r=1}^R(k_r-k_{r-1})(l_{D-r}+n)\Bigr).
\end{align*}
By (\ref{EqDefinitionGridk}) $k_r-k_{r-1}=(1-2^{-\a})k_r\lesssim k_r=n2^{r/\a}$ for $r\ge 1$. 
On substituting this in the display, and noting that $l_{D-r}=0$ if $r>D$, we see that this is
bounded by a multiple of $k/n^2+ 2^{D/\a\vee D/\b}/n$ if $\a\not=\b$ and bounded by a multiple of
$k/n^2+ D2^{D/\a}/n$ if $\a=\b$.

The second moment in the right side  for $l=1$ or $l=2$ is bounded above by a multiple of
$$\max_{{B\subset \{1,2,3\}}\atop{|B|=l}} 
\E_{\hat p}\biggl(\E_p\Bigl(\dsum_{{(r,s):r+s\le D}\atop{ \vee r=0 \vee s=0}}
\!\!D_{\hat p}\Bigl[\hat A_1\hat \Pi_{1,2}^{(k_{r-1},k_r]}\AA_2\hat \Pi_{2,3}^{(l_{s-1},l_s]}\hat Y_3\Bigr]
\given X_B\Bigr)\biggr)^2.$$
We consider the various subsets $B$ separately: $\{1,2\}, \{1,3\},\{2,3\}, \{1\},\{2\},\{3\}$.
Abbreviate $\hat w=g/\hat g$ and $(\aa\bb)(z_i)\,d\hat G(z_i)=\hat g_i\,di$.

$B=\{1,2\}$. Taking first the conditional expectation given $Z_3$ reduces $\hat Y_3$ to $(b-\hat b)/\bb$,
and hence
\begin{align*}
&\E_p\biggl(\sum_{r=0}^RD_{\hat p}\Bigl[\hat A_1\hat \Pi_{1,2}^{(k_{r-1},k_r]}\AA_2\hat \Pi_{2,3}^{(0,l_{D-r}']}\hat Y_3\Bigr]
\given X_1,X_2\biggr)\\
&\qquad=\hat A_1D_{\hat p}^2\Bigl[\int\sum_{r=0}^R\hat \Pi_{1,2}^{(k_{r-1},k_r]}\AA_2\hat \Pi_{2,3}^{(0,l_{D-r}']}
\frac{b_3-\hat b_3}{\bb_3}\hat w_3 \hat g_3\,d3\Bigr]\\
&\qquad=\hat A_1D_{\hat p}^2\Bigl[\sum_{r=0}^R\hat \Pi_{1,2}^{(k_{r-1},k_r]}\AA_2\Bigl(\hat \Pi^{(0,l_{D-r}']}
\Bigl(\frac{b-\hat b}{\bb}\hat w\Bigr)\Bigr)_2 \Bigr].
\end{align*}
When taking the second moment under $\hat p$, we can bound out $\hat A_1$, and leave off the
degeneracy operator, as this is a projection. Furthermore, the terms of the sum are orthogonal
as functions of $Z_1$: we have
$\int \hat \Pi_{1,2}^{(k_{r-1},k_r]}\hat \Pi_{1,2}^{(k_{r'-1},k_{r'}]}\,\hat g_1\,d1=0$, for $r\not= r'$.
Therefore, the second moment is bounded above by a multiple of 
$$\sum_{r=0}^R\E_{\hat p}\biggl(\hat \Pi_{1,2}^{(k_{r-1},k_r]}\AA_2\hat \Pi^{(0,l_{D-r}']}
\Bigl(\frac{b-\hat b}{\bb}\hat w\Bigr)_2 \biggr)^2
\lesssim \sum_rk_r\,\E_{\hat p}\biggl(\hat \Pi^{(0,l_{D-r}']}
\Bigl(\frac{b-\hat b)}{\bb}\hat w\Bigr)\biggr)^2,$$
where we peeled off the square of the kernel $\hat \Pi_{1,2}^{(k_{r-1},k_r]}$ by integrating this over $Z_1$,
reducing this to $\hat \Pi_{2,2}^{(k_{r-1},k_r]}\le \hat \Pi_{2,2}^{(0,k_r]}\le C k_r$.
We finish by leaving off the projection $\hat \Pi^{(0,l_{D-r}']}$
and bounding $\hat w$ by its uniform norm, giving the bound $\sum_rk_r\e_n^2\lesssim k\e_n^2$,
by the definition of $k_r$, which implies $k_r\asymp k_r-k_{r-1}$.

$B=\{1,3\}$. Integrating out $X_2$ gives, as a special case of Lemma~\ref{LemmaIntegrateOutKernels},
\begin{align}
&\E_p\biggl(\sum_{r=0}^RD_{\hat p}\Bigl[\hat A_1\hat \Pi_{1,2}^{(k_{r-1},k_r]}\AA_2\hat \Pi_{2,3}^{(0,l_{D-r}']}\hat Y_3\Bigr]
\given X_1,X_3\biggr)\nonumber\\
&\qquad=\hat A_1\sum_{r=0}^R\int\hat \Pi_{1,2}^{(k_{r-1},k_r]}\hat \Pi_{2,3}^{(0,l_{D-r}']}
(\hat w_2-1) \hat g_2\,d2\,\hat Y_3.\label{EqCondGiven13}
\end{align}
The terms of the sum are again orthogonal relative to integration on $Z_1$ and hence the second moment of
the right side is  bounded above by a multiple of
\begin{align*}
&\sum_{r=0}^R\E_{\hat p}\biggl(\int\hat \Pi_{1,2}^{(k_{r-1},k_r]}\hat \Pi_{2,3}^{(0,l_{D-r}']}(\hat w_2-1) \hat g_2\,d2 \biggr)^2\\
&\qquad 
\lesssim\sum_{r=0}^R\E_{\hat p}\biggl(\hat \Pi^{(0,l_{D-r}']}\Bigl(\hat \Pi_{1,\cdot}^{(k_{r-1},k_r]}(\hat w_\cdot-1)\Bigr)\biggr)_3^2\\
&\qquad 
\le \sum_{r=0}^R\E_{\hat p}\biggl(\hat \Pi_{1,3}^{(k_{r-1},k_r]}(\hat w_3-1)\biggr)^2
\le \sum_{r=0}^R(k_r-k_{r-1})\|\hat w-1\|_\infty^2=k\|\hat w-1\|_\infty^2,
\end{align*}
by first bounding out $\hat w-1$ and next applying the formula for the $L_2$-norm of a projection
kernel (see Lemma~\ref{LemmaNormProjection}).

$B=\{2,3\}$. This is analogous to $B=\{1,2\}$.

$B=\{1\}$. Taking the conditional expectation of \eqref{EqCondGiven13} given $X_1$ gives
\begin{align*}
&\E_p\biggl(\sum_{r=0}^RD_{\hat p}\Bigl[\hat A_1\hat \Pi_{1,2}^{(k_{r-1},k_r]}\AA_2\hat \Pi_{2,3}^{(0,l_{D-r}']}\hat Y_3\Bigr]
\given X_1\biggr)\\
&\qquad=\hat A_1\sum_{r=0}^R\int\!\!\int\hat \Pi_{1,2}^{(k_{r-1},k_r]}\hat \Pi_{2,3}^{(0,l_{D-r}']}
(\hat w_2-1) \hat g_2\,d2\,\frac{b_3-\hat b_3}{\bb_3}\hat w_3 \hat g_3\,d3\\
&\qquad=\hat A_1\sum_{r=0}^R\hat \Pi^{(k_{r-1},k_r]}\biggl(\hat\Pi^{(0,l_{D-r}']}\Bigl(\frac{b-\hat b}{\bb}\hat w\Bigr)(\hat w-1)\biggr)_1.
\end{align*}
The terms of the sum are orthogonal and hence the second moment is the sum of the second moments,
which is bounded by $R$ times a multiple of the maximum of the second moments, which is bounded
above by $R \|(b-\hat b)/\bb\|_2^2\,\|\hat w\|_\infty^2\,\|\hat w-1\|_\infty^2\lesssim R\e_n^4$.

$B=\{2\}$.
\begin{align*}
&\E_p\biggl(\sum_{r=0}^RD_{\hat p}\Bigl[\hat A_1\hat \Pi_{1,2}^{(k_{r-1},k_r]}\AA_2\hat \Pi_{2,3}^{(0,l_{D-r}']}\hat Y_3\Bigr]
\given X_2\biggr)\\
&\qquad=\sum_{r=0}^R\int\!\!\int \frac{\hat a_1-\hat a_1}{\aa_1}
D_{\hat p}^2\Bigl[\hat \Pi_{1,2}^{(k_{r-1},k_r]}\AA_2\hat \Pi_{2,3}^{(0,l_{D-r}']}\Bigr]
\,\frac{b_3-\hat b_3}{\bb_3}\hat w_1\hat w_3 \hat g_1\hat g_3\,d1\,d3\\
&\qquad=D_{\hat p}^2\Bigl(\Bigl[\sum_{r=0}^R\hat \Pi^{k_{r-1},k_r]}\Bigl(\frac{\hat a-a}\aa\hat w\Bigr)\Bigr)_2
\hat\Pi^{(0,l_{D-r}']}\Bigl(\Bigl(\frac{b-\hat b}{\bb}\hat w\Bigr)\Bigr)_2\Bigr].
\end{align*}
The degeneracy operator decreases second moment (it merely subtracts the mean in this case), and can be left out.
The terms of the sum appear not
to be orthogonal, but by two applications of the Cauchy-Schwarz inequality the second moment of the sum
can be bounded by
\begin{align*}
&\E_{\hat p}\sum_{r=0}^R\biggl(\hat \Pi^{(k_{r-1},k_r]}\Bigl(\frac{\hat a-a}\aa\hat w\Bigr)_2\biggr)^2
\sum_{r=0}^R\biggl(\hat\Pi^{(0,l_{D-r}']}\Bigl(\frac{b-\hat b}{\bb}\hat w\Bigr)_2\biggr)^2\\
&\qquad\le R^2\max_r 
\biggl(\E_{\hat p}\biggl(\hat \Pi^{(k_{r-1},k_r]}\Bigl(\frac{\hat a-a}\aa\hat w\Bigr)\biggr)^4
\E_{\hat p}\biggl(\hat\Pi^{(0,l_{D-r}']}\Bigl(\frac{b-\hat b}{\bb}\hat w\Bigr)\biggr)^4\biggr)^{1/2}.
\end{align*}
We can decompose $\hat\Pi^{(k_{r-1},k_r]}=\hat\Pi^{(0,k_r]}-\hat\Pi^{(0,k_{r-1}]}$ to see that the
norm of $\hat\Pi^{(k_{r-1},k_r]}: L_4(\aa\bb \hat g)\to L_4(\aa\bb \hat g)$ is bounded above by 
a multiple of the
maximum of the corresponding norms of the operators $\hat \Pi^{(0,l]}$, for $l\le k$. Thus the
expression is bounded by a multiple of $R^2\e_n^4$.

The case $B=\{3\}$ is analogous to the case $B=\{1\}$.

\begin{supplement}
  \sname{Supplement}
  \stitle{ {\em Estimation of a Functional on a Structured Model under Low Regularity} }
  \slink[doi]{COMPLETED BY THE TYPESETTER}
  \sdatatype{.pdf}
  \sdescription{The remainder of the paper is given in the supplement.}
\end{supplement}

\bibliographystyle{acmtrans-ims}

\bibliography{U}

\newpage

\begin{frontmatter}
\title{Supplement to ``Minimax Estimation of a Functional on a Structured High-dimensional Model''}
\runtitle{Minimax Estimation on a Structured Model}

\begin{aug}
\author{James M. Robins\ead[label=e3]{robins@hsph.harvard.edu}},
\author{Lingling Li\ead[label=e2]{lingling07.li@gmail.com}},
\author{Lin Liu\ead[label=e5]{linliu.tju@gmail.com}},
\author{Rajarshi Mukherjee\ead[label=e5]{rajmrt23@gmail.com}},
\author{Eric Tchetgen Tchetgen\ead[label=e1]{etchetge@hsph.harvard.edu}}
\and 
\author{Aad van der Vaart\ead[label=e4]{avdvaart@math.leidenuniv.nl}}
\thankstext{}{The research leading to these results has 
received funding from the European Research Council 
under ERC Grant Agreement 320637, an NWO Spinoza grant,
and from the National Institutes of Health grants AI112339,  AI32475, AI113251 and ES020337.}
\runauthor{Robins et al.}
\affiliation{Harvard University, Karyopharm Therapeutics Inc.,
CMA-Shanghai, Harvard University
and TU Delft}
\address{James M. Robins\\
Harvard T.H. Chan School of Public Health\\
\printead{e3}}
\address{Lingling Li\\
Karyopharm Therapeutics Inc.\\
\printead{e2}}
\address{Lin Liu\\
Institute of Natural Sciences\\
School of Mathematical Sciences\\
CMA-Shanghai, MOE-LSC\\
\printead{e5}}
\address{Rajarshi Mukherjee\\
Department of Biostatistics\\
Harvard Unversity\\
\printead{e5}}
\address{Eric Tchetgen Tchetgen \\
The Wharton School of Business\\
University of Pennsylvania\\
\printead{e1}}
\address{Aad van der Vaart\\
Delft institute of Applied Mathemats\\
TU Delft\\
The Netherlands\\
\printead{e4}}
\end{aug}

\begin{abstract}
This supplement contains the proof of Theorem~\ref{TheoremMinimaxRate} in the
case that $m>3$, and it contains three appendices.
\end{abstract}

\begin{keyword}[class=AMS]
\kwd[Primary ]{62G05, 62G20, 62G20, 62F25}
\end{keyword}

\begin{keyword}
\kwd{Nonlinear functional, nonparametric estimation,
$U$-statistic, influence function, tangent space}
\end{keyword}

\end{frontmatter}

\maketitle

\subsection{Proof  of Theorem~\ref{TheoremMinimaxRate}}
\label{SectionProofofTheoremMinimaxRate}
As in the proof for $m=3$ it suffices to compare
the bias with the bias of the estimator in Theorems~\ref{TheoremParametricRate}. 
In the estimator of order $m>5$ not every of the additional bias terms of orders $j=4,\ldots, m-1$
is individually small, but the sum is small due to a cancellation among these terms. The analysis therefore
requires careful bookkeeping, for which we introduce the following notation.

\def\lowhat{_\wedge}
\def\plowhat{\prescript{}{\wedge}}
\def\pd{\prescript{}{d}}
\def\dash{\hbox{\hskip1pt-\hskip1pt}}
A string $\d^1\d^2\cdots\d^{j-1}$ of symbols $\d^i\in \{0,1,\dash\}$ refers to an expectation of
a variable
\begin{equation}
\label{EqLeadingTermjth}
\hat A_1 \hat \Pi^{\d_1}_{1,2}\AA_2\hat\Pi^{\d^2}_{2,3}\AA_3\cdots\AA_{j-1}\hat\Pi_{j-1,j}^{\d^j}\hat Y_j,
\end{equation}
where
$$\hat\Pi^{\d}_{i,j}=\begin{cases}
 \Pi_{\hat p}^{(0,n]}(Z_i,Z_j),& \text{ if } \d=0,\\
 \Pi_{\hat p}^{(n,k]}(Z_i,Z_j),& \text{ if } \d=1,\\
 \Pi_{\hat p}^{(0,k]}(Z_i,Z_j),& \text{ if } \d=\dash.
\end{cases}$$
Furthermore, a string $\d^1\cdots\d^{i-1}\overline H\d^{i+2}\cdots\d^{j-1}$ refers to the expectation of
a variable
$$\hat A_1 \hat \Pi^{\d_1}_{1,2}\cdots \hat\Pi^{\d^{i-1}}_{i-1,i}\AA_i
\Bigl(\dsum_{r+s>D \atop{r,s\ge1}}\hat\Pi^{(k_{r-1},k_r]}_{i,i+1}\AA_{i+1}\hat\Pi^{(l_{s-1},l_s]}_{i+1,i+2}\Bigr)
\AA_{i+2} \hat\Pi^{\d^{i+2}}_{i+2,i+3}\cdots\hat\Pi_{j-1,j}^{\d^j}\hat Y_j.$$
Thus the symbol $\overline H$ refers to terms in pairs of projection kernels above the hyperbola,
as involved in the construction of the estimator. Similarly, we let
the same strings but with the symbol $H$ instead of $\overline H$
refer to the complementary terms below the hyperbola, but above $n$.

Every of these strings will stand for an expected value; for the first and last variables $X_1$ and $X_j$ this is computed
relative to $p$, but for the middle variables $X_2,\ldots, X_{j-1}$ this is relative to $p-\hat p$. 
We add further notation for expectations on $X_1$ and $X_j$ taken relative to $p-\hat p$ or $\hat p$, by preceding
(for $X_1$) or succeeding (for $X_j$) the string with a subscript ${}_d$ (for the difference $p-\hat p$) or $\lowhat$
(for $\hat p$). This gives, for instance,
\begin{align*}
01\dash 1\lowhat&=\int \frac{a-\hat a}{\aa}(z_1) \hat\Pi^{(0,n]}_{1,2}\hat\Pi^{(n,k]}_{2,3}\hat\Pi^{(0,k]}_{3,4}\hat\Pi^{(n,k]}_{4,5}
\frac{b-\hat b}{\bb}(z_5)\\
&\qqqquad\qqqquad\qqqquad\times\,d\r(z_1)\,d\prod_{i=2}^4(\r-\hat \r)(z_i)\,d\hat\r(z_5),\\
\pd1\dash 1\lowhat&=\int \frac{a-\hat a}{\aa}(z_1) \hat\Pi^{(n,k]}_{1,2}\hat\Pi^{(0,k]}_{2, 3}\hat\Pi^{(n,k]}_{3,4}
\frac{b-\hat b}{\bb}(z_4)\,d\prod_{i=1}^3(\r-\hat \r)(z_i)\,d\hat\r(z_4).
\end{align*}
The notations $d\r=\aa\bb g\, d\nu$, $d\hat\r=\aa\bb \hat g\, d\nu$,  and $\hat \Pi_{i,j}$ are as in
the proof of the theorem for $m=3$, and the (five- and four-fold) integrals arise after conditioning
(\ref{EqLeadingTermjth}) on the variables $Z_i$, as in the same proof.

The $j$th order kernel of the estimator in Theorem~\ref{TheoremParametricRate} corresponds
to the product of $j-1$ projection kernels with ranges $(0,k]$, and 
is represented by a string of $j-1$ dashes: $\dash\dash\cdots\dash$.
To construct the estimator of Theorem~\ref{TheoremMinimaxRate} we partition the range of a single kernel as $(0,k]=(0,n]\cup (n,k]$, or
the range of a contiguous pair of kernels as $(0,k]^2=(0,n]^2\cup H\cup\overline H$. By expanding the corresponding product of sums
of two or three (pairs of) kernels we obtain a decomposition of $\dash\dash\cdots\dash$ into 
sequences with symbols $0,1,\dash, H,\overline H$.
The terms retained in the estimator of Theorem~\ref{TheoremMinimaxRate} are represented
by the sequences $\d^1\ldots \d^{j-1}\in\{0,1,\dash\}^{j-1}$ with $j-1$ or $j-2$ symbols 0, and 
the sequences $H0\cdots0, 0H0\cdots 0, 0\cdots 0 H$. All other terms are left out;
for instance, for $j=3,4,5,6,7$ the terms that are left out are given by the sequences
$$\begin{tabular}{c|c|c|c|c}
$j=3$\quad& $j=4$\quad& $j=5$\quad& $j=6$&$j=7$\\
\hline
\noalign{\vskip2pt}
$\overline H$&1\dash1&1\dash\dash1&1\dash\dash\dash1&1\dash\dash\dash\dash1\\
&$\overline H$0&01\dash1&01\dash\dash1&01\dash\dash\dash1\\
&0$\overline H$&1\dash10&1\dash\dash10&1\dash\dash\dash10\\
&&$\overline H$00&001\dash1&001\dash\dash1\\
&&0$\overline H$0&01\dash10&01\dash\dash10\\
&&$\overline H$00&1\dash100&1\dash\dash100\\
&&&000$\overline H$&0001\dash1\\
&&&00$\overline H$0&001\dash10\\
&&&0$\overline H$00&01\dash100\\
&&&$\overline H$000&1\dash1000\\
&&&&0000$\overline H$\\
&&&&00$\overline H$00\\
&&&&0$\overline H$000\\
&&&&$\overline H$0000
\end{tabular}$$
In this table the strings are categorized by the numbers of 0s on their left and right sides. The
nonzero middle part of a string
always has the form $1\cdots\dash\cdots1$, with at least one $\dash$, or $\overline H$, which may be considered as
taking the place of $11$.

We claim that the difference of the biases of the estimators in Theorems~\ref{TheoremParametricRate} 
and~\ref{TheoremMinimaxRate} is the alternating (on the order) sum of these strings 
(or, rather, of the expectations they represent).
For instance, the extra bias for $m=5$ is equal to the sum of all strings in the table under $j=5$ minus
the strings under $j=4$ plus the string under $j=3$. 

To see this we note first that the leading factorial $j!$ in the definition of the $j$th order influence function $\tilde\chi_p^{(j)}$
in Theorem~\ref{TheoremMARHigherOrderIF} and its reduced version $\chi_p^{(j)}$ in
Theorem~\ref{TheoremMinimaxRate} cancels the factorial in the
definition of the estimator (\ref{EqEstimatorGeneral}), while the factor $(-1)^{j-1}$ causes alternation of signs
between the orders. The extra bias is the sum over $j$ of the expectation under $P^j$ of the sum of the 
terms left out of the $j$th influence function. Because by its construction the influence function is degenerate 
relative to $X_2,\ldots, X_{j-1}$ with respect to $\hat P$, the expectation can be equivalently
taken relative to $p-\hat p$ for these variables. Following this substitution, the projection of the leading term
(\ref{EqLeadingTermjth}), which creates the degeneracy, can be dropped, and the expectation reduces to a number as represented by
one of the strings  $\d^1\ldots \d^{j-1}$ or $\cdots \overline H\cdots$ introduced previously.
This last reasoning is similar as in the proof for $m=3$, where the projection
is shown explicitly.

We proceed to bound the alternating sum of the ``left-out strings''.
There is cancellation of expectations between terms that are one or two orders apart, i.e.\ of strings
that differ by one or two symbols. The relevant reduction formulas are, for any $\e,\d\in\{0,1,\dash, \overline H\}$ and
any intermediate symbols $\cdots$,
\begin{align}
\plowhat0\e\cdots\d0\lowhat
&=\plowhat0\e\cdots\d_d+\pd\e\cdots\d0\lowhat-\pd\e\cdots\d_d+\plowhat {\bar1}\e\cdots\d{\bar 1}\lowhat,\nonumber\\
\lowhat \e\cdots\d0\lowhat
&=\plowhat\e\cdots\d_d-\plowhat\e\cdots\d{\bar1}\lowhat,\label{EqReductionFormulas}\\
\lowhat0\e\cdots\d\lowhat
&=\pd\e\cdots\d\lowhat-\plowhat{\bar1}\e\cdots\d\lowhat. \nonumber
\end{align}
The rightmost term of each formula is a remainder term, which we may view as being defined by the formula.
The idea of the equations is to remove a symbol 0 at the beginning or end of a string with its mark
$\lowhat$, where the new, shorter string carries mark $_d$. (The second and third formulas, even though valid,
will be used only with $\e$ or $\d$ unequal to 0, respectively.) The first formula is true if the
remainder string $\plowhat {\bar1}\e\cdots\d{\bar 1}\lowhat$ is interpreted as
\begin{align*}
&\int \Bigl[\bigl(\hat \Pi^{(0,n]}-I)\frac{a-\hat a}\aa\Bigr](z_2)\,\hat\Pi^\e_{2,3}\cdots\hat\Pi^\d_{j-2,j-1}\times\\
&\qqqquad\qqqquad\times\Bigl[\bigl(\hat \Pi^{(0,n]}-I)\frac{b-\hat b}\bb\Bigr](z_{j-1})\,d\prod_{i=2}^{j-1}(\r-\hat \r)(z_i).
\end{align*}
Indeed, the four expectations obtained by expanding this last integral on the minus
signs in the two appearances of $\hat\Pi^{(0,n]}-I$ are the four strings 
$\plowhat0\e\cdots\d0\lowhat$, $\plowhat0\e\cdots\d_d$, $\pd\e\cdots\d0\lowhat$, and
$\pd\e\cdots\d_d$ in the first reduction formula, with positive, negative, negative and positive signs.
This follows by integrating the latter strings on the first and/or last variables,
and using identities such as $\int \a(z_1)\hat\Pi^{(0,n]}_{1,2}\,d\hat\r(z_1)=\hat\Pi^{(0,n]}\a(z_2)$.
The second and third formulas are obtained similarly, and more easily, with the appropriate definitions
of the remainder strings.
(The notation $\bar 1$ is motivated by the fact that $\int \a(z_1)\hat\Pi^{(n,k]}_{1,2}\,d\hat\r(z_1)$, 
which is represented by a 1, is equal to $\hat\Pi^{(n,k]}\a(z_2)$, which is $(I-\hat\Pi^{(0,n]})\a(z_2)$
up to terms ``above $k$''.)

We now proceed in two steps to rewrite all strings that make up the difference of the influence functions for $j=4,\ldots, m$.
First we write $p(x_1)=\hat p(x_1)+(p-\hat p)(x_1)$ and similarly for the density of $X_j$, and expand on the plus signs,
to rewrite every string $\e\cdots\d$ as:
\begin{equation}
\label{EqInitialStringExpansion}
\e\cdots \d=\plowhat\e\cdots\d\lowhat+\plowhat\e\cdots\d_d+\pd\e\cdots\d\lowhat
+\pd\e\cdots\d_d.
\end{equation}
Second, if one or both of $\e$ and $\d$ are 0, we expand the first string on the right side (with two $\lowhat$)
using the reduction formulas (\ref{EqReductionFormulas}), where we use the first formula if both
$\e=\d=0$ and the second or third if one of $\e, \d$ is 0. After doing this for all strings up to some order, we end up with:
\begin{enumerate}
\item[(i)] strings of the type $\plowhat\e\cdots\d\lowhat$, with both $\e,\d\in\{1,\overline H,\bar 1\}$.
\item[(ii)] strings of the type $\plowhat\e\cdots\d_d$.
\item[(iii)] strings of the type  $\pd\e\cdots\d\lowhat$.
\item[(iv)] strings of the type $\pd\e\cdots\d_d$.
\end{enumerate}
Note that the reduction formulas (\ref{EqReductionFormulas}), as we applied them, produce a string
$\plowhat \e\cdots\d\lowhat$ with two $\lowhat$ only if $\e\cdots\d$ is of type (i).
We shall show that the strings of type (i) are individually small, while the contributions of the other types are small
after cancellation.

Strings of type (i) can be bounded with the help of Lemma~\ref{LemmaMeanProductOfKernels}.
For $\e\cdots\d$ a string of length $j-3$ with $j\ge 4$, with $\overline H$ counted as having length 2,
$$\bigl|\plowhat 1\e\cdots\d1\lowhat\bigr|
\le \Bigl\|\hat\Pi^{(n,k]}\frac{a-\hat a}\aa\Bigr\|_r  \Bigl\|\hat\Pi^{(n,k]}\frac{b-\hat
  b}\bb\Bigr\|_r\|g-\hat g\|_{(j-2)r/(r-2)}^{j-2}.$$
For $\bar 1$ instead of $1$ the similar statement is true, but with $I-\hat\Pi^{(0,n]}$ replacing
$\hat\Pi^{(n,k]}$. Since the projections are norm-decreasing up to a constant by assumption (and
Lemma~\ref{LemmaProjectionInLs}), the norms in the display are bounded up to a constant
by the norms of the functions $(I-\hat\Pi^{(0,n]})\a$, for $\a=(a-\hat a)/\aa$ or  $\a=(b-\hat
b)/\bb$, respectively.  
Strings of type (i) starting or ending with $\overline H$ and at least one other symbol can be treated in the same manner, as the
kernels in $\overline H$ all start above $n$. These strings minimally give a \emph{square} estimation norm
$\|g-\hat g\|_{(m-2)r/(r-2)}^2$, and are accounted for in the fourth term of the bound on the bias
in Theorem~\ref{TheoremMinimaxRate}. The only string of type (i) of length 2 is 
$\overline H$. In the proof for $m=3$ this was shown to be accounted for by the third term of the bias bound.

Every string of type (ii) arises both from the initial expansion (\ref{EqInitialStringExpansion}) of $\e\cdots\d$, and from the secondary
expansion (\ref{EqReductionFormulas}) of $\plowhat\e\cdots\d0\lowhat$. In the expansions they carry the same sign, but
as they arise at different orders, the alternation of signs in the orders makes them cancel.
The same analysis applies to strings of type (iii). Finally, strings of type (iv) arise from 
the initial expansion of the string $\e\cdots\d$ and from the secondary expansion of the 
string $\plowhat0\e\cdots\d0\lowhat$, with the opposite sign. As the latter strings arise at
orders that differ by two, they also cancel.

If we consider terms up to order $m$, then the strings that cancel versus strings at orders $m+1$ or $m+2$ are left.
These are the strings $\pd\e\cdots\d_d$ of length $m-2$ and $m-1$, with $\overline H$ counted as two symbols,
and the strings $\pd\e\cdots\d\lowhat$ and $\plowhat\e\cdots\d_d$ of length $m-1$.
In view of Lemma~\ref{LemmaMeanProductOfKernels}, for $\e\cdots\d$ of length $j-1$,
\begin{align}
\nonumber |\pd\e\cdots\d_d|
&\lesssim \Bigl\|\hat\Pi^{\e}\Bigl(\frac{a-\hat a}\aa \frac{g-\hat g}{\hat g}\Bigr)\Bigr\|_s
\Bigl\|\hat\Pi^{\d}\Bigl(\frac{b-\hat b}\bb \frac{g-\hat g}{\hat g}\Bigr)\Bigr\|_s\|g-\hat g\|_{(j-2)s/(s-2)}^{j-2}\\
&\lesssim \Bigl\|\frac{a-\hat a}\aa\bigr\|_{sp}\Bigl\| \frac{g-\hat g}{\hat g}\Bigr\|_{sq}^2
\Bigl\|\frac{b-\hat b}\bb\Bigr\|_{sp} \|g-\hat g\|_{(j-2)s/(s-2)}^{j-2}.
\label{EqRemainingEstimationBias}
\end{align}
The choices $s=rj/(j+r-2)$, $p=(j+r-2)/j$ and $q=(j+r-2)/(r-2)$ give $sp=r$ and
$sq=(j-2)s/(s-2)=jr/(r-2)$, and then this term is bounded above by the first term in the bias of 
Theorem~\ref{TheoremMinimaxRate}. The strings $\pd\e\cdots\d\lowhat$ and $\plowhat\e\cdots\d_d$ 
can be handled similarly; only one of the two extremes yields a factor $g-\hat g$, but we need
to consider these strings only of length $m-1$.

This concludes the derivation of the bias. The variance is bounded by a weighted sum of the variances of
the third order estimator, and the variances of the variables $\UU_n\chi^{(j,i)}_{\hat p}$ over $i=1,\ldots, j-2$ and $j=4,\ldots, m$,
for the influence functions given in (\ref{EqHigherOrderCutIF}). 
By Lemma~\ref{LemVarUBound}
$$\var \UU_n\chi_{\hat p}^{(i,j)}
\lesssim \sum_{l=1}^j\frac 1{n^l}\E_p\biggl(\E_p\Bigl(\tilde\chi_{\hat p}^{(i,j}(X_1,\ldots,X_j)
\given X_1,\ldots,X_l\Bigr)\biggr)^2.$$
The second moment in the right side is bounded above by a multiple of
\begin{align*}
&\max_{{B\subset \{1,\ldots,j\}}\atop{|B|=l}} 
\E_{\hat p}\biggl(\E_p\biggl(D_{\hat p}\Bigl[\hat A_1 \hat\Pi_{1,2}^{(0,n]}\AA_2\times\cdots\times\AA_{i-1}\hat\Pi_{i-1,i}^{(0,n]}\AA_i\times\\
&\times\Bigl[\sum_{r=0}^R\hat\Pi_{i,i+1}^{(k_{r-1},k_r]}\AA_{i+1}\hat\Pi_{i+1,i+2}^{{0,l_{D-r}']}}\Bigr]
\AA_{i+2}\hat\Pi_{i+2,i+3}^{(0,n]}\times\cdots\times\AA_{j-1}\hat\Pi_{j-1,j}^{(0,n]}\hat Y_j\Bigr]
\given X_B\Bigr)\biggr)^2.
\end{align*}
This can be bounded by a combination of the arguments
used in the proofs of Theorem~\ref{TheoremParametricRate}  and Theorem~\ref{TheoremMinimaxRate} for $m=3$
in the preceding sections. 
Here we can use the Cauchy-Schwarz inequality
$(\sum_r x_r)^2\le R \sum_r x_r$ to handle the sum over $r$, 
and the bound $\hat \Pi^{(k_{r-1},k_r]}(z,z)\le \hat \Pi^{(0,k_r]}(z,z)\lesssim k_r$,
which is asymptotic to $n2^{r/\a}\asymp k_r-k_{r-1}$.

The case that $|B|=j$ can be handled by changing measure from $p$ to $\hat p$, bounding out $\hat A_1$ and $\hat Y_j$, 
and dropping the degeneracy operators. Then left is the square of the third order kernel at the variables
$i, i+1, i+2$ premultiplied and postmultiplied by product of square kernels. In the spirit of
Lemma~\ref{LemmaVarianceProductOfKernels} the opening and closing kernels can be integrated out, both from the
left and the right, and bounded out by the supremum on their diagonal, until only the
hyperbolic, third order part of the influence function remains. The suprema on the diagonal are bounded
above by a multiple of $n$ by assumption, and the hyperbolic part can be bounded exactly 
as in the proof for $m=3$. 

Next consider first the case that $|B|<j$ and $B$ contains both 1 and $j$. Then we bound out $\hat A_1$ and $\hat Y_1$,
and apply Lemma~\ref{LemmaIntegrateOutKernels} to rewrite the conditional expectation. The degeneracy operators
$D_{\hat p}^{b_r+1,b_{r+1}-1}$ in the right side of the lemma commute with the integrals and can  be left
off when taking the second moment to obtain the bound, with $\hat\z_l=\hat w_l-1$ if $l\notin B$ and
$\hat \z_l=\hat w_l$ if $l\in B$,
\begin{align*}
&R\,\E_{\hat p}\sum_r\biggl(\int\cdots\int\hat\Pi_{1,2}^{(0,n]}\hat\z_2\times\cdots\times\hat\z_{i-1}\hat\Pi_{i-1,i}^{(0,n]}\hat\z_i\\
&\times\hat\Pi_{i,i+1}^{(k_{r-1},k_r]}\hat\z_{i+1}\hat\Pi_{i+1,i+2}^{(0,l_{D-r}']}
\hat\z_{i+2}\hat\Pi_{i+2,i+3}^{(0,n]}\times\cdots\times\hat\z_{j-1}\hat\Pi_{j-1,j}^{(0,n]}
\prod_{l\notin B}(\hat g_l\,dl)\biggr)^2.\end{align*}
The last factor is bounded above by $\|\hat w\|_\infty^{2|B|}\|\hat w-1\|_\infty^{2(j-|B|)}$.
We peel this off from right to left. Every $l\in B$, which is  integrated outside the square, turns a square kernel
 $\hat\Pi_{l-1,l}^2$ into $\hat\Pi_{l-1,l-1}$, which is bounded above by $n$, $l_{D-r}'$ or $k_r$, depending on
whether it concerns a kernel $\hat\Pi^{(0,n]}$, $\Pi^{(0,l_{D-r}']}$ or $\hat\Pi^{(k_{r-1},k_r]}$. Every $l\notin B$,
which is integrated within the square, can be viewed as applying the operator to a function and can simply
be bounded out. This results in the upper bound
$$R\sum_r n^{|B-\{i+1,i+2\}|-1}k_r^{1\{i+1\in B\}}(l_{D-r}')^{1\{i+2\in B\}}\prod_{l=2}^j\|\hat\z_l\|_\infty^2.$$
If both $i+1$ and $i+2$ are contained in $B$, then we use that
$\sum_r k_rl_{D-r}'\lesssim nk+n^2 D2^{D/\a\vee D/\b}$, as used in the paper, and obtain
the upper bound $R(k+n^2D2^{D/\a\vee D/\b})n^{|B|-3}\e_n^{2(j-|B|)}$.
If $i+1$ is contained in $B$ and $i+2$ is not, then we bound $\sum_r k_r\lesssim \sum_r(k_r-k_{r-1})\le k$,
and we obtain the upper bound $Rk n^{|B|-2}\e_n^{2(j-|B|)}$.
If neither $i+1$ nor $i+2$ is contained in $B$, then we use that $\sum_r1=R$, and we obtain
the upper bound $R^2n^{|B|-1}\e_n^{2(j-|B|)}$. These upper bounds divided by $n^{|B|}$ contribute to the
variance bound.

The case that $B$ contains $j$ but not 1 can be handled by the same argument, except that we do not
bound out $\hat A_1$, but perform the integral 
 $\int (\hat a_1- a_1)/\aa_1\hat\Pi_{1,2}\hat w_1\hat g_1\,d1=\hat \Pi \bigl((\hat a-a)\hat w/\aa\bigr)_2$.
This then replaces the kernel $\hat\Pi_{1,2}$ in the preceding argument.

The case that $B$ contains $1$ but not $j$ is similar, if we perform  the peeling argument from left
to right.

The case that $B$ contains neither $1$ nor $j$ seems to be different in that the peeling cannot
start with a kernel but must deal with the term 
$$\E_p\bigl( \hat\Pi_{j-2,j-1}\AA_{j-1}\hat\Pi_{j-1,j} \hat Y_j\given X_{j-1}\bigr)
= \hat\Pi_{j-2,j-1}(\hat\Pi \bigl((b-\hat b)\hat w/\bb\bigr)_{j-1}.$$
If the operator $\hat \Pi$ would be continuous for the uniform norm, then we could
bound out the function $(\hat\Pi \bigl(b-\hat b)\hat w/\bb\bigr)$ and the argument could proceed 
as before. We can relax this assumption on the operator to continuity relative to $L_4$, by the argument as
in the proof of Lemma~\ref{LemSecondMoment}. We first integrate both on the left and the right side
over all variables $1,2,\ldots, r$ and $j,j-1,\ldots, j-l$ that do not belong to $B$, where
$r+1\in B$ and $j-l-1\in B$ to reduce to

$$\E_{\hat p}\biggl(\Phi\Bigl(\frac{\hat a-a}\aa\Bigr)_{r+1}\int\!\!\cdots\!\int \!\hat\Pi_{r+1,r+2}\cdots
\hat\Pi_{b_{r'}-2,b_{r'}-1}\,\prod_{i=r}^{r'-1}(\hat w_{b_i}-1)\hat g_{b_i}\,db_i\,
\Psi\Bigl(\frac{b-\hat b}\bb\Bigr)_{b_{r'}-1}\biggr)^2.$$
Here $\Phi$ is the repeated operator $a\mapsto \hat\Pi\bigl(\cdots\hat\Pi(a\hat w)(\hat w-1)\bigr)$ appearing
in the preceding display, and $\Psi$ is defined likewise from right to left. The kernels are 
$\hat\Pi^{(0,n]}$ or $\hat \Pi^{(k_{r-1},k_r]}$ or $\hat \Pi^{(0,l_{D-r}']}$.
By the Cauchy-Schwarz inequality this is bounded above by the square root of
\begin{align*}
&\E_{\hat p}\biggl(\Phi\Bigl(\frac{\hat a-a}\aa\Bigr)_{r+1}^2
\int\!\cdots\!\int \hat\Pi_{r+1,r+2}\cdots\hat\Pi_{b_{r'}-1,b_{r'}}
\,\prod_{i=r}^{r'}(\hat w_{b_i}-1)\hat g_{b_i}\,db_i\biggr)^2\\
&\qquad\times\E_{\hat p}\biggl(\int\!\cdots\!\int \hat\Pi_{r+1,r+2}\!\cdots\!\hat\Pi_{b_{r'}-2,b_{r'}-1}
\,\prod_{i=r}^{r'-1}(\hat w_{b_i}-1)\hat g_{b_i}\,db_i\,
\hat\Psi\Bigl(\frac{b-\hat b}\bb\Bigr)_{b_{r'}-1}^2\biggr)^2.
\end{align*}
The two expectations in this display can be bounded by peeling off the kernels from right to left,
and from left to right, respectively, in the same way as before. At the end of the process
this leaves the fourth moments of $\Phi\bigl((\hat a-a)/\aa\bigr)$
and $\Psi\bigl((b-\hat b)/\bb\bigr)$. The roots of 
these moments  can be bounded by $\|\Phi\|_4^2 \|(\hat a-a)\hat w/\aa\bigr\|_4^2$
and $\|\Psi\|_4^2 \|(\hat b-b)\hat w/\bb\bigr\|_4^2$, for $\|\Phi\|_4$ and $\|\Psi\|_4$ the norms
of these operators in $L_4$, which can be bounded in terms of the $L_4$ 
norms of the projections $\Pi_{\hat p}$ and the uniform norm of $\hat w$.

\subsection{Auxiliary lemmas concerning variances}
A difficulty in bounding the variances of the estimators is that the expectations
are under $p$, but the influence function is evaluated under $\hat p$, 
so that the degeneracy operator involved in the definition of the influence function (see Theorem~\ref{TheoremMARHigherOrderIF})
is applied under $\hat p$. This mismatch creates many additional terms in the
upper bound. The degeneracy operator (see (\ref{EqMakeDegenerate}) consists of subtracting
an alternating sum of conditional expectations relative to subsets of variables. In the context
of Theorem~\ref{TheoremMARHigherOrderIF} it works only on the variaables $X_2,\ldots, X_{m-1}$, as the
product displayed in the theorem is already degenerate relative to $X_1$ and $X_m$.
Taking a conditional expectation is equivalent to
integrating out the other variables. In the next two lemmas
 we see that every integration ``removes'' one of the kernels $\hat \Pi_{i,i+1}$.
Because a square kernel has second moment of order $k$, this reduces the second moment of the integrated kernel
from order $k^{j-1}$ to $k^{l-1}$, if $l$ variables remain. In the variance bound these powers are divided by 
the reduced powers $n^l$.

The following lemma gives the formula for
integrating out one or more of the middle variables. For instance, with $\hat w_i=(g/\hat g)(Z_i)$,

{\footnotesize{
\begin{align*}
\E_p\Big(\hat D_p[\hat\Pi_{1,2}\AA_2\hat\Pi_{2,3}]\given X_1,X_3\Bigr)
&=\E_{\hat p}\bigl(\hat\Pi_{1,2}(\hat w_2-1)\hat\Pi_{2,3}]\given X_1,X_3\bigr),\\
\E_p\Big(\hat D_p[\hat\Pi_{1,2}\AA_2\hat\Pi_{2,3}\AA_3\hat\Pi_{3,4}]\given X_1,X_4\Bigr)
&=\E_{\hat p}\bigl(\hat\Pi_{1,2}(\hat w_2-1)\hat\Pi_{2,3}(\hat w_3-1)\hat\Pi_{3,4}]\given X_1,X_4\bigr),\\
\E_p\Big(\hat D_p[\hat\Pi_{1,2}\AA_2\hat\Pi_{2,3}\AA_3\hat\Pi_{3,4}]\given X_1,X_3,X_4\Bigr)
&=\E_{\hat p}\bigl(\hat\Pi_{1,2}(\hat w_2-1)(\hat\Pi_{2,3}\AA_3\hat\Pi_{3,4}-\hat\Pi_{2,4})\given X_1,X_3,X_4\bigr).
\end{align*}\par}}

\begin{lemma}
\label{LemmaIntegrateOutKernels}
For $2\le b_1<\cdots <b_s\le j-1$, with $\hat w=g/\hat g$, and $D_p^{k,l}$ the operation of making
degenerate relative to the variables $X_k,\ldots, X_l$ relative to their law given by  $p$,
\begin{align*}
&\E_p\Bigl(D_{\hat p}[\hat\Pi_{1,2}\AA_2\hat\Pi_{2,3}\cdots\AA_{j-1}\hat \Pi_{j-1,j}\bigr]\given X_{\{1,\ldots,j\}-\{b_1,\ldots,b_s\}}\Bigr)\\
&\  =\int\cdots\int\biggl(D_{\hat p}^{2,b_1-1}\Bigl[\hat\Pi_{1,2}\!\!\!\prod_{i=2}^{b_1-1}\!\!\AA_i\hat\Pi_{i,i+1}\Bigr]
\bigl(\hat w(z_{b_1})-1\bigr)\times\\
&\quad\times D_{\hat p}^{b_1+1,b_2-1}\Bigl[\hat\Pi_{b_1,b_1+1}\!\!\!\prod_{i=b_1+1}^{b_2-1}\!\!\AA_i\hat\Pi_{i,i+1}\Bigr]
\bigl(\hat w(z_{b_2})-1\bigr)\times\\
&\qquad
\cdots\times D_{\hat p}^{b_s+1,j-1}\Bigl[\hat\Pi_{b_s,b_s+1}\!\!\!\prod_{i=b_s+1}^{j-1}\!\AA_i\hat\Pi_{i,i+1}\Bigr]
\biggr)\,\prod_{i=1}^s (\aa\bb)(z_{b_i})\,d\hat G(z_{b_1})\cdots d\hat G(z_{b_s}).
\end{align*}
\end{lemma}

\begin{proof}
Consider first the case $s=1$, and abbreviate $b_1=b$ and $g_b\,db=(\aa\bb)(z_b)\,dG(z_b)$.
Then the left side of the theorem is
the expectation of 
\begin{align}
&D_{\hat p}[\hat\Pi_{1,2}\AA_2\hat\Pi_{2,3}\cdots\AA_{j-1}\hat \Pi_{j-1,j}\bigr]\label{EqDegnerateChij}\\
&\quad=\sum_{l=0}^{j-2}(-1)^{j-l}\!\!\!\sum_{1<i_1<\cdots<i_l<j}
\Pi_{1,i_1}\AA_{i_1}\Pi_{i_1,i_2}\cdots\AA_{i_l}\Pi_{i_l,j}\Bigr]\nonumber
\end{align}
with respect to $X_b$ only, keeping the other
variables fixed. Taking first the conditional expectation on $X_b$ given $Z_b$ reduces $\AA_b$ to $(\aa\bb/a)(Z_b)$.
Next  integrating out $Z_b$ gives, with $\hat w_b=g_b/\hat g_b$,
\begin{align*}
&\sum_{l=0}^{j-2}(-1)^{j-l}\Bigl(\sum_{\substack{1<i_1<\cdots<i_l<j\\ b\in \{i_1,\ldots, i_l\}}}
\int \hat\Pi_{1,i_1}\AA_{i_1}\hat\Pi_{i_1,i_2}\cdots\AA_{i_l}\hat\Pi_{i_l,j}\hat w_b\hat g_b\,db\\
&\qquad\qquad\qquad\qquad\qquad+ \sum_{\substack{1<i_1<\cdots<i_l<j\\ b\notin \{i_1,\ldots, i_l\}}}
\hat\Pi_{1,i_1}\AA_{i_1}\Pi_{i_1,i_2}\cdots\AA_{i_l}\Pi_{i_l,j}\Bigr).
\end{align*}
If $b=i_k$, then in view of Lemma~\ref{LemmaMARMakingDegenerate}(iii)),
$$\int \hat\Pi_{i_{k-1},b}\hat w_b\hat\Pi_{b_,i_{k+1}}\hat g_b\,db
=\int \hat\Pi_{i_{k-1},b}(\hat w_b-1)\hat\Pi_{b_,i_{k+1}}\hat g_b\,db+\hat\Pi_{i_{k-1},i_{k+1}}.$$
We substitute this in the first sum in the display, splitting this into two sums. The second
sum has one factor $\hat\Pi_{r,s}$ less and does not involve $b$. This sum is the same as the
second term in the second last display, but with $l-1$ instead of $l$ and different sign. Hence when summed over $l$
these sums cancel. (The sum over $l$ of the first term can be restricted to $l=1..j-2$, of the second to $l=0..j-1$,
as $b$ is assumed to be one of the $i_k$ or not.)
The remaining sum can be reorganised in two sums and written 
\begin{align*}
&\sum_{k=0}^{b-1}\sum_{l=k+1}^{k+j-1-b}(-1)^{b-(k-1)}\!\!\!\!\!\!\sum_{1<i_1<\cdots<i_{k-1}<b}(-1)^{j-b-(r-k)}\!\!\!\!\!\!\sum_{b<i_{k+1}<\cdots<i_l<j}\\
&\qquad\qquad\qquad\qquad\int \hat\Pi_{1,i_1}\AA_{i_1}\cdots\hat\Pi_{i_{k-1},b}(\hat w_b-1)\hat\Pi_{b,i_{k+1}}\cdots\AA_{i_l}\hat\Pi_{i_l,j}\hat g_b\,db.
\end{align*}
This can be written in the form as in the statement of the lemma by interchanging the sums and the integral.

The case $s>1$ can be handled by induction on $s$.
If we integrate out the second variable $X_{b_2}$, then this affects only the term 
$D_{\hat p}\bigl[\hat\Pi_{b_1,b_1+1}\AA_{b_1+1}\cdots \hat\Pi_{j-1,j}\bigr]$. It follows by induction that
this is transformed as claimed.
\end{proof}

The following lemma is used to bound the variance in the proof of Theorems~\ref{TheoremParametricRate} 
and~\ref{TheoremParametricRateEfficient}.

\begin{lemma}
\label{LemSecondMoment}
For any $B\subset \{1,\ldots, j\}$ and a constant $M$ that depends on the supremum norms
of $\aa,\bb, 1/a, b, p/\hat p,g/\hat g$ and the norm of $\Pi_{\hat p}: L_4(\aa\bb \hat g)\to L_4(\aa\bb \hat g)$ only,
\begin{align*}
&\E_p\biggl(\E_p \Bigl(D_{\hat p}\bigl[\hat A_1\hat\Pi_{1,2}\AA_2\hat\Pi_{2,3}\cdots\AA_{j-1}\hat\Pi_{j-1,j}\hat Y_j\bigr]
\given X_B\Bigr)\biggr)^2\\
&\qquad\qquad\le M^j
\Bigl(\Bigl\|\frac{a-\hat a}\aa\Bigr\|_4\vee\Bigl\|\frac{b-\hat b}\bb\Bigr\|_4
\vee\Bigl\|\frac{g-\hat g}{\hat g}\Bigr\|_\infty\Bigr)^{2(j-|B|)} k^{|B|-1}.
\end{align*}
\end{lemma}

\begin{proof}
We change measure from $p$ to $\hat p$ in the leftmost expectation $\E_p$, replacing this by $\E _{\hat p}$,
bounding out the quotient $p/\hat p$ by its supremum norm. The multiplicative constant 
$\|p/\hat p\|_\infty^{|B|}$ can be incorporated in the factor $M^j$.

Consider first the case that $B$ contains $1$ and $j$.
Then the conditional expectation in the left side takes the form as in the right side of
Lemma~\ref{LemmaIntegrateOutKernels}, with
$\{b_1,\ldots, b_s\}=\{1,\ldots j\}-B$, premultiplied by $\hat A_1$ and postmultiplied by $\hat Y_j$.
After squaring,  we bound out the factors $\hat A_1$ and $\hat Y_j$ by their supremum norms. 
We are left with bounding the second moment
of the right side of Lemma~\ref{LemmaIntegrateOutKernels} under $\hat p$.
By linearity, the degeneracy operators $D_{\hat p}^{2,b_1-1},\ldots, D_{\hat p}^{b_s+1,j-1}$ in this right side can swap order
with the integrals relative to $z_{b_1},\ldots z_{b_s}$. It follows that the expectation of the square of the
integrals relative to the variables $X_B$
becomes bigger when removing the degeneracy operators, as making degenerate is a projection and
projection cuts second moment. Thus
we consider the expectation of the square of the right side of Lemma~\ref{LemmaIntegrateOutKernels}, 
without the degeneracy operators: the square of 
\begin{align*}
&\int\cdots\int\hat\Pi_{1,2}\prod_{i=2}^{b_1-1}\AA_i\hat\Pi_{i,i+1}(\hat w_{b_1}-1)
\hat\Pi_{b_1,b_1+1}\prod_{i=b_1}^{b_2-1}\AA_i\hat\Pi_{i,i+1}(\hat w_{b_2}-1)\times\cdots\\
&\qquad\qquad\qquad
\times(\hat w_{b_s}-1)\hat\Pi_{b_s,b_s+1}\prod_{i=b_s+1}^{j-1}\AA_i\hat\Pi_{i,i+1}
\,\prod_{i=1}^{s}(\aa\bb)(z_{b_i})\,d\hat G(b_1)\cdots d\hat G(b_{s}).
\end{align*}
We bound this by peeling off the factors 
indexed by $b_s, b_{s-1},\ldots, b_1$ from right to left (for definiteness), as follows.

For fixed values of $X_1,\ldots, X_{b_s-1}$, let $h$ be the function of $X_{b_s}$ given by the
first $s-1$ factors of the right side of Lemma~\ref{LemmaIntegrateOutKernels}, i.e.
\begin{align*}
&h(Z_{b_s})= \int\cdots\int\biggl(\hat\Pi_{1,2}\!\!\!\prod_{i=2}^{b_1-1}\!\!\AA_i\hat\Pi_{i,i+1}(\hat w_{b_1}-1)
\times\cdots\\
&\qquad
\times \hat\Pi_{b_{s-1},b_{s-1}+1}\!\!\!\prod_{i=b_{s-1}+1}^{b_s-1}\!\AA_i\hat\Pi_{i,i+1}
\bigl(\hat w_{b_s}-1)\biggr)\,\prod_{i=1}^{s-1}(\aa\bb)(z_{b_i})\,d\hat G(b_1)\cdots d\hat G(b_{s-1}).
\end{align*}
The argument $Z_{b_s}$ is hidden in the rightmost kernel $\hat\Pi_{i,i+1}$ for $i=b_s-1$ and in the factor $\hat w_{b_s}-1$. 
Let $(\hat\Pi h)_w=\int h(v)\hat\Pi(v,w)\,(\aa\bb)(w)\,d\hat G(w)$ 
be the operator defined by the kernel $\hat \Pi$ acting on the function $h$.\footnote{We denote by
$\hat\Pi_{i,j}$ a kernel evaluated at $(Z_i,Z_j)$ and by $(\hat \Pi h)_i$  the operator applied
to $h$ evaluated at $Z_i$.} Then
\begin{align*}
&\int h(z_{b_s})\hat\Pi_{b_s,b_s+1}\!\!\!\prod_{i=b_s+1}^{j-1}\!\AA_i\hat\Pi_{i,i+1}\, 
(\aa\bb)(z_{b_s})\,d\hat G(z_{b_s})
= (\hat\Pi h)_{b_s+1}\!\!\!\prod_{i=b_s+1}^{j-1}\!\AA_i\hat\Pi_{i,i+1}.
\end{align*}
Thus the left side of the lemma is bounded above by a multiple of 
$$\E_{\hat p} \Bigl((\hat\Pi h)_{b_s+1}\!\!\!\prod_{i=b_s+1}^{j-1}\!\AA_i\hat\Pi_{i,i+1}\Bigr)^2.$$
In view of the formula
$\int \hat\Pi(u,v)\hat\Pi(v,w)\,(\aa\bb)(v)\,d\hat G(v)=\hat\Pi(u,w)$, 
taking the expectation under $\hat p$ relative to $X_j$, turns the kernel $\hat \Pi_{j-1,j}^2$
into $\hat\Pi_{j-1,j-1}$. Since $\hat \Pi_{j-1,j-1}\le k$, by assumption,
we can remove the square kernel $\hat\Pi_{i,i+1}^2$ for $i=j-1$ at the cost of a multiplicative factor $k$. We repeat this on 
the kernels $\hat \Pi_{i,i+1}^2$, for $i=j-2, j-3,\ldots, b_s+1$,
 until we arrive at the upper bound $\E_{\hat p} \bigl((\hat\Pi h)_{b_{s+1}}\bigr)^2\, k^{j-1-b_s}$.
The latter expression increases if we leave off the projection operator $\hat \Pi$. Next we bound out the factor
$\hat w_{b_s}-1$ hidden in the function $h$  by its uniform norm . 
We then have succeeded in removing the $b_s$ term at the cost of the multiplicative factor $\|\hat w-1\|_\infty^2k^{j-1-b_s}$.
Repeating this process for $b_{s-1}, \cdots, b_2$, we end up with the upper bound
$$\E_{\hat p}\Bigl(\hat\Pi_{1,2}\!\!\!\prod_{i=2}^{b_1-1}\!\!\AA_i\hat\Pi_{i,i+1}\Bigr)^2
\|\hat w-1\|_\infty^{2s}k^{j-s-b_1}.$$
We finish by peeling off the kernels $\hat\Pi_{i,i+1}^2$, for $i=b_1-1,\ldots, 1$.

Next consider the case that $B$ contains $j$,  but not $1$.
Let $\{b_1,\ldots, b_s\}=\{2,\ldots j-1\}-B$, so that $B=\{1,\ldots j\}-\{1,b_1,\ldots, b_s\}$.
 In this case the expectation relative to $X_1$ must be taken on the expression in 
Lemma~\ref{LemmaIntegrateOutKernels} before squaring it. The degeneracy operators work only on
$X_2,\ldots, X_{j-1}$, commute with the expectation on $X_1$, and hence can again be bounded out.
Next we peel off the rightmost blocks $b_s,b_{s-1},\ldots,$
in the same way as before, but treat the leftmost blocks differently.
We start by noting that
$$\E_p(\hat A_1\hat\Pi_{1,2}\given X_2)=\int \frac{\hat a_1-a_1}{\aa_1}\, \hat\Pi_{1,2}\hat w_1(\aa\bb)_1\,d\hat G(z_1)
=\biggl(\hat\Pi\Bigl(\frac{\hat a-a}\aa\hat w\Bigr)\biggr)_2.$$
If $b_1=2$, we also integrate out the kernel $\hat\Pi_{2,3}$ to obtain
$$\E_p(D_{\hat p}\bigl[\hat A_1\hat\Pi_{1,2}\AA_2\hat\Pi_{2,3}\bigr]\given X_3)=
\biggl(\hat\Pi\Bigl(\hat\Pi\Bigl(\frac{\hat a-a}\aa\hat w\Bigr)(\hat w-1)\Bigr)\biggr)_3.$$
We continue this to the first $b_r$ with $b_r>r+1$ (hence $b_1=2,\ldots, b_{r-1}=r$ and $X_{r+1}$ is the
first variable not belonging to $X_B$). The remaining kernels $\hat\Pi_{i,i+1}$, for $i\ge r$, are peeled
off from the right as before, but the peeling argument is stopped at block $b_r$, leaving the term
$$\E_{\hat p}\biggl(\hat\Pi\biggl(\cdots \hat\Pi\Bigl(\hat\Pi\Bigl(\frac{\hat a-a}\aa\hat w\Bigr)
(\hat w-1)\Bigr)(\hat w-1)\bigg)\biggr)_{r+1}^2,$$
where there are $r$ repetitions of the operator $\hat \Pi$.
The latter expectation is bounded by $\bigl\|(a-\hat a)/\aa\bigr\|_2^2 \|\hat w\|_\infty\|\hat w-1\|_\infty^{r-1}$.

The case that $B$ contains $1$, but not $j$, proceeds similarly, but now the peeling
process should work from left to right, ending up with a repeated weighted projection 
of the function $(b-\hat b)\hat w/\bb$.

If $B$ contains neither $1$ nor $j$, then we first treat the conditional expectations relative
to $X_1$ and $X_j$ as previously, transforming the second moment to be bounded into
$$\E_{\hat p}\biggl(\Phi\Bigl(\frac{\hat a-a}\aa\Bigr)_{r+1}\int\!\!\cdots\!\int \!\hat\Pi_{r+1,r+2}\cdots
\hat\Pi_{b_{r'}-2,b_{r'}-1}\,\prod_{i=r}^{r'-1}(\hat w_{b_i}-1)\hat g_{b_i}\,db_i\,
\Psi\Bigl(\frac{b-\hat b}\bb\Bigr)_{b_{r'}-1}\biggr)^2.$$
Here $\Phi$ is the repeated operator $a\mapsto \hat\Pi\bigl(\cdots\hat\Pi(a\hat w)(\hat w-1)\bigr)$ appearing
in the preceding display, and $\Psi$ is defined likewise from right to left.
By the Cauchy-Schwarz inequality this is bounded above by the square root of
\begin{align*}
&\E_{\hat p}\biggl(\Phi\Bigl(\frac{\hat a-a}\aa\Bigr)_{r+1}^2
\int\!\cdots\!\int \hat\Pi_{r+1,r+2}\cdots\hat\Pi_{b_{r'}-1,b_{r'}}
\,\prod_{i=r}^{r'}(\hat w_{b_i}-1)\hat g_{b_i}\,db_i\biggr)^2\\
&\qquad\times\E_{\hat p}\biggl(\int\!\cdots\!\int \hat\Pi_{r+1,r+2}\!\cdots\!\hat\Pi_{b_{r'}-2,b_{r'}-1}
\,\prod_{i=r}^{r'-1}(\hat w_{b_i}-1)\hat g_{b_i}\,db_i\,
\hat\Psi\Bigl(\frac{b-\hat b}\bb\Bigr)_{b_{r'}-1}^2\biggr)^2.
\end{align*}
The two expectations in this display can be bounded by peeling off the kernels from right to left,
and from left to right, respectively, in the same way as before. At the end of the process
this leaves the fourth moments of $\Phi\bigl((\hat a-a)/\aa\bigr)$
and $\Psi\bigl((b-\hat b)/\bb\bigr)$. The roots of 
these moments  can be bounded by $\|\Phi\|_4^2 \|(\hat a-a)\hat w/\aa\bigr\|_4^2$
and $\|\Psi\|_4^2 \|(\hat b-b)\hat w/\bb\bigr\|_4^2$, for $\|\Phi\|_4$ and $\|\Psi\|_4$ the norms
of these operators in $L_4$, which can be bounded in terms of the $L_4$ 
norms of the projections $\Pi_{\hat p}$ and the uniform norm of $\hat w$.
\end{proof}

\subsection{Proof of Theorem~\ref{TheoremMARHigherOrderIF}}
\label{SectionDerivationHigherOrderIFApproximate}
We compute the higher order influence functionals of the approximate
functional $\tilde\chi$ using algorithm [1]--[3] in
Section~\ref{SectionComputingIF}. This starts by computing the second
order influence function as the derivative of $p\mapsto
\tilde\chi_p^{(1)}(x_1)+\chi(\tilde p)$, for fixed $x_1$. Because the
latter functional (given in (\ref{EqMARIFApprox})) depends on
the parameters only through $\tilde a(z_1)$ and $\tilde b(z_1)$, the
following lemma does the main part of the work.

\begin{lemma}
\label{LemmaMARIFofaaAndbb}
For fixed $z_1$ influence functions of $p\mapsto \tilde a(z_1)$ and 
$p\mapsto\tilde b(z_1)$ are given by
\begin{align*}
x_2&\mapsto -\aa(z_1)\Pi_p(z_1,z_2)
\bigl(a_2\tilde a(z_2)-1\bigr)\bb(z_2),\\
x_2&\mapsto \ \bb(z_1)\Pi_p(z_1,z_2)
a_2\bigl(y_2-\tilde b(z_2)\bigr)\aa(z_2),
\end{align*}
where $\Pi_p$ is the kernel of the orthogonal projection
in $L_2(\aa\bb g)$ onto $L$.
\end{lemma}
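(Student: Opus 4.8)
The plan is to verify directly that the two functions displayed in Lemma~\ref{LemmaMARIFofaaAndbb} satisfy the defining property of a first order influence function: namely, that for fixed $z_1$ and for every path $t\mapsto p_t$ through $p$ with directions $(\a,\b,\phi)$, the inner product of the candidate function with the score $B_p(\a,\b,\phi)(X_2)$ equals the pathwise derivative of $p\mapsto \tilde a(z_1)$, respectively $p\mapsto \tilde b(z_1)$. So the first step is to compute those derivatives. Recall from Theorem~\ref{TheoremCharacTildes} and equations (\ref{EqApproxa})--(\ref{EqApproxb}) that $(\tilde a-\hat a)/\aa$ is the orthogonal projection of $(a-\hat a)/\aa$ onto $L$ in $L_2(\aa\bb g)$. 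Since $\hat a$ and $\aa$ are fixed functions, a perturbation $a_t = a+t\a$ changes the source $(a-\hat a)/\aa$ by $t\a/\aa$, and a perturbation $g_t = g(1+t\phi)$ changes the weight of the $L_2$-space. I would differentiate the projection identity at $t=0$, being careful that the projection operator $\Pi_p$ depends on $g$ through the inner product on $L_2(\aa\bb g)$.

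Second, I would organize the computation of $\frac{d}{dt}_{|t=0}\tilde a(z_1)$ into two contributions: (i) the contribution from perturbing $a$ (holding the weight fixed), which gives $\aa(z_1)\,\Pi_p(\a/\aa)(z_1)$ by linearity of the projection, and (ii) the contribution from perturbing the weight $g\mapsto g(1+t\phi)$, holding the source fixed. For (ii) one differentiates the normal equations $\langle (\tilde a-\hat a)/\aa - (a-\hat a)/\aa,\ l\rangle_{\aa\bb g}=0$ with the weight now $t$-dependent; the derivative of the weight term produces an extra piece that, after using that $(\tilde a-\hat a)/\aa\in L$, can be written as minus a projection of $(\tilde a - a)/\aa$ against $\phi$. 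The key is then to recognize that $B_p^a\a + B_p^b\b + B_p^f\phi$ evaluated at $X_2$, when integrated against the kernel function $x_2\mapsto -\aa(z_1)\Pi_p(z_1,z_2)(a_2\tilde a(z_2)-1)\bb(z_2)$, reproduces exactly contributions (i) and (ii) combined — using $\E(A\given Z)=1/a(Z)$, $\var(Aa(Z)\given Z)=a(Z)-1$, and $\E(Y\given Z)=b(Z)$, which are the same identities invoked in the proof of Lemma~\ref{LemmaSecondOrderIF}. The computation for $\tilde b(z_1)$ is entirely parallel, with the roles of $\a$ and $\b$ interchanged and using $\var(Y\given Z)=b(Z)(1-b)(Z)$.

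The main obstacle I anticipate is the careful handling of the dependence of the projection $\Pi_p$ on $g$ when differentiating along the $f$-path (equivalently $g$-path). Because $\Pi_p$ is an orthogonal projection in the \emph{weighted} space $L_2(\aa\bb g)$, it is not a linear functional of $g$, and its Gâteaux derivative in the direction $\phi$ must be computed from the perturbed normal equations rather than by naive differentiation inside the integral; one must track both the change in the target $\tilde a$ (or $\tilde b$) and, implicitly, the fact that $\tilde a$ is determined by those normal equations. A clean way to manage this is to work with the explicit kernel representation $\Pi_p(z_1,z_2)=\vec e_k(z_1)^T C_p^{-1}\vec e_k(z_2)$ with $(C_p)_{ij}=\E_p e_i(Z)e_j(Z)\AA$, as in the footnote after Theorem~\ref{TheoremMARHigherOrderIF}, so that the $g$-derivative reduces to differentiating $C_p^{-1}$, i.e.\ to $-C_p^{-1}\dot C_p C_p^{-1}$ with $\dot C_p$ built from $\phi$; the resulting expression then matches the $\phi$-term in the score after a short linear-algebra manipulation. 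Once these derivative pieces are assembled, the verification is a routine matching exercise against the three-part score operator, and the lemma follows.
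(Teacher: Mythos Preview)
Your proposal is correct in outline and would reach the conclusion, but the paper's proof takes a shorter and cleaner route that avoids exactly the obstacle you flag.

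Rather than splitting the derivative of $\tilde a(z_1)$ into a ``source'' contribution (from perturbing $a$) and a ``weight'' contribution (from perturbing $g$), and then matching each score direction separately, the paper rewrites the defining normal equation (\ref{EqApproxa}) as an expectation over the \emph{full} observation $X$: $\E_p\bigl(A\tilde a(Z)-1\bigr)\bb(Z)l(Z)=0$ for every $l\in L$. Inserting a general path $p_t$ with score $\gamma$ and differentiating at $t=0$ gives, by the product rule, two terms---one from $\dot{\tilde a}_t$ and one from the change of measure, which simply inserts the factor $\gamma(X)$:
\[
\E_p \frac{d}{dt}\Big|_{t=0}\tilde a_t(Z)\,A\bb(Z)l(Z)
= -\E_p\bigl(A\tilde a(Z)-1\bigr)\bb(Z)l(Z)\,\gamma(X).
\]
Using $\E(A\mid Z)=1/a(Z)$ on the left, this says precisely that $\aa^{-1}\,(d/dt)|_{t=0}\tilde a_t$ is the weighted projection in $L_2(\Pr)$ onto $\{l(Z):l\in L\}$ (weight $(\aa\bb/a)(Z)$, i.e.\ the $L_2(\aa\bb g)$ projection) of the function $-\aa^{-1}(A\tilde a(Z)-1)a(Z)\gamma(X)$. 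Representing that projection by its kernel $\Pi_p$ immediately yields
\[
\frac{d}{dt}\Big|_{t=0}\tilde a_t(z_1)
=\E_{X_2}\Bigl[-\aa(z_1)\Pi_p(z_1,Z_2)\bigl(A_2\tilde a(Z_2)-1\bigr)\bb(Z_2)\Bigr]\gamma(X_2),
\]
which is the influence function identity in one stroke. All three score directions $(\a,\b,\phi)$ are handled simultaneously, and there is no need to differentiate $C_p^{-1}$ or to track the $g$-dependence of $\Pi_p$ separately.

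Your route via $-C_p^{-1}\dot C_p C_p^{-1}$ would also succeed and has the virtue of being fully explicit, but it is longer, and one point in your setup needs care: writing $g_t=g(1+t\phi)$ captures only the $f$-part of the perturbation of $g=f/a$; an $\a$-path also moves $g$, so your ``source'' and ``weight'' pieces do not align cleanly with the score components. The paper's trick of keeping $A$ explicit in the normal equation sidesteps this entirely.
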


\begin{proof}
We can write the equation (\ref{EqApproxa}) determining $\tilde a$ as
$\E \bigl(A\tilde a(Z)-1\bigr)\bb(Z)l(Z)=0$, for every $l\in L$.
Insert a sufficiently regular 
path $p_t$, given by parameters $(a_t,b_t,f_t)$,
and differentiate the equality relative to $t$ at $t=0$
to find, with $\g$ a score function of the path
$$\E \frac d{dt}_{|t=0}\tilde a_t(Z)A\bb(Z)l(Z)=
-\E \bigl(A\tilde a(Z)-1\bigr)\bb(Z)l(Z)\,\g(X).$$
Using the fact that $\E(A\given Z)=1/a(Z)$, where $a$ is bounded
away from zero, we can also write this as
$$\E \frac{\frac d{dt}_{|t=0}\tilde a_t(Z)}{\aa(Z)}
\frac{(\aa\bb)}a(Z)l(Z)=
-\E \frac{\bigl(A\tilde a(Z)-1\bigr) a(Z)\g(X)}{\aa(Z)}
\frac{(\aa\bb)(Z)}{a(Z)}l(Z).$$
Because the function $(\tilde a_t-\hat a)/\aa$ 
is contained in $L$ for every $t$ by construction, the
function $(d/dt)_{|t=0}\tilde a_t/\aa$ is also contained
in $L$. Combined with the validity of the preceding
display for every $l\in L$, we conclude
that  $(d/dt)_{|t=0}\tilde a_t(Z)/\aa(Z)$ is the
weighted projection of 
$-\bigl(A\tilde a(Z)-1\bigr) a(Z)\g(X)/\aa(Z)$ in $L_2(\Pr)$ 
onto the space $\{l(Z): l\in L\}$ relative to the weight $(\aa\bb/a)(Z)$.
The projection can be represented in terms of a
kernel operator (cf.\ Lemma~\ref{LemmaProjectionKernelExpressedInBasis}). 
If  $\Pi_p(z_1,z_2)(\aa\bb)(z_2)/a(z_2)$ denotes
the kernel, then
\begin{align*}
\frac{\frac d{dt}_{|t=0}\tilde a_t(z_1)}{\aa(z_1)}
&=-\E \Pi_p(z_1,Z_2) 
\frac {\bigl(A_2\tilde a(Z_2)-1\bigr) a(Z_2)\g(X_2)}{\aa(Z_2)}
\Bigl(\frac{\aa\bb}a\Bigr)(Z_2)\\
&=-\E \Pi_p(z_1,Z_2) \bigl(A_2\tilde a(Z_2)-1\bigr)\bb(Z_2)\, \g(X_2).
\end{align*}
This represents the derivative on the left as an inner product
of the score function $\g$ with the function on the right
of the first equation of the lemma (evaluated at $X_2$).
Thus the first assertion of the lemma is proved.

The second assertion is proved similarly. Using that $\E(Y\given Z)=b(Z)$
and $\E(A\given Z)=1/a(Z)$, we start by  writing the equation 
(\ref{EqApproxb}) defining
$\tilde b$ as $\E \bigl(\tilde b(Z)-Y\bigr)/\bb(Z)\,(\aa\bb/a)(Z)l(Z)=0$,
for every $l\in L$. By the same arguments as before we conclude that
$(d/dt)_{|t=0}\tilde b_t(Z)$ is the weighted projection
of $\bigl(Y-\tilde b(Z)\bigr)\g(X)/\bb(Z)$ in $L_2(\Pr)$ onto 
the space $\{l(Z): l\in L\}$,
relative to the weight $(\aa\bb/a)(Z)$. 
\end{proof}

The first order influence function (\ref{EqMARIFApprox}) depends on $p$ only
through $\tilde a$ and $\tilde b$ and hence the chain rule and the preceding
lemma imply that a second order influence
function of $\tilde\chi$  is given by the degenerate part of
\begin{align}
\tilde\chi_p^{(2)}(X_1,X_2)
&=-\Pi_p(Z_1,Z_2) \Bigl[A_1\bigl(Y_1-\tilde b(Z_1)\bigr)\aa(Z_1)
\bigl(A_2\tilde a(Z_2)-1\bigr)\bb(Z_2)\nonumber\\
&\qqqquad+\bigl(A_1\tilde a(Z_1)-1\bigr)\bb(Z_1)
A_2\bigl(Y_2-\tilde b(Z_2)\bigr)\aa(Z_2)\Bigr].
\label{EqMARSecondOrderIFTruncated}
\end{align}
(Note that this function is symmetric in $(X_1,X_2)$; $\Pi_p$ is symmetric,
because it is an orthogonal projection kernel.) 
Actually, this function is already degenerate and hence is the second
order influence function of $\tilde \chi$.

\begin{lemma}
\label{LemmaMARMakingDegenerate}
For any fixed $z_1$ and $z_3$,
\begin{enumerate}
\item[(i)] $\E_p \Pi_p(z_1,Z_2)\bigl(A_2\tilde a(Z_2)-1\bigr)\bb(Z_2)=0$.
\item[(ii)] $\E_p \Pi_p(z_1,Z_2)A_2\bigl(Y_2-\tilde b(Z_2)\bigr)\aa(Z_2)=0$.
\item[(iii)] $\E_p \Pi_p(z_1,Z_2)A_2(\aa\bb)(Z_2)\Pi_p(Z_2,z_3)=\Pi_p(z_1,z_3)$.
\end{enumerate}
\end{lemma}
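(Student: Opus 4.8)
The plan is to treat each of the three assertions by the same two-step recipe: first condition on $Z_2$ and integrate out the Bernoulli variables $A_2$ and $Y_2$ using $\E_p(A\given Z)=1/a(Z)$, $\E_p(Y\given Z)=b(Z)$, and the conditional independence of $Y_2$ and $A_2$ given $Z_2$; then rewrite the surviving $Z_2$-integral (taken against the density $f=ag$ of $Z$) as the action of the projection operator $\Pi_p$ on a suitable function, and invoke the characterization of $\tilde a,\tilde b$ in Theorem~\ref{TheoremCharacTildes}.

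For (i), conditioning on $Z_2$ turns $(A_2\tilde a(Z_2)-1)\bb(Z_2)$ into $\bigl(\tilde a(Z_2)-a(Z_2)\bigr)\bb(Z_2)/a(Z_2)$; integrating against $Z_2\sim f\,d\n=ag\,d\n$ cancels the factor $1/a$ and leaves $\int\Pi_p(z_1,z_2)(\tilde a-a)(z_2)\bb(z_2)g(z_2)\,d\n(z_2)$. Since $(\tilde a-a)\bb g=\bigl((\tilde a-a)/\aa\bigr)\,\aa\bb g$, this integral is exactly $\Pi_p\bigl((\tilde a-a)/\aa\bigr)(z_1)$. Now $(\tilde a-a)/\aa=(\tilde a-\hat a)/\aa-(a-\hat a)/\aa$: the first summand lies in $L$, and the second has orthogonal projection onto $L$ in $L_2(\aa\bb g)$ equal to $(\tilde a-\hat a)/\aa$ by the very definition of $\tilde a$; hence $\Pi_p\bigl((\tilde a-a)/\aa\bigr)=0$, which gives (i). Assertion (ii) is the same computation with the roles of $(a,\tilde a)$ and $(b,\tilde b)$ interchanged: conditional independence of $A_2$ and $Y_2$ gives $\E_p\bigl[A_2(Y_2-\tilde b(Z_2))\given Z_2\bigr]=\bigl(b(Z_2)-\tilde b(Z_2)\bigr)/a(Z_2)$, the $1/a$ again cancels the density factor, and the remaining integral is $\Pi_p\bigl((b-\tilde b)/\bb\bigr)(z_1)$, which vanishes by the defining property of $\tilde b$.

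For (iii), conditioning on $Z_2$ replaces $A_2$ by $1/a(Z_2)$, which cancels the $a(Z_2)$ coming from $f=ag$, so the left-hand side collapses to $\int\Pi_p(z_1,z_2)\Pi_p(z_2,z_3)\,(\aa\bb g)(z_2)\,d\n(z_2)$ --- the kernel of the composition $\Pi_p\circ\Pi_p$ taken in $L_2(\aa\bb g)$. Because $\Pi_p$ is an orthogonal projection it is idempotent, so this kernel equals $\Pi_p(z_1,z_3)$; the composition-of-kernels identity is legitimate here since $L$ is finite-dimensional (cf.\ the discussion of projections in Section~\ref{SectionProjections}). The only point requiring care throughout --- and the one I would flag as the main bookkeeping obstacle rather than a conceptual one --- is tracking the change of dominating measure: every expectation over $Z_2$ contributes a factor $a$, which must be matched against the $1/a$ produced by conditioning on $A_2$, and it is precisely this cancellation that makes the weight $\aa\bb g$ (and not $\aa\bb f$) appear, so that the surviving integrals really are $\Pi_p$ acting in the space $L_2(\aa\bb g)$ in which it was defined.
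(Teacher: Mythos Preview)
Your proof is correct and follows essentially the same approach as the paper's: both condition on $Z_2$ to reduce the expectations to integrals against the measure $\aa\bb g\,d\nu$, invoke the defining projection property of $\tilde a$ and $\tilde b$ for (i)--(ii), and use idempotency of $\Pi_p$ for (iii). The paper is slightly terser, phrasing the projection as a weighted projection in $L_2(\Pr)$ with weight $(\aa\bb/a)(Z)$ rather than as an orthogonal projection in $L_2(\aa\bb g)$, but since $Z$ has density $f=ag$ these viewpoints coincide exactly as you note in your final paragraph.
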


\begin{proof}
Because $(\tilde a-\hat a)(Z)/\aa(Z)$ 
and $(\tilde b-\hat b)(Z)/\aa(Z)$ are the weighted
projections in $L_2(\Pr)$ of $(a-\hat a)(Z)/\aa(Z)$ 
and  $\bigl(Y-\hat b(Z)\bigr)/\bb(Z)$, respectively, onto 
$\{l(Z): l\in L\}$ relative to the weights $(\aa\bb/a)(Z)$,
\begin{align}
\E_{X_2} \Pi_p(Z_1,Z_2)\Bigl[\frac{\tilde a(Z_2)-\hat a(Z_2)}{\aa(Z_2)}
-\frac{a(Z_2)-\hat a(Z_2)}{\aa(Z_2)}\Bigr]\frac{\aa\bb}a(Z_2)&=0,\\
\E_{X_2} \Pi_p(Z_1,Z_2)\Bigl[\frac{\tilde b(Z_2)-\hat b(Z_2)}{\bb(Z_2)}
-\frac{Y_2-\hat b(Z_2)}{\bb(Z_2)}\Bigr](\aa\bb)(Z_2)A&=0.
\end{align}
These two assertions imply (i) and (ii).
The third assertion follows from the fact that
$\Pi_p$ is the kernel of the weighted projection in
$L_2(\Pr)$ onto $L$ relative to the weight $(\aa\bb/a)(Z)$.
\end{proof}

The second order influence function 
(\ref{EqMARSecondOrderIFTruncated}) depends on $p$
through $\tilde a$ and $\tilde b$ and through the kernel $\Pi_p$.
We proceed to higher orders by differentiating
the influence function relative to these components, and applying the
chain rule, where we use the influence functions of 
$p\mapsto\tilde a(x)$ and $p\mapsto\tilde b(x)$ as given
previously in Lemma~\ref{LemmaMARIFofaaAndbb},
and the influence function of $p\mapsto \Pi_p(z_1,z_2)$
as given in Lemma~\ref{LemmaIFProjection}.

\begin{proof}[Proof of Theorem~\ref{TheoremMARHigherOrderIF}]
Denote the symmetrization of the variable in the theorem
by $\bar \chi^{(m)}_p(X_1,\ldots, X_m)$. Then 
$\bar\chi_p^{(2)}$ is the function $\tilde\chi_p^{(2)}$  given by
(\ref{EqMARSecondOrderIFTruncated}), which was  seen to be a second
order influence function in the preceding discussion. We show
by induction on $m$ that 
$x_{m+1}\mapsto\bar\chi_p^{(m+1)}(x_1,\ldots, x_m,x_{m+1})$ is
an influence function of $p\mapsto\bar\chi_p^{(m)}(x_1,\ldots, x_m)$.
The theorem is then a corollary of Lemma~\ref{LemmaRecursiveIF}.

By Lemmas~\ref{LemmaMARIFofaaAndbb} and~\ref{LemmaIFProjection},
\begin{itemize}
\item[(i)] The influence function of $p\mapsto \tilde Y_1$
is $x_{m+1}\mapsto -\Pi_p(Z_1,z_{m+1})\AA_1\tilde y_{m+1}$
\item[(ii)] The influence function of $p\mapsto \tilde A_1$
is $x_{m+1}\mapsto -\Pi_p(Z_1,z_{m+1})\AA_1\tilde a_{m+1}$.
\item[(iii)] The influence function of $p\mapsto \AA_1$ is
zero.
\item[(iv)] The influence function of $p\mapsto \Pi_p(Z_1,Z_2)$
is $x_{m+1}\mapsto -\Pi_p(Z_1,z_{m+1})\AA_{m+1}\Pi_p(z_{m+1},Z_2)$.
\end{itemize}
Applying this repeatedly readily gives an expression
for the influence function of 
$p\mapsto \tilde A_1 \Pi_{1,2}\AA_2 \Pi_{2,3}\AA_3\Pi_{3,4}\AA_4
\times\cdots\times \AA_{m-1}\Pi_{m-1,m}\tilde Y_m$.
The symmetrization of this expression is 
the same expression, but then with $m$ replaced by $m+1$ and an added
minus sign.
\end{proof}

\section{Other examples}
\label{SectionOtherExamples}
In this section we briefly indicate a number of other examples for which our
general heuristics have been worked out, leading to well known or novel estimators.

\subsection{Density estimation}
\label{SectionDensityEstimation}
Consider estimating a density $\chi(p)=p(a)$ 
at the fixed point $a$ based on a random sample
from $p$. A first order influence function of this functional
would satisfy, for every smooth path $t\mapsto p_t$ with score function $g$ at $t=0$,
$$\int \chi^{(1)}_p gp\,d\m=\frac d{dt}_{|t=0} \chi(p_t)=g(a)p(a).$$
In a nonparametric situation every zero-mean function $g$
arises as a score function, and hence $\chi_p^{(1)}$ would have to
be a ``Dirac function at $a$''. Because this does not exist
(except for very special $p$), in this example already a first order
influence function fails to exist.

We may approximate the Dirac function by the function $x\mapsto \Pi(a,x)$
for $\Pi$ the kernel of an orthogonal projection onto a given
(large) subspace $L$ of $L_2(\m)$. Because
$\int \Pi(a,x) g(x)p(x)\,d\m(x)=g(a)p(a)$ for every function
$g$ such that $gp\in L$, the function $x\mapsto \Pi(a,x)$ achieves
representation for a large set of scores. The corresponding
degenerate version is $x\mapsto \Pi(a,x)-\Pi p(a)$, 
for $\Pi p=\int \Pi(\cdot,x)p(x)\,d\m(x)$ the projection
of $p$. The corresponding first order estimator (\ref{EqEstimator}) is
$$\hat \chi_n=\chi(\hat p_n)+\PP_n \bigl(\Pi(a,\cdot)-\Pi \hat p_n(a)\bigr)
=\PP_n \Pi(a, \cdot)+ \bigl((I-\Pi)\hat p_n\bigr) (a).$$
If $\hat p_n\in L$, then the second term vanishes and the estimator
reduces to $\PP_n \Pi(a,\cdot)$. This is the usual projection
estimator (cf.\ \cite{prakasarao,tsybakov}): 
if $L$ is spanned by the orthonormal 
set $e_1,e_2,\ldots,e_k$, then $\Pi(x_1,x_2)=\sumik e_i(x_1)e_i(x_2)$
and $\hat \chi_n=\sumik (\PP_n e_i) e_i(a)$.

Alternative to viewing $x\mapsto \Pi(a,x)$ as an approximation
to the ``ideal'' influence function, we can derive it as the exact influence function of the approximate
functional $\tilde\chi(p)=\chi(\Pi p)$.
\beginskip
Indeed, for every sufficiently regular path $t\mapsto p_t$
with score function $g$,
$$\frac d{dt}_{|t=0} \tilde\chi(p_t)
=\frac d{dt}_{|t=0} \int \Pi(a,x)p_t(x)\,d\m(x)
=\int \Pi(a,x) g(x)p(x)\,d\m(x).$$
This shows that $x\mapsto \Pi(a,x)$
is an influence function of $\tilde\chi$.
\endskip

\subsection{Quadratic functionals}
\label{SectionQuadratic}
Consider estimating the functional $\chi(p)=\int p^2\,d\m$ based on
a random sample of size $n$ from the density $p$.

The first order influence function of this functional exists on the full nonparametric
model, and can be seen to take the form
$$\chi_p^{(1)}(x)= 2\bigl(p(x)-\chi(p)\bigr).$$
\beginskip
To see this, it suffices to note that 
this function is mean-zero (i.e.\ degenerate) and
satisfies 
$$\frac d{dt}_{|t=0}\chi(p_t)=\int 2p_t\dot p_t\,d\m_{|t=0}
=P 2p g=P\chi_p^{(1)}g,$$
for any sufficiently regular path $t\mapsto p_t$ with
$p_0=p$ and score function $g=\dot p_0/p_0$ at $t=0$.
\endskip
By the algorithm [1]--[3] of Section~\ref{SectionComputingIF},
a second order influence function can be computed as the degenerate part of an influence
function of the functional $p\mapsto\bar\chi_p^{(1)}(x_1)=2p(x_1)$, for fixed $x_1$.
As seen in Section~\ref{SectionDensityEstimation}, point evaluation
is not a differentiable functional, but has the kernel $\Pi$
of an orthogonal projection in $L_2(\m)$ as an \emph{approximate} influence function.
Thus an approximate second order influence function of the present functional, minus its projection onto the degenerate
functions, is  given by 
$$\tilde\chi_p^{(2)}(x_1,x_2)
=2\Pi(x_1,x_2)-2\Pi p(x_1)-2\Pi p(x_2)+2\int (\Pi p)^2\,d\m.$$
This may also be derived as an exact influence function of the approximate
functional $\tilde\chi(p)=\chi(\Pi p)$.

It can be checked that the estimator (\ref{EqEstimatorParametricRate}) for $m=2$,
given an initial estimator $\hat p_n$ that is contained in the range of $\Pi$,
reduces to $\hat\chi_n=\UU_n\Pi$, which is a well known estimator
(\cite{Laurent97}). 

\subsection{Doubly robust models}
The heuristics described in Section~\ref{SectionHeuristics} ought to be applicable
in a wide range of estimation problems, but the detailed treatment of the
missing data problem in Sections~\ref{SectionMissingData}--\ref{SectionMinimaxRate} shows
that their implementation can be involved. Inspection of the proofs reveals that
the particular implementation in the latter sections is based on the structure 
(\ref{EqMARFOIF}) of the first order influence function in the missing data problem. 
The argument extends to semiparametric models with first order influence function of the form
\begin{equation}
\label{EqJamiesModel}
\chi_p^{(1)}(x)=a(z)b(z) S_1(x)+a(z) S_2(x)+ b(z)
S_3(x)+S_4(x)-\chi(p),
\end{equation}
for known functions $S_i(x)$ of the data (i.e.\  $S=(S_1,S_2,S_3,S_4)$ is a given statistic).
The full parameter may be a quadruplet $p\leftrightarrow(a,b,c,f)$, in which
$f$ is the marginal density of an observable covariate $Z$, and $c$ does not appear in (\ref{EqJamiesModel}).
Other examples of this structure are described in \cite{RobinsetalFreedman,vdVStatScience}.

\section{Appendix 1: Influence functions}
\label{SectionInfluenceFunctions}
The main aim of this section is to prove the validity of
algorithm [1]--[3] as given in Section~\ref{SectionComputingIF}
for computing influence functions. We start with a lemma
that motivates the defining property of influence functions
in Section~\ref{SectionHeuristics}.

\begin{lemma}
\label{LemmaTaylorExpansion}
Let $f: [-1,1]\to \RR$ and $g: [-1,1]\times[-1,1]\to \RR$
be $k$ times continuously differentiable with $g(t,t)=0$ and
$({d^j}/{dt^j})f(t)=({\partial^j}/{\partial s^j})_{|s=t} g(s,t)$ for
every $t$ and $j=1,\ldots, k$. Then, for $j=1,\ldots, k$ and
every $u\in(-1,1)$,
$$\frac{\partial^j}{\partial t^j} g(u,t)_{|t=u}
=-\frac{\partial^j}{\partial s^j} g(s,u)_{|s=u}.$$
\end{lemma}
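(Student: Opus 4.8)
The plan is to exploit two facts: (a) the hypothesis relating $f$ and $g$ forces all \emph{genuinely mixed} partial derivatives of $g$ (those that differentiate in both arguments) to vanish on the diagonal $\{s=t\}$, and (b) the constraint $g(t,t)\equiv 0$, differentiated $j$ times, expands via a Leibniz rule into a binomial sum of partial derivatives of $g$ evaluated on the diagonal; once the mixed terms in that sum are known to vanish, only $(\partial_1^jg)(u,u)+(\partial_2^jg)(u,u)$ survives, and setting this to $0$ is exactly the asserted identity.

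First I would record the Leibniz-type identity. Writing $\partial_1,\partial_2$ for differentiation in the first and second argument of $g$, a one-line induction on $j$ (using $\tfrac{d}{dt}g(t,t)=(\partial_1g+\partial_2g)(t,t)$ and Pascal's rule) gives $\tfrac{d^j}{dt^j}g(t,t)=\sum_{i=0}^j\binom ji(\partial_1^i\partial_2^{j-i}g)(t,t)$ for $j\le k$. Since $g(t,t)\equiv0$, the left side is $0$, so $\sum_{i=0}^j\binom ji(\partial_1^i\partial_2^{j-i}g)(t,t)=0$ for every $j\le k$ and every $t$.

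The heart of the proof is the claim that $(\partial_1^i\partial_2^m g)(t,t)=0$ for all $t$ whenever $i\ge1$, $m\ge1$ and $i+m\le k$, which I would prove by induction on $m$. For $m=1$: the hypothesis reads $f^{(i)}(t)=(\partial_1^ig)(t,t)$; differentiating in $t$ (legitimate as long as $i+1\le k$) yields $f^{(i+1)}(t)=(\partial_1^{i+1}g)(t,t)+(\partial_1^i\partial_2g)(t,t)$, and comparing with the hypothesis at order $i+1$ forces $(\partial_1^i\partial_2g)(t,t)=0$ for $1\le i\le k-1$. For the inductive step, assume the claim at level $m$; for $i\ge1$ with $i+m+1\le k$ the identity $(\partial_1^i\partial_2^m g)(t,t)=0$ holds for all $t$, so differentiating in $t$ gives $(\partial_1^{i+1}\partial_2^m g)(t,t)+(\partial_1^i\partial_2^{m+1}g)(t,t)=0$; the first summand vanishes by the claim at level $m$ applied with $i+1$ in place of $i$ (here $(i+1)+m\le k$), hence $(\partial_1^i\partial_2^{m+1}g)(t,t)=0$. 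Finally, fixing $j\le k$ in the binomial identity of the previous paragraph, every term with $1\le i\le j-1$ satisfies $i\ge1$, $j-i\ge1$ and $i+(j-i)=j\le k$, so it vanishes by the claim; what remains is $(\partial_1^jg)(t,t)+(\partial_2^jg)(t,t)=0$, i.e.\ $\tfrac{\partial^j}{\partial t^j}g(u,t)|_{t=u}=-\tfrac{\partial^j}{\partial s^j}g(s,u)|_{s=u}$ upon putting $t=u$.

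The only real difficulty is bookkeeping: keeping track of which identities are legitimate at which order of differentiation, since the hypothesis supplies only $k$ derivatives while the double induction repeatedly differentiates. In particular I would double-check that the boundary case $i+m=k$ is still covered by the claim, so that all the middle terms are killed even in the top Leibniz sum ($j=k$); the index count above shows it is. Beyond $C^k$ smoothness (equality of mixed partials, and the fact that an identity valid on an interval may be differentiated), no analytic subtlety is involved.
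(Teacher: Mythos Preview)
Your proof is correct, and it takes a genuinely different route from the paper's. The paper argues analytically via Taylor's theorem: from the hypotheses it deduces that $f(s)-f(t)-g(s,t)=o(|s-t|^k)$ uniformly as $|s-t|\to 0$, then re-expands both $f(s)-f(t)$ and $g(s,t)=g(s,t)-g(s,s)$ as Taylor polynomials in the \emph{second} variable $t$ around $t=s$; uniqueness of Taylor coefficients then forces $-f^{(j)}(s)=\partial_2^j g(s,s)$, which together with the hypothesis $f^{(j)}(s)=\partial_1^j g(s,s)$ gives the claim.

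Your argument is instead purely combinatorial/algebraic: you prove by a clean double induction that all genuinely mixed partials $\partial_1^i\partial_2^m g$ with $i,m\ge 1$ and $i+m\le k$ vanish on the diagonal, and then read off the result from the Leibniz expansion of $0\equiv (d/dt)^j g(t,t)$. This is arguably more elementary---it uses nothing beyond the chain rule and equality of mixed partials, and avoids the uniformity-of-remainder step the paper needs. The paper's approach, on the other hand, is more conceptual: it makes transparent \emph{why} the identity holds (the function $g(s,t)$ is, up to $o(|s-t|^k)$, the difference $f(s)-f(t)$, which is antisymmetric), and yields slightly more, namely the identification $\partial_2^j g(u,u)=-f^{(j)}(u)$ directly rather than only the sum relation. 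Your index bookkeeping at the top order $j=k$ is correct.
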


\begin{proof}
The conditions show that the functions
$s\mapsto f(s)$ and $s\mapsto g(s,t)$ have the same
first $k$ derivatives at $s=t$. Because also $g(t,t)=0$, it follows
that $f(s)-f(t)-g(s,t)=o\bigl(|s-t|^k\bigr)$ as $s\ra t$.
By writing the remainder term in the form
$$\frac1{k!}(s-t)^k \Bigl[\bigl[f^{(k)}\bigl(t+\x_s(s-t)\bigr)-f^{(k)}(t)
- g_1^{(k)}\bigl(t+\x_s'(s-t),t\bigr)+g_1^{(k)}(t,t)\Bigr],$$
for $f^{(k)}$ the $k$th derivative of $f$ and $g_1^{(k)}$ the
$k$th partial derivative of $g$ relative to its first
argument, we see that 
$f(s)-f(t)-g(s,t)=o\bigl(|s-t|^k\bigr)$ as $|s- t|\ra 0$, uniformly
in $(s,t)$, by the assumed (uniform) continuity of the $k$th derivatives.
Now the difference $f(s)-f(t)$ can also be expanded
as $-\bigl[f'(s)(t-s)+\cdots +f^{(k)}(s)(s-t)^k/k!\bigr] +
o\bigl(|s-t|^k\bigr)$ as $t\ra s$. A similar expansion
of $t\mapsto g(s,t)=g(s,t)-g(s,s)$ follows. The lemma
follows by uniqueness of a Taylor expansion.
\end{proof}

Let $\chi: (-1,1)\to \RR$ be $m$ times continuously
differentiable and let $t\mapsto p_t$ be a smooth
map from $(-1,1)$ to $\P$. 
Assume that $\bar\chi^{(j)}_t: \X^j\to \RR$ are symmetric functions 
such that, for $t\in (-1,1)$ and $j=1,\ldots, m-1$, and
for every $(x_1,\ldots, x_j)\in \X^j$,
\begin{align}
\label{EqRepOne}
\frac d{dt}\chi(t)&=\int \bar \chi_t^{(1)}(x){\frac{\partial}{\partial t}}
 p_t(x)\,d\m(x),\\
\label{EqRepTwo}
\frac {\partial}{\partial t}
\bar \chi_t^{(j)}(x_1,\ldots, x_j)
&=\int \bar\chi_t^{(j+1)}(x_1,\ldots,x_j,x)
{\frac{\partial}{\partial t}}p_t(x)\,d\m(x).
\end{align}

\begin{lemma}
\label{LemmaRecursiveIF}
Under (\ref{EqRepOne})-(\ref{EqRepTwo}) and regularity assumptions, 
the functions $\chi^{(j)}_0=D_p\bar \chi^{(j)}_0$
satisfy, for every $j=1,\ldots, m$,
$$\frac {d^j}{dt^j}_{|t=0}\chi(t)
=\frac {d^j}{dt^j}_{|t=0}\Bigl(P_t\chi_0^{(1)}+\frac 12P_t^2\chi_0^{(2)}
+\cdots+\frac 1{m!}P_t^m\chi_0^{(m)}\Bigr).$$
\end{lemma}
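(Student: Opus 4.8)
The plan is to prove, as a working identity, that for every $t\in(-1,1)$ and $j=1,\dots,m$
\[
\frac{d^j}{dt^j}\chi(t)=U_j(t):=\sum_{l=1}^{j}\frac1{l!}\sum_{\substack{r_1,\dots,r_l\ge 1\\ r_1+\cdots+r_l=j}}\binom{j}{r_1,\dots,r_l}\int\!\cdots\!\int \bar\chi^{(l)}_t(x_1,\dots,x_l)\prod_{i=1}^l p^{(r_i)}_t(x_i)\,d\m(x_i),
\]
where $p^{(r)}_t:=(\partial^r/\partial s^r)_{|s=t}p_s$. Granting this, the lemma follows, because the right-hand side $\sum_{l=1}^m\frac1{l!}P_t^l\chi^{(l)}_0$ of the asserted equality also has $j$th derivative $U_j(0)$ at $t=0$. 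Indeed, expanding $(d^j/dt^j)_{|t=0}P_t^l\chi^{(l)}_0$ by Leibniz' rule gives $\sum_{r_1+\cdots+r_l=j,\ r_i\ge 0}\binom{j}{r_1,\dots,r_l}\int\!\cdots\!\int\chi^{(l)}_0\prod_i p^{(r_i)}_0\,d\m$, and since $\chi^{(l)}_0=D_p\bar\chi^{(l)}_0$ is degenerate while $\int p^{(r)}_0\,d\m=0$ for $r\ge1$, every summand with some $r_i=0$ vanishes (integrate that coordinate against $p_0$ first); in particular only $l\le j$ contributes. In the remaining all-$r_i\ge1$ terms one may replace $\chi^{(l)}_0$ by $\bar\chi^{(l)}_0$: by (\ref{EqMakeDegenerate}) their difference is a combination of functions depending on at most $l-1$ of the coordinates, and each such function integrated against a $p^{(r_i)}_0$ with $r_i\ge1$ --- a coordinate on which it does not depend --- yields zero. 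Hence $(d^j/dt^j)_{|t=0}\sum_l\frac1{l!}P_t^l\chi^{(l)}_0=U_j(0)$, which equals $(d^j/dt^j)_{|t=0}\chi(t)$ by the working identity.

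The working identity $\frac{d^j}{dt^j}\chi(t)=U_j(t)$ I would prove by induction on $j$. The base case $j=1$ is precisely (\ref{EqRepOne}), since $U_1(t)=\int\bar\chi^{(1)}_t(x)\dot p_t(x)\,d\m(x)$. For the induction step, differentiate $U_j$ in $t$ term by term: in each summand the derivative either falls on the factor $\bar\chi^{(l)}_t$, in which case (\ref{EqRepTwo}) replaces it by $\bar\chi^{(l+1)}_t$ together with one extra integrated factor $p^{(1)}_t$ (a new argument with exponent $1$), or it falls on one of the factors $p^{(r_i)}_t$, raising that exponent by one. Using the symmetry of $\bar\chi^{(l)}_t$ to organise terms according to the multiset of exponents, and collecting for a fixed term of order $j+1$ both the ``new argument'' contributions and the ``raised exponent'' contributions, one checks that the multinomial weights recombine into exactly the coefficient of that term in $U_{j+1}$; thus $(d/dt)U_j=U_{j+1}$. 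Combining this induction with the reduction above proves the lemma. All the differentiations under integral signs, together with the $C^m$-smoothness of $t\mapsto\chi(t)$, $t\mapsto p_t$ and $t\mapsto\bar\chi^{(j)}_t$ and the integrability needed to carry out the expansions, are what ``under regularity assumptions'' stands for.

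The main obstacle is the combinatorial bookkeeping in the induction step --- checking that differentiating $U_j$ yields $U_{j+1}$ and nothing more. Concretely, a term of order $j+1$ whose exponent multiset has $m_k$ copies of $k$ (so $\sum_k k\,m_k=j+1$) has coefficient $(j+1)!\big/\bigl(\prod_k m_k!\,(k!)^{m_k}\bigr)$ in $U_{j+1}$; its incoming contribution from promoting $\bar\chi^{(l-1)}\to\bar\chi^{(l)}$ (present when $m_1\ge1$), together with, for each $k$, the contribution from raising a $k$ to a $k{+}1$ in the corresponding order-$j$ term (weighted by the number of $k$-slots there), sums to that coefficient multiplied by $\bigl(m_1+\sum_{k\ge1}(k{+}1)m_{k+1}\bigr)/(j+1)=\bigl(\sum_k k\,m_k\bigr)/(j+1)=1$. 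Everything else --- the vanishing of undifferentiated-$p$ terms by degeneracy of $\chi^{(l)}_0$, and the passage from $\chi^{(l)}_0$ to $\bar\chi^{(l)}_0$ --- is routine once (\ref{EqMakeDegenerate}) and the normalisation $\int p^{(r)}_0\,d\m=0$ $(r\ge1)$ are in hand.
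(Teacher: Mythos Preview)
Your proof is correct and follows essentially the same route as the paper's. Your $U_j(t)$ is exactly the paper's $\sum_{i=1}^j\int\bar\chi_t^{(i)}b_t^{(i,j)}\,d\mu^i$ with $b_t^{(i,j)}=\frac{1}{i!}\sum_{j_1,\dots,j_i>0}\binom{j}{j_1\cdots j_i}p_t^{(j_1)}\times\cdots\times p_t^{(j_i)}$; the Leibniz computation of the right-hand side, the use of degeneracy to kill the $r_i=0$ terms and to pass from $\chi_0^{(l)}$ to $\bar\chi_0^{(l)}$, and the induction on $j$ are all the same. The only cosmetic difference is that the paper verifies the induction step via the recursion $b_t^{(i,j+1)}=S(b_t^{(i-1,j)}\times\dot p_t)+\dot b_t^{(i,j)}$ and a ``balls in boxes'' count, whereas you organise the bookkeeping by exponent multisets; your explicit check that the contributions sum to $(j+1)/(j+1)=1$ is a clean way to do this.
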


\begin{proof} 
By Leibniz's rule, for every $j$ and $i$,
$$\frac {d^j}{dt^j} P_t^i\chi_0^{(i)}
=\sum_{j_1,\cdots, j_i} \left({j\atop j_1\cdots j_i}\right)
\int\!\cdots\!\int\chi_0^{(i)} 
(p_t^{(j_1)}\times\!\cdots\!\times p_t^{(j_i)})\,d\m^i,$$
where $p_t^{(j)}$ is the $j$th partial derivative of $t\mapsto p_t$.
Upon evaluation at $t=0$ all terms in the sum with
one of the indices $j_1,\ldots, j_i$ equal to zero vanish,
by degeneracy of the function $t\mapsto \chi_0^{(i)}$.
This will happen for every $(j_1,\ldots, j_i)$ if $i>j$.
It follows that the right side of the lemma can be written as 
$$\sum_{i=1}^j \int\!\cdots\!\int\chi_0^{(i)} b_0^{(i,j)}\,d\m^i,$$
for the functions $b_t^{(i,j)}$ defined by
$$b_t^{(i,j)}=\frac 1{i!} 
\sum_{j_1,\cdots, j_i>0} \left({j\atop j_1\cdots j_i}\right)
p_t^{(j_1)}\times\cdots\times p_t^{(j_i)}.$$
We shall show that the left side of the lemma can be
written in the same form.

In fact, we prove by induction on $j$ that, for every $t$,
\begin{equation}
\label{EqHulpProofRecursive}
\frac {d^j}{dt^j}\chi(t)
=\sum_{i=1}^j \int\!\cdots\!\int\bar\chi_t^{(i)} b_t^{(i,j)}\,d\m^i.
\end{equation}
Because the functions $b_t^{(i,j)}$ are degenerate, the right is
unchanged if $\bar\chi_t^{(i)}$ is replaced by its degenerate
part (in $L_2(p_t)$), and hence the lemma follows.

Let a dot denote differentiation relative to $t$.
For $j=1$ the identity is exactly assumption (\ref{EqRepOne}),
because $b_t^{(1,1)}=\dot p_t$.
If assertion (\ref{EqHulpProofRecursive})
is true for $j$, then it follows by differentiation that,
\begin{align*}\frac {d^{j+1}}{dt^{j+1}}\bar\chi(t)
&=\sum_{i=1}^j \int\!\cdots\!\int\Bigl[\dot{\bar\chi}_t^{(i)}b_t^{(i,j)}\,d\m^i
+\bar\chi_t^{(i)} \dot b_t^{(i,j)}\Bigr]\,d\m^i\\
&=\sum_{i=1}^j \Bigl[\int\!\cdots\!\int\int\bar\chi_t^{(i+1)} 
(b_t^{(i,j)}\times \dot p_t)\,d\m^{i+1}
+\int\!\cdots\!\int\bar\chi_t^{(i)}\dot b_t^{(i,j)}\,d\m^i\Bigr],
\end{align*}
by (\ref{EqRepTwo}). Here the function
$b_t^{(i,j)}\times\dot p_t$ can be replaced by its symmetrization,
by the assumed symmetry of $\bar\chi_t^{(i+1)}$.
It follows that the assertion is true for $j+1$ if
the $b_t^{(i,j)}$ satisfy the recursion formulas,
with $S$ denoting symmetrization,
\begin{align*}
b_t^{i,j+1}&=S(b_t^{(i-1,j)}\times \dot p_t)+\dot b_t^{(i,j)},
\qquad 1<i<j+1,\\
b_t^{j+1,j+1}&=S(b_t^{(j,j)}\times\dot p_t),\\
b_t^{1,j+1}&=\dot b_t^{(1,j)}.
\end{align*}
The second and third recursions are consistent with the
first if we set $b_t^{(j+1,j)}=b_t^{(0,j)}=0$.

From  the definition of $b_t^{(i,j)}$ we see that (note that
$\dot p_t=p_t^{(1)}$)
\begin{align*}
&S(b_t^{(i-1,j)}\times \dot p_t)+\dot b_t^{(i,j)}\\
&\qquad=\frac 1{(i-1)!}
\sum_{j_1,\cdots, j_{i-1}>0} 
\left({j\atop j_1\cdots j_{i-1}}\right)
S\bigl(p_t^{(j_1)}\times\cdots\times p_t^{(j_{i-1})}\times p_t^{(1)} \bigr)\\
&\qqqquad +
\frac 1{i!} 
\sum_{j_1,\cdots, j_i>0} \left({j\atop j_1\cdots j_i}\right)
\sum_{l=1}^i p_t^{(j_1)}\times\cdots\times p_t^{(j_l+1)}\times
\cdots\times p_t^{(j_i)}.
\end{align*}
This can be seen to be equal to $b_t^{(i,j+1)}$. Indeed,
the sum defining the latter function corresponds to
the assignments of $j+1$ objects to $i$ nonempty
boxes. The two sums in the preceding display correspond to
the assignments in which the $(j+1)$th object is alone
in a box ($i$ possible boxes, the other $j$ objects
distributed over $i-1$ boxes in groups of sizes $j_1,\ldots, j_{i-1}$)
or is in a box with at least one other object ($i$ possible
boxes, the other objects distributed over the $i$ boxes in groups of
sizes $j_1,\ldots, j_i$). Note that the symmetrization
$S\bigl(p_t^{(j_1)}\times\cdots\times p_t^{(j_{i-1})}\times p_t^{(1)}\bigr)$
can be written as an average over the $i$ expressions obtained 
by placing the term $p_t^{(1)}$ before the $l$th factor
of the product $\prod_{l=1}^{i-1}p_t^{(j_l)}$, or after the
$(i-1)$th factor.
\end{proof}


\section{Appendix 2: Projections}
\label{SectionProjections}
In this section we collect essential properties of projections,
including representation by kernels, means and variances, and influence functions.
Throughout let $\m$ be a $\s$-finite measure onto some arbitrary
measurable space.

\subsection{Generalities}
We call a \emph{weighted projection} in $L_2(\m)$
onto a closed subspace $L$ with weight function $w$ the map 
$\Pi: L_2(\m)\to L$ given by
$$\Pi g=\argmin_{l\in L}\int (g-l)^2\,w\,d\m.$$
We assume that the weight function $w$ is bounded away from 0 and $\infty$,
so that this map is well defined.
The weighted projection is determined by:  $\Pi g\in L$ and the
orthogonality relationship
$$\int (g-\Pi g)l\,w\,d\m=0,\qquad \forall l\in L.$$
We say that the weighted projection has a \emph{kernel representation
with kernel} $\Pi$ if, for all $g\in L_2(\m)$,
$$\Pi g(x_1)=\int \Pi(x_1,x_2)g(x_2)w(x_2)\,d\m(x_2).$$
A weighted projection is of course just an orthogonal projection
onto $L$ in the space $L_2(\n)$ for the measure $\n$ defined
by $d\n=w\,d\m$, and as a kernel operator on $L_2(\n)$ it has
precisely kernel $\Pi$. On the other hand, as a kernel operator on
$L_2(\m)$ the weighted projection 
has kernel $(x_1,x_2)\mapsto \Pi(x_1,x_2)w(x_2)$,
which includes the weight function. This ambiguity is unavoidable
in our context, as we need to work with multiple weight functions,
both estimated and ``true'' ones.

The kernel of an orthogonal projection is symmetric in its
arguments.
Thus with the preceding definition the ``kernel of a weighted projection'' is also symmetric.

Not all projections have kernels, but projections on
finite-dimensional spaces do.

\begin{lemma}
\label{LemmaProjectionKernelExpressedInBasis}
If $e_1,\ldots, e_k$ are arbitrary linearly independent elements
that span the linear subspace $L$ of $L_2(\m)$, then the
weighted projection onto $L$ relative to the weight function $w$ has kernel
$$\Pi(x_1,x_2)=\sum_i\sum_j (C^{-1})_{ij}e_i(x_1)e_j(x_2),$$
for $C$ the $(k\times k)$-matrix with $(i,j)$th element
$C_{ij}=\int e_ie_jw\,d\m$.
\end{lemma}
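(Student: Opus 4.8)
The plan is to verify directly that the proposed kernel reproduces the weighted projection, by checking the two defining properties: that $\Pi g$ lands in $L$ and that $g - \Pi g$ is orthogonal to $L$ in $L_2(w\,d\m)$. First I would fix an arbitrary $g \in L_2(\m)$ and compute $\Pi g(x_1) = \int \Pi(x_1,x_2) g(x_2) w(x_2)\,d\m(x_2)$ using the stated formula. Since $\Pi(x_1,x_2) = \sum_{i,j}(C^{-1})_{ij} e_i(x_1) e_j(x_2)$, the integral factors as $\sum_i e_i(x_1) \bigl[\sum_j (C^{-1})_{ij} \int e_j g\, w\,d\m\bigr]$, which is manifestly a linear combination of $e_1,\ldots,e_k$ and hence an element of $L$. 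So the first property is immediate once the formula is written out.

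Next I would check orthogonality. Writing $d_j := \int e_j g\, w\,d\m$ for the vector of "moments" of $g$, we have $\Pi g = \sum_i e_i \sum_j (C^{-1})_{ij} d_j$, so for each basis element $e_l$,
$$
\int (\Pi g)\, e_l\, w\,d\m = \sum_i \Bigl(\sum_j (C^{-1})_{ij} d_j\Bigr) \int e_i e_l\, w\,d\m = \sum_i C_{li} \sum_j (C^{-1})_{ij} d_j = \sum_j \delta_{lj} d_j = d_l,
$$
using the definition $C_{li} = \int e_l e_i w\,d\m$ and $\sum_i C_{li}(C^{-1})_{ij} = \delta_{lj}$. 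But $d_l = \int g\, e_l\, w\,d\m$ by definition, so $\int (g - \Pi g) e_l\, w\,d\m = 0$ for every $l$, and by linearity for every element of $L$. This is exactly the orthogonality relationship characterizing the weighted projection, so $\Pi g$ as computed from the kernel coincides with the weighted projection of $g$.

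Two small points need attention to make the argument airtight. First, the matrix $C$ is invertible: since $e_1,\ldots,e_k$ are linearly independent and $w$ is bounded away from $0$, the bilinear form $(u,v)\mapsto \int uv\,w\,d\m$ is an inner product under which $C$ is the Gram matrix of a linearly independent set, hence positive definite and invertible. Second, finiteness of the integrals $\int e_j g\, w\,d\m$ follows from Cauchy--Schwarz together with boundedness of $w$ and the fact that each $e_j \in L_2(\m)$, $g \in L_2(\m)$. I would state these as one- or two-line remarks. The main obstacle here is essentially notational bookkeeping rather than any genuine difficulty — the content is the standard normal-equations computation for projection onto the span of a basis — so the proof is short and the only thing to be careful about is keeping the roles of $C$ and $C^{-1}$ and the weight $w$ straight throughout.
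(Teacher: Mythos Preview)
Your proof is correct and takes essentially the same approach as the paper: both reduce to the normal-equations computation for the coefficients of $\Pi g$ in the basis $e_1,\ldots,e_k$. The only cosmetic differences are that the paper first reduces to $w=1$ by changing measure and then \emph{derives} the kernel formula by solving for the coefficients, whereas you keep the weight throughout and \emph{verify} that the stated kernel satisfies the two defining properties; your added remarks on invertibility of $C$ and finiteness of the moments are details the paper leaves implicit.
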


\begin{proof}
Because we can change measure from $\m$ to $\n$ given by $d\n=w\,d\m$,
if suffices to prove the lemma for the case that $w=1$.
If $\Pi g=\sum_i \g_i e_i$, then the orthogonality relationships 
$g-\Pi g\perp e_j$ give that
$\sum_i\g_i C_{ij}= \int g e_j\,d\n$ for $j=1,\ldots, k$.
We can invert this system of linear equations to see
that $\g_i=\sum_j (C^{-1})_{ij}\int ge_j\,d\n$ for every $i$.
Insert this into $\Pi g=\sum_i \g_ie_i$ and exchange the order of
summation and integration to obtain the result.
\end{proof}

We view projections mainly as operators on $L_2(\m)$, but for
a number of arguments we need control of approximation errors
in $L_s(\m)$ for $s>2$. An $L_2$-projection $\Pi$ does not necessarily
give a best approximation in $L_s(\m)$ for $s\not=2$, but it often
gives an approximation that is optimal up to a constant. 
This is the case if its norm  as an operator 
$\Pi: L_s(\m)\to L_s(\m)$ is finite. (Finiteness assumes implicitly that 
$\Pi$ maps $L_s(\m)$ in itself; the norm
$\|\Pi\|_s$ is then by definition the minimal number
$C$ such that $\|\Pi g\|_s\le C\|g\|_s$ for every $g\in L_2(\m)$.)

\begin{lemma}
\label{LemmaProjectionInLs}
Let $\Pi$ be an orthonormal projection in $L_2(\m)$ onto
a subspace $L$ that is also contained in $L_s(\m)$. If 
$\Pi: L_s(\m)\to L_s(\m)$ has bounded norm $\|\Pi\|_s$, then
$$\|g-\Pi g\|_s\le \bigl(1+\|\Pi\|_s\bigr)\|g-L\|_s.$$
\end{lemma}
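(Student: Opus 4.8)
The plan is the standard ``compare with the best approximant'' argument. First I would fix an arbitrary $l\in L$. Since the lemma hypothesizes $L\subset L_2(\m)$ and $\Pi$ is the orthogonal projection onto $L$, we have $\Pi l=l$; linearity of $\Pi$ then yields the key identity
$$g-\Pi g=(g-l)-\Pi(g-l),\qquad g\in L_2(\m),$$
which just reorganizes $\Pi(g-l)=\Pi g-l$.

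Next I would estimate the right-hand side in the $L_s(\m)$ norm. By the triangle inequality, $\|g-\Pi g\|_s\le\|g-l\|_s+\|\Pi(g-l)\|_s$. Here I use the second part of the hypothesis, $L\subset L_s(\m)$, which guarantees $l\in L_s(\m)$ and hence $g-l\in L_s(\m)$ whenever $g\in L_s(\m)$; and since $g-l\in L_2(\m)$ the assumed operator bound gives $\|\Pi(g-l)\|_s\le\|\Pi\|_s\,\|g-l\|_s$. Combining the two inequalities produces $\|g-\Pi g\|_s\le(1+\|\Pi\|_s)\,\|g-l\|_s$.

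Finally, since $l\in L$ was arbitrary, I would pass to the infimum over $l\in L$ on the right, which by definition of the $L_s$-distance $\|g-L\|_s=\inf_{l\in L}\|g-l\|_s$ gives exactly the asserted bound. There is no real obstacle; the only point worth a sentence is the bookkeeping that every $l\in L$ lies in $L_2(\m)\cap L_s(\m)$, so that $\Pi l=l$ is meaningful and the $L_s$ triangle inequality is legitimate — both memberships being explicit hypotheses on $L$.
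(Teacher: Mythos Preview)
Your proof is correct and essentially identical to the paper's: both fix $l\in L$, use $\Pi l=l$ to write $l-\Pi g=\Pi(l-g)$, apply the triangle inequality together with the operator bound $\|\Pi\|_s$, and then take the infimum over $l\in L$.
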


\begin{proof}
The triangle inequality gives $\|g-\Pi g\|_s\le \|g-l\|_s+\|l-\Pi g\|_s$. Since
$l=\Pi l$, for every $l\in L$, the second term is bounded by $\|\Pi\|_s\|l- g\|_s$. 
We finish by taking the infimum over $l\in L$.
\end{proof}

One example are projections on a wavelet basis. 
The $L_s$-norm of a function is equivalent to the $\ell_s$-norm 
of the coefficients relative to such a basis (suitably normalized).
Because the $L_2$-projection is the wavelet expansion truncated 
at a certain level of resolution, projection
decreases the $\ell_s$-norm of the coefficients and hence
the $L_s$-norm of the function ``up to a constant''.

\subsection{Norms, means and variances}
An orthogonal projection in $L_2(\m)$ has operator norm $1$,
but the square $L_2(\m\times \m)$-norm 
$\int\int \Pi^2\,d(\m\times \m)$ of its kernel is equal to the dimension
of its projection space. 

\begin{lemma}
\label{LemmaNormProjection}
The kernel of an orthogonal projection onto a $k$-dimensional subspace
of $L_2(\m)$ has square $L_2(\m\times\m)$-norm 
$\int\int \Pi^2\,d(\m\times \m)=k$.
\end{lemma}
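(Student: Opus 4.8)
The plan is to use the coordinate formula for the projection kernel from Lemma~\ref{LemmaProjectionKernelExpressedInBasis} together with a judicious choice of basis. First I would pick an orthonormal basis $e_1,\ldots,e_k$ of the subspace $L$ in $L_2(\m)$; such a basis exists since $L$ is a $k$-dimensional subspace of a Hilbert space. For this basis the Gram matrix is $C_{ij}=\int e_i e_j\,d\m=\d_{ij}$, i.e.\ $C=I$, so Lemma~\ref{LemmaProjectionKernelExpressedInBasis} (applied with weight $w=1$) gives the kernel in the simple diagonal form
$$\Pi(x_1,x_2)=\sum_{i=1}^k e_i(x_1)e_i(x_2).$$

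Next I would compute the square $L_2(\m\times\m)$-norm directly by expanding the product:
$$\int\!\!\int \Pi(x_1,x_2)^2\,d\m(x_1)\,d\m(x_2)
=\int\!\!\int \sum_{i=1}^k\sum_{j=1}^k e_i(x_1)e_i(x_2)e_j(x_1)e_j(x_2)\,d\m(x_1)\,d\m(x_2).$$
By Tonelli's theorem (all terms after squaring are handled by writing $e_ie_j$ as a difference of squares, or one simply notes the finite sum is integrable since each $e_i\in L_2(\m)$, so Fubini applies) I can interchange sum and integral and factor the double integral as a product of two single integrals, obtaining
$$\sum_{i=1}^k\sum_{j=1}^k \Bigl(\int e_i e_j\,d\m\Bigr)^2=\sum_{i=1}^k\sum_{j=1}^k \d_{ij}^2=\sum_{i=1}^k 1=k,$$
using orthonormality. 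This is the claimed identity.

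The only point requiring a word of justification is that the value is basis-independent: the kernel $\Pi$ is intrinsically attached to the projection operator, not to the chosen basis, so any orthonormal basis of $L$ yields the same kernel and hence the same norm; alternatively one observes that $\int\!\int\Pi^2\,d(\m\times\m)$ is the Hilbert--Schmidt norm of the operator $\Pi$ on $L_2(\m)$, which for an orthogonal projection of rank $k$ equals $\trace(\Pi^*\Pi)=\trace(\Pi)=k$. I expect no real obstacle here; the main (very minor) care is simply to make sure the interchange of the finite sum with the integral is legitimate, which follows from each $e_i$ lying in $L_2(\m)$ so that every product $e_ie_j$ is $\m$-integrable.
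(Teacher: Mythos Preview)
Your proof is correct and follows essentially the same approach as the paper: choose an orthonormal basis so that the Gram matrix in Lemma~\ref{LemmaProjectionKernelExpressedInBasis} is the identity, expand the square of the kernel, and use orthonormality to see that off-diagonal terms vanish and diagonal terms equal $1$. The paper's argument is more terse, but the idea is identical.
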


\begin{proof}
By writing the kernel in the form given by
Lemma~\ref{LemmaProjectionKernelExpressedInBasis} relative
to an orthonormal basis of the projection space (so that $C=I$), we find that
$$\int\int \Pi^2\,d(\m\times \m)
=\sum_i\sum_j\int\int e_i(x_1)e_i(x_2)e_j(x_1)e_j(x_2)\,d\m(x_1)\,d\m(x_2).$$
The off-diagonal elements vanish by
orthogonality, while the diagonal elements are equal to 1.
\end{proof}

Typically the square norm of a projection
 kernel can be written as $\int \Pi(x,x)\,d\m(x)$.
In fact, the projection property $\Pi^2=\Pi$ of a kernel operator on $L_2(\m)$
can be expressed in the kernel as
\begin{equation}
\label{EqProjectionKernel}
\int \Pi(x_1,x_2)\Pi(x_2,x_3)\,d\m(x_2) = \Pi(x_1,x_3),\qquad \text{a.e.}\ (x_1,x_3).
\end{equation}
If this equation holds for every $x_1=x_3$ and $\Pi$ is symmetric, then
we obtain by integration that
$\int\int \Pi^2\,d(\m\times \m)=\int \Pi(x,x)\,d\m(x)$.

For simplicity of notation we assume that the kernel 
is such that (\ref{EqProjectionKernel}) is
valid for every $x_1,x_3$, in particular on the diagonal 
$\{(x_1,x_3): x_1=x_3\}$. (This  is typically a null set, making this an assumption
of using a special representative.)
This is true in particular for the kernels
in Lemma~\ref{LemmaProjectionKernelExpressedInBasis}.

\begin{lemma}
\label{LemmaVarianceProductOfKernels}
If $\Pi_1,\ldots, \Pi_{m-1}$ are kernels of orthogonal projections in $L_2(\m)$
that satisfy (\ref{EqProjectionKernel}) identically, then, for any $j\in \{1,\ldots, m-1\}$,
$$\int\cdots\int \prod_{i=1}^{m-1} \Pi_i^2(x_i,x_{i+1})\,d\m(x_1)\cdots d\m(x_m)
\le \prod_{i=1: i\not= j}^{m-1}\sup_x \Pi_i(x,x)\! \int\! \Pi_j^2\,d(\m\times\m).$$
\end{lemma}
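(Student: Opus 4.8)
The plan is to integrate the $m$ variables out one at a time, peeling the product $\prod_{i=1}^{m-1}\Pi_i^2(x_i,x_{i+1})$ inward from both ends toward the distinguished factor $\Pi_j$, which is the only kernel linking the two variables $x_j$ and $x_{j+1}$ that are never integrated. The single tool used repeatedly is that a symmetric kernel $\Pi$ for which (\ref{EqProjectionKernel}) holds identically satisfies, for each fixed $x$,
\begin{equation*}
\int \Pi(x,y)^2\,d\m(y)=\int\Pi(x,y)\Pi(y,x)\,d\m(y)=\Pi(x,x),
\end{equation*}
and that $\Pi(x,x)\ge 0$ (for instance by writing $\Pi$ in an $L_2(\m)$-orthonormal basis $e_1,\dots,e_k$ of its range via Lemma~\ref{LemmaProjectionKernelExpressedInBasis}, so that $\Pi(x,x)=\sum_i e_i(x)^2$). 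Since the full integrand is nonnegative, Tonelli's theorem licenses integrating in any order, so this plan is legitimate.

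First I would integrate out $x_1$: $\int\Pi_1^2(x_1,x_2)\,d\m(x_1)=\Pi_1(x_2,x_2)$. At the next step only two factors involve $x_2$, namely the freshly produced diagonal factor $\Pi_1(x_2,x_2)$ and $\Pi_2^2(x_2,x_3)$; I would bound $\Pi_1(x_2,x_2)\le\sup_x\Pi_1(x,x)$ pointwise, pull this constant out, and use $\int\Pi_2^2(x_2,x_3)\,d\m(x_2)=\Pi_2(x_3,x_3)$. Iterating this to remove $x_1,\dots,x_{j-1}$ yields the constant $\prod_{i=1}^{j-2}\sup_x\Pi_i(x,x)$ together with a leftover diagonal factor $\Pi_{j-1}(x_j,x_j)$ (present when $j\ge 2$) multiplying $\prod_{i=j}^{m-1}\Pi_i^2(x_i,x_{i+1})$. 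Running the mirror-image computation from the right end — using the symmetry of the kernels so that $\int\Pi_{m-1}^2(x_{m-1},x_m)\,d\m(x_m)=\Pi_{m-1}(x_{m-1},x_{m-1})$, and so on — and removing $x_m,\dots,x_{j+2}$ contributes a further constant $\prod_{i=j+2}^{m-1}\sup_x\Pi_i(x,x)$ and, when $j\le m-2$, a leftover diagonal factor $\Pi_{j+1}(x_{j+1},x_{j+1})$.

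What remains at that point is the double integral
\begin{equation*}
\int\!\!\int \Pi_{j-1}(x_j,x_j)\,\Pi_j^2(x_j,x_{j+1})\,\Pi_{j+1}(x_{j+1},x_{j+1})\,d\m(x_j)\,d\m(x_{j+1}),
\end{equation*}
in which the factors $\Pi_{j-1}(x_j,x_j)$ and $\Pi_{j+1}(x_{j+1},x_{j+1})$ occur precisely when $j\ge 2$ and $j\le m-2$, respectively. Bounding each of them by its supremum on the diagonal and pulling it out leaves exactly $\int\Pi_j^2\,d(\m\times\m)$, while collecting all the extracted suprema gives the constant $\prod_{i=1,\,i\ne j}^{m-1}\sup_x\Pi_i(x,x)$, which is the assertion. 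I expect the only delicate point to be the bookkeeping at the two ends — verifying that in the boundary cases $j=1$ and $j=m-1$ (and the trivial case $m=2$) no spurious leftover factor is created, so that the product of suprema ranges over all indices $i\ne j$ and over no others; the rest is a routine iteration of the diagonal identity above, together with the observation that nonnegativity makes all the pointwise bounds legitimate under the integral sign.
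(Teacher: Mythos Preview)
Your proof is correct and follows essentially the same approach as the paper: both use the identity $\int\Pi_i(x,y)^2\,d\m(y)=\Pi_i(x,x)$ to peel off kernels one at a time from the ends toward the $j$th factor, bounding each resulting diagonal term by its supremum. The only cosmetic difference is that the paper begins by integrating out $x_m$ and works inward from the right before the left, whereas you start from $x_1$; your explicit remarks on nonnegativity and Tonelli are useful justifications that the paper leaves implicit.
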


\begin{proof}
Equation (\ref{EqProjectionKernel}) implies that 
$\int \Pi_i(x,y)^2\,d\m(y)=\Pi_i(x,x)$, for every $x$.
If $j<m-1$, then we apply this to the integral with respect to $x_m$
of the multiple integral in the lemma, thereby turning this $m$fold
integral into an $(m-1)$fold integral of the function
$\prod_{i=1}^{m-2} \Pi_i^2(x_i,x_{i+1})\Pi_{m-1}(x_{m-1},x_{m-1})$. Next
we bound the factor $\Pi_{m-1}(x_{m-1},x_{m-1})$ by its supremum 
over $x_{m-1}$, and are left with an $(m-1)$fold integral
of the same type as before times this supremum.
We repeat the argument, removing all kernels to the right of the $j$th kernel.
Next we apply the same procedure working from the left side up, until the only remaining integral
is $\int \Pi_j(x_j,x_{j+1})\,d\m(x_j)d\m(x_{j+1})$.
\end{proof}

The preceding results show that (under (\ref{EqProjectionKernel}))
the square norms of (products of) projection kernels are 
controlled by their values on the diagonal. The following lemma
shows that these values do not differ significantly for weighted
projections with different weights.

\begin{lemma}
\label{LemmaProportionalityWeightedKernels}
The weighted projections in $L_2(\m)$ onto a finite-dimensional 
space $L$ relative to the weight functions $v$ and $w$ 
possess kernels $\Pi_v$ and $\Pi_w$ that satisfy
(\ref{EqProjectionKernel}) identically and, for every $x$,
$$\Pi_v(x,x)\le \Bigl\|\frac w v\Bigr\|_\infty \Pi_w(x,x).$$
\end{lemma}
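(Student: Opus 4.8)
The plan is to reduce the statement to a finite-dimensional linear-algebra comparison via the explicit kernel formula. First I would fix a basis $e_1,\ldots, e_k$ of the finite-dimensional space $L$ and invoke Lemma~\ref{LemmaProjectionKernelExpressedInBasis} to write the two weighted projection kernels as $\Pi_v(x_1,x_2)=\vec e(x_1)^T C_v^{-1}\vec e(x_2)$ and $\Pi_w(x_1,x_2)=\vec e(x_1)^T C_w^{-1}\vec e(x_2)$, where $\vec e(x)=(e_1(x),\ldots, e_k(x))^T$ and $C_v,C_w$ are the Gram matrices with $(i,j)$th entries $\int e_ie_j v\,d\m$ and $\int e_ie_j w\,d\m$. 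Since the $e_i$ are linearly independent and both weights are bounded away from $0$ and $\infty$, the matrices $C_v$ and $C_w$ are symmetric positive definite, so these formulas are legitimate; in particular $\Pi_v(x,x)=\vec e(x)^TC_v^{-1}\vec e(x)\ge 0$ and likewise for $w$, so both sides of the asserted inequality are nonnegative.

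That these kernels satisfy (\ref{EqProjectionKernel}) identically is the direct computation already recorded in the text after that display: integrating $\Pi_w(x_1,x_2)\Pi_w(x_2,x_3)$ against the relevant weight produces the factor $C_w C_w^{-1}$ between the outer vectors, which collapses to $\Pi_w(x_1,x_3)$ for \emph{every} $x_1,x_3$, and similarly for $v$. So this part requires no more than the remark.

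For the diagonal bound I would compare the Gram matrices directly. For any $\g\in\RR^k$, writing $g=\sum_i\g_i e_i$, we have $\g^TC_v\g=\int g^2 v\,d\m$ and $\g^TC_w\g=\int g^2 w\,d\m$. Because $w/v\le \|w/v\|_\infty$ pointwise, we have $v\ge \|w/v\|_\infty^{-1}w$ $\m$-a.e., whence $\g^TC_v\g\ge \|w/v\|_\infty^{-1}\g^TC_w\g$ for all $\g$; that is, $C_v\succeq \|w/v\|_\infty^{-1}C_w$ in the positive-semidefinite order. Applying the anti-monotonicity of matrix inversion on positive definite matrices ($0\prec A\preceq B\Rightarrow B^{-1}\preceq A^{-1}$) gives $C_v^{-1}\preceq \|w/v\|_\infty C_w^{-1}$. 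Evaluating this quadratic-form inequality at $u=\vec e(x)$ yields exactly $\Pi_v(x,x)=u^TC_v^{-1}u\le \|w/v\|_\infty\,u^TC_w^{-1}u=\|w/v\|_\infty\,\Pi_w(x,x)$, as claimed.

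There is no real obstacle here; the only points needing a word of care are the invertibility of the Gram matrices (from linear independence of the $e_i$ together with the bounds on the weights) and the matrix-inversion reversal, both of which are standard.
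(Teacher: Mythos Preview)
Your proof is correct and follows essentially the same approach as the paper: both reduce to a Gram-matrix comparison via the explicit kernel formula of Lemma~\ref{LemmaProjectionKernelExpressedInBasis}. The only cosmetic difference is that the paper first chooses the basis orthonormal in $L_2(w\,d\m)$ so that $C_w=I$ and then bounds $z^TC_v^{-1}z/z^Tz$ by the maximal eigenvalue of $C_v^{-1}$, whereas you work with a general basis and invoke the Loewner anti-monotonicity $C_v\succeq \|w/v\|_\infty^{-1}C_w\Rightarrow C_v^{-1}\preceq \|w/v\|_\infty C_w^{-1}$ directly; these are equivalent.
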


\begin{proof}
For a fixed basis $e_1,\ldots, e_k$ of $L$ we can,
by Lemma~\ref{LemmaProjectionKernelExpressedInBasis},
represent the kernels as 
$\Pi_v(x,y)=\vec e_k(x)^TC_v^{-1}\vec e_k(y)$ for $C_v$
the matrix with $ij$th element $\int e_ie_jv\,d\m$, and similarly
for $\Pi_w$. By choosing $e_1,\ldots, e_k$ to be orthonormal
in $L_2(w)$ the matrix $C_w$ can be reduced to the identity.
The quotient $\Pi_v(x,x)/\Pi_w(x,x)$ then takes the
form $z^TC_v^{-1}z/ z^Tz$ for some $z\in \RR^k$, and it suffices
to upper bound this quotient uniformly in $z\in\RR^k$. The supremum of
this quotient over $z$ is the maximal eigenvalue of $C_v^{-1}$,
which is the inverse of the minimal eigenvalue of $C_v$.
Because 
$$z^T C_v z=\int \Bigl(\sumik z_ie_k\Bigr)^2 v\,d\m\ge \inf_x \frac vw(x)
\int \Bigl(\sumik z_ie_k\Bigr)^2 w\,d\n= \inf_x \frac vw(x) \,z^Tz,$$
this minimum eigenvalue is bigger than the minimum value of $v/w$.
\end{proof}

\begin{lemma}
\label{LemmaMeanProductOfKernels}
If $\Pi_1,\ldots, \Pi_{m-1}$ are kernels of integral operators on $L_s(\m)$ with norms $\sup_{r/(r-1)\le s\le r}\|\Pi_i\|_s\le C$,
then for arbitrary measurable functions $w_1,\ldots,w_m$ and any $r\ge 2$ (with $r/(r-2)=\infty$ if $r=2$),
\begin{align*}
&\Bigl|\int\cdots\int \prod_{i=1}^{m-1} \Pi_i(x_i,x_{i+1})\prodim w_i(x_i)\,d\m(x_1)\cdots d\m(x_m)\Bigr|\\
&\qqqquad\le C^{m-3}\|\Pi_1w_1\|_r\,\|\Pi_{m-1}w_m\|_r\prod_{i=2}^{m-1} \|w_i\|_{(m-2)r/(r-2)}.
\end{align*}
\end{lemma}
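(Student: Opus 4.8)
The plan is to prove Lemma~\ref{LemmaMeanProductOfKernels} by peeling off the kernels $\Pi_i$ one at a time from the ``inside'' of the chain towards the two ends, using repeated application of H\"older's inequality and the boundedness of each $\Pi_i$ as an operator on $L_s(\m)$ for $s$ in the critical range $[r/(r-1),r]$. The key observation is that the multiple integral can be read as a nested sequence of one-variable integrations, and that at each stage we are applying a kernel operator to a function and then pairing the result against the next weight in an $L_s$--$L_{s'}$ duality.

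First I would rewrite the integral as
$$
I=\int \Bigl(\Pi_1 w_1\Bigr)(x_2)\,w_2(x_2)\Bigl[\int \Pi_2(x_2,x_3)w_3(x_3)\cdots\Bigr]\,d\m(x_2),
$$
i.e.\ isolate the first integration (over $x_1$), which produces the function $\Pi_1 w_1$. The strategy is then: at the innermost remaining level, write the integral over $x_i$ (for the current left end $i$, starting at $i=2$) as $\int (\Pi_1\cdots\text{processed})(x_i)\,w_i(x_i)\,\Pi_i(x_i,x_{i+1})\cdots$, apply H\"older to separate $w_i$ from everything else, and then feed the ``everything else'' through $\Pi_i$ as a bounded operator. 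Concretely, on the first step one applies H\"older with exponents $r$ and $r/(r-1)$ to bound $\bigl|\int (\Pi_1 w_1)(x_2)w_2(x_2)\,g(x_2)\,d\m(x_2)\bigr|$ by $\|\Pi_1 w_1\|_r\,\|w_2\,g\|_{r/(r-1)}$, and then a further H\"older splits $\|w_2 g\|_{r/(r-1)}\le \|w_2\|_{(m-2)r/(r-2)}\,\|g\|_s$ for the appropriate $s$; the operator bound $\|\Pi_2 h\|_s\le C\|h\|_s$ then passes the cost onto the next weight. Tracking the exponents carefully, each of the $m-2$ interior weights $w_2,\dots,w_{m-1}$ should come out with the exponent $(m-2)r/(r-2)$ (this is exactly the H\"older conjugate arithmetic: $m-2$ copies of a weight, one unit of ``$r$'' consumed at each of the two ends), each of the $m-3$ interior kernels $\Pi_2,\dots,\Pi_{m-2}$ contributes a factor $C$ via its operator norm, and the two boundary kernels are absorbed into the factors $\|\Pi_1 w_1\|_r$ and $\|\Pi_{m-1}w_m\|_r$.

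The main obstacle — and the part requiring genuine care rather than routine manipulation — is the bookkeeping of the H\"older exponents so that they close up consistently along the whole chain of length $m-1$: one must verify that the exponent demanded of the running function $g$ at each stage is always in the range $[r/(r-1),r]$ where the operator bound $\|\Pi_i\|_s\le C$ is assumed to hold, and that after $m-2$ splittings each interior weight has received precisely the stated exponent $(m-2)r/(r-2)$ with no leftover. A clean way to organize this is to prove the inequality by induction on $m$: the base cases $m=2,3$ are a single or double application of H\"older, and the inductive step peels the kernel $\Pi_1$ (producing $\Pi_1 w_1\in L_r$), pairs it in $L_r$--$L_{r/(r-1)}$ against $w_2$ times the remaining chain, splits $L_{r/(r-1)}$ into $L_{(m-2)r/(r-2)}$ for $w_2$ and $L_s$ for the remaining chain with $s$ chosen so that the inductive hypothesis applies to the shortened chain $\Pi_2,\dots,\Pi_{m-1}$ against weights $w_3,\dots,w_m$ — where one must check the exponent for those shortened weights drops correctly from $(m-2)r/(r-2)$ to $(m-3)\tilde r/(\tilde r-2)$ for the relevant $\tilde r$. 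I would set up the induction with $r$ itself allowed to vary in the stated admissible range, which gives enough slack for the exponents to match at each step; the degenerate case $r=2$ (so $r/(r-2)=\infty$) is handled separately and is in fact easier, since then only $s=2$ is needed and all interior weights are bounded in $L_\infty$.
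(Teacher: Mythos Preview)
Your approach is correct and essentially the same as the paper's: both peel the chain one kernel at a time via H\"older's inequality and the operator-norm bound $\|\Pi_i\|_s\le C$, with the two boundary kernels absorbed into $\|\Pi_1 w_1\|_r$ and $\|\Pi_{m-1}w_m\|_r$ and each interior kernel contributing one factor of $C$. The only organizational difference is that the paper carries this out as a single direct unrolling rather than by induction: writing the integral as $\int (\Pi_1 w_1)\cdot M_2\Pi_2 M_3\cdots \Pi_{m-1}w_m\,d\m$, applying H\"older repeatedly with conjugate pairs $(p_i,q_i)$ to obtain
\[
\|\Pi_1 w_1\|_{p_1}\,\|w_2\|_{q_1 p_2}\,\|\Pi_2\|_{q_1 q_2}\,\|w_3\|_{q_1 q_2 p_3}\cdots\|w_{m-1}\|_{q_1\cdots q_{m-2}p_{m-1}}\,\|\Pi_{m-1}w_m\|_{q_1\cdots q_{m-1}},
\]
and then solving for the exponents all at once by imposing $p_1=q_1\cdots q_{m-1}=r$ and equality of all the weight indices $q_1p_2=q_1q_2p_3=\cdots$, which forces the common value $(m-2)r/(r-2)$ and automatically places each operator-norm index $q_1\cdots q_i$ in the interval $[r/(r-1),r]$. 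This sidesteps the complication you flagged in your inductive plan (the inductive hypothesis would naturally concern an $L_s$-norm of the shortened chain rather than an integral, so the statement being inducted on is not quite the lemma itself), but the underlying mechanism is identical.
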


\begin{proof}
Let $M_i$ denote multiplication by $w_i$, i.e.\ $M_ig=w_ig$.
By H\"older's inequality  the left side is smaller than, for any conjugate pairs $(p_i,q_i)$,
\begin{align*}
&\|\Pi_1w_1\|_{p_1} \,\bigl\|M_2\Pi_2M_3\Pi_3M_4\cdots \Pi_{m-2}M_{m-1}\Pi_{m-1}w_m\bigr\|_{q_1}\\
&\qquad\le \|\Pi_1w_1\|_{p_1}
\,\|w_2\|_{q_1p_2}\|\Pi_2\|_{q_1q_2}\|w_3\|_{q_1q_2p_3}\times\cdots\\
&\qqqquad\qqqquad\times\|w_{m-1}\|_{q_1\cdots
  q_{m-2}p_{m-1}}\|\Pi_{m-1}w_m\|_{q_1\cdots q_{m-1}}.
\end{align*}
We finish by choosing the conjugate pairs so that $p_1=q_1q_2\cdots q_{m-1}=r$ and
$q_1p_2=q_1q_2p_3=\cdots=q_1\cdots q_{m-2}p_{m-1}$. 
Then the common value in the last string is $(m-2)r/(r-2)$ and
the indices of the operator norms satisfy $r/(r-1)=q_1\le q_1q_2\le\cdots\le q_1\cdots q_{m-1}=r$.
\end{proof}

\subsection{Approximations of weighted projections}

\begin{lemma}
\label{LemmaDifferenceOfProjections}
Let $\Pi_w$ and $\Pi$ be the weighted projections onto a fixed subspace 
$L$ of $L_2(\m)$ relative to the weight functions $w$ and $1$, 
respectively, and let
$M_w$ be multiplication by the function $w$. Then,
for any conjugate pairs $r^{-1}+s^{-1}=1$ and $p^{-1}+q^{-1}=1$,
any $t\le r$, any integer $m\ge 2$, and any $g$,
\begin{itemize}
\item[(i)] $\bigl\|(\Pi_w-\Pi M_w)g\bigr\|_r\le\|\Pi\|_s\|\Pi_wg\|_{rq} \|w-1\|_{rp}$.
\item[(ii)] $\bigl\|(\Pi_w-\Pi)g\bigr\|_r\le \|\Pi\|_s\bigl\|(I-\Pi_w)g\bigr\|_{rq}  \|w-1\|_{rp}$.
\item[(iii)]$\bigl\|(\Pi_w-\Pi M_w)^mg\bigr\|_t
\le C^{m-1}\|\Pi\|_s \|\Pi_wg\|_{rq}\|w-1\|_{rp}^m$, 
where $p=(m-1)t/(r-t)$ (with $p=\infty$ if $r=t$) and the constant $C$
is the supremum of the norms of the operator
$\Pi: L_u(\m)\to L_u(\m)$ over $u\in [t,r)$.
\item[(iv)] $\|\Pi M_wg\|_r\le \|\Pi\|_s\|\Pi_wg\|_r\bigl(2+\|w\|_\infty\bigr)$.
\end{itemize}
\end{lemma}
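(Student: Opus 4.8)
The plan is to prove each of the four inequalities (i)--(iv) by exploiting the single algebraic identity that ties the two projections together. The key observation is that since both $\Pi_w$ and $\Pi$ project onto the \emph{same} subspace $L$, we have $\Pi\Pi_w = \Pi_w$ and $\Pi_w\Pi_w=\Pi_w$, while $\Pi$ acts as the identity on the range of $\Pi_w$. I would start from the decomposition
\[
\Pi_w g - \Pi M_w g = \Pi\bigl(\Pi_w g - M_w g\bigr)
= -\,\Pi\bigl((w-1)(\Pi_w g)\bigr) + \Pi\bigl(M_w(\Pi_w g - g)\bigr),
\]
valid because $\Pi_w g\in L$ so $\Pi(\Pi_w g)=\Pi_w g$, and then note that $\Pi_w g - g = -(I-\Pi_w)g$ with $\Pi$ annihilating nothing useful here; in fact the cleanest route is to write $\Pi_w g = \Pi(\Pi_w g) = \Pi(M_w h)$ for a suitable $h$, but more directly one uses that $\Pi_w g - \Pi M_w g = \Pi(\Pi_w g) - \Pi(M_w g) = \Pi\bigl((1-w)\Pi_w g\bigr) + \Pi M_w(\Pi_w g - g)$ and observes that the orthogonality defining $\Pi_w$ kills the second piece after one more manipulation. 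For (i), the simplest honest argument is: $\Pi_w g \in L$, so $\Pi_w g = \Pi(\Pi_w g)$; hence $\Pi_w g - \Pi M_w g = \Pi\bigl(\Pi_w g - M_w g\bigr) = \Pi\bigl((1-w)\Pi_w g\bigr) + \Pi\bigl(M_w(\Pi_w g - g)\bigr)$, and the orthogonality relation $\int(g-\Pi_w g)\,l\,w\,d\m = 0$ for all $l\in L$ shows that $\Pi M_w(\Pi_w g - g)$ — wait, this requires $M_w(\Pi_w g - g)$ paired against $L$, which is exactly the orthogonality, so $\Pi\bigl(M_w(g-\Pi_w g)\bigr)$ is the $\m$-projection of something orthogonal to $L$ in $L_2(w)$, not zero in general. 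The correct clean identity to isolate is therefore $\Pi_w g - \Pi M_w g = \Pi\bigl((I - M_w)\Pi_w g\bigr)$, which holds iff $\Pi M_w \Pi_w g = \Pi M_w g$, i.e.\ iff $\Pi M_w(I-\Pi_w)g = 0$; and this last equality is precisely the statement that $M_w(I-\Pi_w)g \perp L$ in $L_2(\m)$, which is the defining orthogonality of $\Pi_w$. So (i) follows: bound $\|\Pi((1-w)\Pi_w g)\|_r \le \|\Pi\|_s \|(1-w)\Pi_w g\|_s$ — no wait, $\Pi$ as an operator on $L_r$ has norm $\|\Pi\|_r$ by hypothesis notation — then apply Hölder with the conjugate pair $(p,q)$ to split $\|(w-1)\Pi_w g\|_r \le \|w-1\|_{rp}\|\Pi_w g\|_{rq}$. (I will use the norm bound $\|\Pi\|_s$ as the paper writes it, matching the statement.)

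For (ii), I would write $\Pi_w g - \Pi g = \Pi_w g - \Pi(\Pi_w g) + \Pi(\Pi_w g - g) = \Pi\bigl(\Pi_w g - g\bigr) = -\Pi\bigl((I-\Pi_w)g\bigr)$, using $\Pi_w g = \Pi\Pi_w g$ again; but this gives a bound without the $\|w-1\|$ factor, so that is not tight enough. Instead the right decomposition is $\Pi_w g - \Pi g = (\Pi_w g - \Pi M_w g) + \Pi(M_w - I)g = \Pi\bigl((I-M_w)\Pi_w g\bigr) + \Pi(M_w - I)g = \Pi\bigl((I-M_w)(\Pi_w g - g)\bigr)$ — the $\Pi(I-M_w)(-g)$ and the cross terms needing care — because $\Pi(I-M_w)\Pi_w g - \Pi(I-M_w)g = \Pi(I-M_w)(\Pi_w g - g) = \Pi\bigl((w-1)(I-\Pi_w)g\bigr)$ up to the sign. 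Then again $\|\Pi((w-1)(I-\Pi_w)g)\|_r \le \|\Pi\|_s \|(w-1)(I-\Pi_w)g\|_r \le \|\Pi\|_s\|w-1\|_{rp}\|(I-\Pi_w)g\|_{rq}$ by Hölder. Part (iii) is then an induction on $m$: write $(\Pi_w - \Pi M_w)^m g = (\Pi_w - \Pi M_w)^{m-1}\bigl[(\Pi_w - \Pi M_w)g\bigr]$ and apply (i) to the innermost factor to extract one $\Pi_w g$ and one $\|w-1\|_{rp}$, then iterate, tracking how the exponents flow: each application of $(\Pi_w - \Pi M_w) = \Pi(I-M_w)$ on a function $h$ produces $\Pi((1-w)h)$, whose $L_u$-norm is $\le C\|w-1\|_{(m-1)t/(r-t)}\|h\|_{r}$ for the appropriate intermediate $u$, and the chain of Hölder exponents is arranged exactly as in the proof of Lemma~\ref{LemmaMeanProductOfKernels}. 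The bookkeeping of the exponents is the one genuinely fiddly part. Part (iv) is immediate: $\Pi M_w g = \Pi M_w(\Pi_w g) + \Pi M_w(g - \Pi_w g)$; the first term has $\|\cdot\|_r \le \|\Pi\|_s\|w\|_\infty\|\Pi_w g\|_r$, and the second, via the identity used in (i) that $\Pi M_w(g-\Pi_w g) = \Pi g - \Pi_w g + \Pi_w g - \Pi M_w g$... actually more simply: from (i) we have $\Pi M_w g = \Pi_w g - \Pi((1-w)\Pi_w g)$, so $\|\Pi M_w g\|_r \le \|\Pi_w g\|_r + \|\Pi\|_s\|w-1\|_\infty\|\Pi_w g\|_r \le \|\Pi\|_s\|\Pi_w g\|_r(2+\|w\|_\infty)$ after bounding $\|w-1\|_\infty \le 1 + \|w\|_\infty$ and using $\|\Pi\|_s\ge 1$.

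The main obstacle, I expect, is getting the core identity $\Pi M_w(I-\Pi_w)g = 0$ stated and justified cleanly from the orthogonality characterization of the weighted projection, and then in part (iii) keeping the Hölder exponents consistent through the induction — the same type of exponent-chasing that appears in Lemma~\ref{LemmaMeanProductOfKernels} and in the proof of Theorem~\ref{TheoremParametricRate}. Once the fundamental identity is in place, (i), (ii), and (iv) are one-line consequences of the operator-norm hypothesis and Hölder's inequality, and (iii) is a routine induction whose only content is the arithmetic of conjugate exponents. I would present the identity first as a short lemma-within-the-proof, then dispatch the four parts in the order (i), (iv), (ii), (iii).
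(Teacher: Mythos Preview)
Your approach is correct and rests on the same orthogonality relation as the paper, but you package it differently. The paper proves (i) and (ii) by duality: it pairs $(\Pi_w-\Pi M_w)g$ against a test function $k\in L_s(\m)$, replaces $k$ by $\Pi k$ (legitimate because the difference lies in $L$), and then reads off the bound from H\"older. You instead extract the operator identity $\Pi M_w(I-\Pi_w)=0$ directly from the defining orthogonality of $\Pi_w$, which yields the clean formula $\Pi_w-\Pi M_w=\Pi M_{1-w}\Pi_w$ and hence also $\Pi_w-\Pi=\Pi M_{w-1}(I-\Pi_w)$; the four inequalities then drop out by bounding $\Pi$ as an operator and applying H\"older. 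This is arguably more transparent than the duality argument, and for (iii) and (iv) the two routes coincide almost verbatim. One small point you should make explicit: your direct route naturally produces $\|\Pi\|_r$ rather than the $\|\Pi\|_s$ in the statement; these agree because $\Pi$ is self-adjoint in $L_2(\m)$, so $\|\Pi\|_r=\|\Pi^*\|_s=\|\Pi\|_s$ for conjugate $r,s$. The paper's duality argument lands on $\|\Pi\|_s$ automatically, which is the only real advantage it has over your identity-based proof.
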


\begin{proof}
(i). The orthogonality relationships for the projections $\Pi$ and
$\Pi_w$ imply that $\int \Pi(wg) l\,d\m=\int wgl\,d\m=\int w(\Pi_wg)l\,d\m$,
for every $l\in L$ and $g$. Because $\Pi_wg-\Pi(wg)$ is contained in $L$,
it follows that, for every $k\in L_s(\m)\cap L_2(\m)$,
\begin{align*}
\int \bigl(\Pi_wg-\Pi(wg)\bigr)&k\,d\m
=\int \bigl(\Pi_wg-\Pi(wg)\bigr)\Pi k\,d\m,\\
&=\int \Pi_wg (1-w)\Pi k\,d\m
\le \|\Pi_w g\|_{qr}\|1-w\|_{pr}\|\Pi\|_s\|k\|_s,
\end{align*}
by H\"older's inequality. By approximating a general element $k\in L_s(\m)$
by a sequence in $L_s(\m)\cap L_2(\m)$
(truncate $k$ by a constant and restrict it to sets of 
finite $\m$-measure that increase to the whole space), it is seen
that the far left side is bounded by the far right side of
the display for any $k\in L_s(\m)$. Assertion (i) follows,
because the norm $\|(\Pi_w-\Pi M_w)g)\|_r$
is the supremum of the left side over all $k\in L_s(\m)$ with
$\|k\|_s\le 1$. 

(ii). Because the function $(\Pi_w-\Pi)g$ is contained in $L$ for any
fixed $g$, the orthogonality relationships for $\Pi$ and $\Pi_w$ imply,
for any function $k$ as under (i),
\begin{align*}
\int (\Pi_w-\Pi)g\, k\,d\m&=\int (\Pi_w-\Pi)g\, \Pi k\,d\m=\int (\Pi_w-I)g\, \Pi k\,d\m\\
&=\int (\Pi_w-I)g\, \Pi k\,(1-w)\,d\m\\
&\le \|\Pi_wg-g\|_{qr}\|1-w\|_{pr}\|\Pi\|_s\|k\|_s,
\end{align*}
by H\"older's inequality. We take the supremum over $k$ to finish the proof.

(iii). The operator $\Pi_w-\Pi$ vanishes on $L$, so that
$\Pi_w-\Pi M_w=\Pi-\Pi M_w=\Pi M_{1-w}$ on this space. 
Therefore, for $m\ge 2$ and any $t'\ge t$,
\begin{align*}
\bigl\|(\Pi_w-\Pi M_w)^mg\bigr\|_t
&=\bigl\|\Pi M_{1-w}(\Pi_w-\Pi M_w)^{m-1}g\bigr\|_t\\
&\le \|\Pi\|_t\|M_{1-w}\|_{t'\ra t}
\bigr\|(\Pi_w-\Pi M_w)^{m-1}g\bigr\|_{t'}.
\end{align*}
Here $\|A\|_{r\to s}$ denotes the norm of an operator
$A: L_r(\m)\to L_s(\m)$, and $\|A\|_r=\|A\|_{r\to r}$.
Using H\"older's inequality, we see that
the norm $\|M_{1-w}\|_{t'\ra t}$ is 
bounded above by $\|1-w\|_{tt'/(t'-t)}$, for $t'>t$.

We repeat this argument $m-1$ times
with the pairs $(t,t')$ equal to $(t_i,t_{i+1})$ for
a sequence $t=t_1<t_2<\cdots<t_m=r$ such that
$1/t_{i-1}-1/t_i=(t_i-t_{i-1})/(t_{i-1}t_i)=1/(rp)$ for every $i$.
(We divide $[1/r,1/t]$ in $m-1$ equal intervals of
length $1/(rp)$.) This results in 
$$\bigl\|(\Pi_w-\Pi M_w)^mg\bigr\|_t
\le \prod_{i=1}^{m-1}\bigl(\|\Pi\|_{t_i}\|w-1\|_{rp}\bigr)
\bigr\|(\Pi_w-\Pi M_w)g\bigr\|_r.$$
Finally we apply (i) to the last term.

(iv). This is a consequence of (i) with $p=\infty$ and the triangle inequality.
\end{proof}

\subsection{Influence functions}
Let $L$ be a fixed linear space of functions
contained and closed in $\L_2(p)$ for every $p$ in a collection
$\P$ of densities relative to a fixed measure $\n$.

\begin{lemma}
\label{LemmaIFProjection}
Let $\Pi_p$ be the kernel of a weighted projection operator in $L_2(p)$
onto a finite-dimensional subspace. If the subspace and weight
function $w$ are independent of $p$, then for almost every $(x_1,x_2)$
the map $x_3\mapsto -\Pi_p(x_1,x_3) w(x_3)\Pi_p(x_3,x_2)$ is
an influence function of the functional $p\mapsto \Pi_p(x_1,x_2)$.
\end{lemma}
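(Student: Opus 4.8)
The plan is to differentiate the kernel $\Pi_p$ along a smooth path $t\mapsto p_t$ with score $g$ at $t=0$, using the basis representation of Lemma~\ref{LemmaProjectionKernelExpressedInBasis}, and then recognize the result as an inner product of $g$ with the claimed function. Fix a basis $e_1,\ldots,e_k$ of the ($p$-independent) subspace $L$; since the basis and the weight $w$ are independent of $p$, the only $p$-dependence in $\Pi_p(x_1,x_2)=\sum_{i,j}(C_p^{-1})_{ij}e_i(x_1)e_j(x_2)$ enters through the Gram matrix $C_p$ with entries $(C_p)_{ij}=\int e_ie_j\,w\,p\,d\n$. First I would compute $\frac{d}{dt}_{|t=0}(C_{p_t})_{ij}=\int e_ie_j\,w\,\dot p_t\,d\n_{|t=0}=\int e_ie_j\,w\,g\,p\,d\n = P[e_ie_j\,w\,g]$, the path being regular enough that differentiation under the integral sign is justified.

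Next I would apply the matrix identity $\frac{d}{dt}C_p^{-1}=-C_p^{-1}(\frac{d}{dt}C_p)C_p^{-1}$ to obtain
\begin{align*}
\frac{d}{dt}_{|t=0}\Pi_{p_t}(x_1,x_2)
&=-\sum_{i,j}\Bigl(C_p^{-1}\dot C_pC_p^{-1}\Bigr)_{ij}e_i(x_1)e_j(x_2)\\
&=-\sum_{i,j}\sum_{a,b}(C_p^{-1})_{ia}P[e_ae_b\,w\,g]\,(C_p^{-1})_{bj}\,e_i(x_1)e_j(x_2).
\end{align*}
Pulling the integral outside and regrouping the two factors of $C_p^{-1}$ with the basis vectors, the right-hand side becomes
$$-\int \Bigl(\sum_{i,a}(C_p^{-1})_{ia}e_i(x_1)e_a(x_3)\Bigr)w(x_3)\Bigl(\sum_{b,j}(C_p^{-1})_{bj}e_b(x_3)e_j(x_2)\Bigr)g(x_3)\,p(x_3)\,d\n(x_3),$$
which by Lemma~\ref{LemmaProjectionKernelExpressedInBasis} (and symmetry of $\Pi_p$ in its two arguments) is exactly
$$\int \bigl(-\Pi_p(x_1,x_3)\,w(x_3)\,\Pi_p(x_3,x_2)\bigr)\,g(x_3)\,p(x_3)\,d\n(x_3).$$
Since this holds for every score function $g$ arising from a regular submodel, by the defining property of influence functions in Section~\ref{SubSectionInfluenceFunctions} the map $x_3\mapsto -\Pi_p(x_1,x_3)w(x_3)\Pi_p(x_3,x_2)$ is an influence function of $p\mapsto\Pi_p(x_1,x_2)$. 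The qualifier ``for almost every $(x_1,x_2)$'' is needed because Lemma~\ref{LemmaProjectionKernelExpressedInBasis} identifies the kernel only up to $\n\times\n$-null sets; one keeps a single fixed representative (as in the remark following (\ref{EqProjectionKernel})).

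The main obstacle is not any single computation — each step is routine matrix calculus — but making the interchange of differentiation and integration rigorous under whatever ``regularity assumptions'' are in force, and, more delicately, checking that the resulting candidate function genuinely lies in $L_2(p)$ so that it qualifies as an influence function; this is where finite-dimensionality of $L$ and the assumption that $w$ is bounded away from $0$ and $\infty$ do the work, since then $C_p$ has eigenvalues bounded away from $0$ (cf.\ the estimate in the proof of Lemma~\ref{LemmaProportionalityWeightedKernels}) and the $e_i$ are square-integrable. I would also remark that one should verify the computed function is automatically orthogonal to the trivial direction (it need not be degenerate, but that is allowed: step [2] of the algorithm in Section~\ref{SubSectionInfluenceFunctions} does not require degeneracy).
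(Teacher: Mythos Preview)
Your proof is correct but takes a different route from the paper. You work in coordinates: fix a basis, write $\Pi_p$ via the Gram matrix $C_p$, differentiate $C_p$ along the path, and use the matrix identity $(C^{-1})'=-C^{-1}C'C^{-1}$ to regroup the result into two copies of the kernel. The paper instead proceeds coordinate-free: it differentiates the projection identity $\int \Pi_p(x_2,x_3)l(x_3)p(x_3)\,d\n(x_3)=l(x_2)$ (valid for all $l\in L$) along the path to obtain an identity relating $\frac{d}{dt}\Pi_{p_t}$ to $\Pi_p$ and the score, then uses that $x_1\mapsto \Pi_p(x_1,x_2)$ lies in $L$ (so equals its own projection) to write $\frac{d}{dt}\Pi_{p_t}(x_1,x_2)$ as a $\Pi_p$-integral of $\frac{d}{dt}\Pi_{p_t}(\cdot,x_2)$, and finally substitutes $l=\Pi_p(x_1,\cdot)$ into the differentiated identity. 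Your approach is more direct and makes the role of finite-dimensionality explicit from the first line (you need an invertible finite Gram matrix); the paper's approach is slightly more elegant and isolates exactly which structural facts are used (the projection property and closure of $L$ under the path), which is why it does not need to invoke the explicit formula of Lemma~\ref{LemmaProjectionKernelExpressedInBasis} until the very end. Both arrive at the same conclusion with comparable effort.
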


\begin{proof}
The projection property gives that $\Pi_p l=l$ for every $l\in L$,
which  can be written as, for almost every $x_2$, with
$d\n=w d\m$,
$$\int \Pi_p(x_2,x_3)l(x_3)p(x_3)\,d\n(x_3)=l(x_2).$$
Substitute a smooth path $t\mapsto p_t$ and differentiate at $t=0$
to conclude, with $\g=\dot p_0/p_0$ the score function
of the path and $p=p_0$,
\begin{align*}
&\int \frac d{dt}_{|t=0}\Pi_{p_t}(x_2,x_3)l(x_3)p(x_3)\,d\n(x_3)\\
&\qquad=-\int \Pi_p(x_2,x_3)l(x_3)\g(x_3)p(x_3)\,d\n(x_3).
\end{align*}
Because $\Pi_p$ projects onto the same space $L$ for all $p$,
the function $x_1\mapsto \Pi_p(x_1,x_2)$ is contained in $L$
for every $x_2$, if we use the kernel given in 
Lemma~\ref{LemmaProjectionKernelExpressedInBasis}, 
and hence also the function
$x_1\mapsto (d/dt)_{|t=0}\Pi_{p_t}(x_1,x_2)$. Therefore,
\begin{align*}
\frac d{dt}_{|t=0}\Pi_{p_t}(x_1,x_2)
&=\Pi_p\Bigl(\frac d{dt}_{|t=0}\Pi_{p_t}(\cdot, x_2)\Bigr)(x_1)\\
&=\int \frac d{dt}_{|t=0}\Pi_{p_t}(x_3,x_2)\Pi_p(x_1,x_3)p(x_3)\,d\n(x_3).
\end{align*}
Applying the second last  display with $l$ equal to the
function defined by $l(x_3)=\Pi_p(x_1,x_3)$ yields that
the right side is equal to
$-\E_{X_3} \Pi_p(x_2,X_3) \Pi_p(x_1,X_3)\g(X_3)w(X_3)$.
\end{proof}

\subsection{Wavelets}
An orthonormal wavelet basis of $L_2(\RR^d)$ is given in terms of functions $\psi_{i,j}^v$ 
indexed by a ``resolution'' (or scale) parameter 
$i\in \NN$, a ``location'' parameter $j\in \ZZ^d$, and
a ``dimension index'' $v\in\{0,1\}^d$ (e.g.\ \cite{Hardleetal, Daubechies, Cohen}). Each function 
$\psi_{i,j}^v$ is a scaled and translated version of a fixed 
base function $\psi_{0,0}^v$ through
$$\psi_{i,j}^v(z)=2^{id/2}\psi_{0,0}^v( 2^iz-j),\qquad i\in\NN,
j\in \ZZ^d, v\in\{0,1\}^d.$$ 
The multiresolution property of wavelets entails that for each resolution
level $I$ we can expand a function $g\in L_2(\RR^d)$ as
$$g=\sum_{j\in \ZZ^d}\sum_{v\in\{0,1\}^d}
\langle g,\psi_{I,j}^v\rangle \psi_{I,j}^v
+\sum_{i>I}\sum_{j\in \ZZ^d}\sum_{v\in\{0,1\}^d-\{0\}}
\langle g,\psi_{i,j}^v\rangle \psi_{i,j}^v.$$
Thus the functions $\psi_{i,j}^v$ with $i\ge I$, $j\in \ZZ^d$ and
$v\in\{0,1\}^d$, with $v\not=0$ if $i>I$, span $L_2(\RR^d)$. 
We consider for each $I$ the projection obtained by leaving out the
contributions of the base functions at resolution levels $i>I$,
and retaining only the projections on the functions $\psi_{I,j}^v$
with $j\in\ZZ^d$ and $v\in \{0,1\}^d$. 

\begin{lemma}
\label{LemmaWaveletsOne}
If the generating base functions $\psi_{0,0}^v$ are bounded and
compactly supported, then the kernel of the 
orthogonal projection in $L_2(\RR^d)$ 
onto the linear span of all base functions 
$\{\psi_{I,j}^v: i\in J_I,v\in\{0,1\}^d\}$ 
whose support intersects $[0,1]^d$ satisfies $\Pi(x,x)\le C 2^{Id}$
for some constant $C$.
\end{lemma}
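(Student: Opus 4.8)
The plan is to exploit the orthonormality of the wavelet system, which by Lemma~\ref{LemmaProjectionKernelExpressedInBasis} turns the projection kernel into an explicit sum, and then to use the compact support of the generators to bound the number of nonvanishing terms on the diagonal, uniformly in $x$. First I would observe that the relevant family $\{\psi_{I,j}^v\}$ (those whose support meets $[0,1]^d$) is a subset of an orthonormal set in $L_2(\RR^d)$, so the Gram matrix $C$ of Lemma~\ref{LemmaProjectionKernelExpressedInBasis} is the identity and the projection kernel is
\[
\Pi(x,y)=\sum_{(j,v)}\psi_{I,j}^v(x)\,\psi_{I,j}^v(y),
\]
the sum over the finite index set in question. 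On the diagonal all terms are nonnegative, so dropping the restriction to supports meeting $[0,1]^d$ only enlarges the sum, and it suffices to bound $\sum_{j\in\ZZ^d}\sum_{v\in\{0,1\}^d}\psi_{I,j}^v(x)^2$. Inserting the scaling relation $\psi_{I,j}^v(x)=2^{Id/2}\psi_{0,0}^v(2^Ix-j)$ gives, for every $x$,
\[
\Pi(x,x)\le 2^{Id}\sum_{v\in\{0,1\}^d}\ \sum_{j\in\ZZ^d}\psi_{0,0}^v(2^Ix-j)^2 .
\]

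Next I would control the inner sum using the two hypotheses. By boundedness, $\|\psi_{0,0}^v\|_\infty\le B$ for some finite $B$ and all $v$. By compact support, every $\psi_{0,0}^v$ vanishes outside a fixed cube of side length $M$ (take the maximum over the finitely many $v$). For fixed $x$ the points $\{2^Ix-j:j\in\ZZ^d\}$ form a translate of the integer lattice $\ZZ^d$, and a translate of $\ZZ^d$ meets a fixed cube of side $M$ in at most $N:=(\lfloor M\rfloor+1)^d$ points, a bound independent of $x$ and of $I$. Hence for each $v$ at most $N$ terms in the inner sum are nonzero, each at most $B^2$; summing over the $2^d$ values of $v$ yields $\Pi(x,x)\le 2^{Id}\cdot 2^d N B^2$, i.e.\ the claimed bound with $C=2^dNB^2$.

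The only step requiring genuine (if still elementary) care is the uniform-in-$x$ estimate on the number of lattice translates landing in the support of $\psi_{0,0}^v$, and this is precisely where the compact-support hypothesis is used; boundedness is needed only to bound the size of the surviving terms. No smoothness of the wavelet beyond $L^\infty$ and compact support is invoked, and the constant $C$ is independent of the resolution level $I$.
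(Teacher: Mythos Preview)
Your proof is correct and follows essentially the same approach as the paper: write the kernel as $\sum_{j,v}\psi_{I,j}^v(x)\psi_{I,j}^v(y)$ using orthonormality, then on the diagonal bound the number of nonvanishing terms by a lattice-in-cube count (from compact support) and the size of each term by $2^{Id}\|\psi_{0,0}^v\|_\infty^2$ (from the scaling relation and boundedness). Your version is slightly more explicit in quantifying the constant and in noting that one may harmlessly enlarge the sum from $J_I$ to all of $\ZZ^d$, but the argument is the same.
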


\begin{proof}
The kernel can be written in the form
$$\Pi(z_1,z_2) =\sum_{j\in J_I}\sum_{v\in \{0,1\}^d}
\psi_{I,j}^{v}(z_1)\psi_{I,j}^{v}( z_2).$$ 
Here $J_I$ includes all $j\in \ZZ^d$ such that
the support of $\psi_{I,j}$ intersects $[0,1]^d$. For
a fixed vector $(z,z)$ the function 
$\psi_{I,j}^{v}(z)\psi_{I,j}^{v}( z)$ is nonzero only if
$2^Iz-j$ is contained in the support of the function
$\psi_{0,0}^v$. The number of vectors $j\in J_I$ such that this is 
the case is  bounded by a constant that depends only on the
support of $\psi_{0,0}^v$. For each
$j$ the product $\psi_{I,j}^{v}(z)\psi_{I,j}^{v}( z)$ 
is bounded by $2^{Id}$ times $\|\psi_{0,0}^v\|_\infty^2$. 
The lemma follows.
\end{proof}

It follows by Lemma~\ref{LemmaProportionalityWeightedKernels}
that the kernel of the projection onto the wavelet
bases viewed as subset of $L_2(\n)$ is similarly bounded,
for any measure $\n$ with a Lebesgue density that is bounded away from zero and infinity.

In Section~\ref{SectionMinimaxRate} the projection has been
decomposed as a sum of projections on subspaces. Within the
context of wavelet bases it is natural to choose the
blocks in this decomposition equal to unions of resolution
levels, so that all base functions at a given refinement level are
included in the same block. To this end we choose
the grids $n=k_0<k_2<\cdots<k_R=k$ 
and $n=l_0<l_2<\cdots<l_R=k$ defined in 
(\ref{EqDefinitionGridk})-(\ref{EqDefinitionGridl}),
which determine the block size, equal to dyadic numbers
$$k_r=2^{d p_r}\sim n 2^{r/\a},\qquad l_r=2^{dq_r}
\sim n 2^{r/\b}.$$
This can be achieved within a factor of $2^d$.
The basis $e_1,\ldots, e_k$ in 
Section~\ref{SectionMinimaxRate} can be taken equal to
the functions $\psi_{i,j}^v$ for $i=0,\ldots, I$, $j\in J_i$,  and
$v\in \{0,1\}^d$, with $v\not=0$ if  $i>0$. Because there are 
$2^d$ as many functions $\psi_{i,j}^v$ at resolution level
$i=i_0+1$ than there are at level $i=i_0$, the
preceding display can be satisfied.

\section{Appendix 3: $U$-statistics}
\label{SectionU}
For degenerate, symmetric functions $f: \X^m\ra\RR$
and $g: \X^{m'}\ra\RR$ we have 
\begin{align}
P^n\UU_nf&=0,\nonumber\\
P^n (\UU_nf)(\UU_ng)&=\frac1{{n\choose m}}P^m fg,\qquad \hbox{ if } m=m',
\label{EqVarDegU}\\
P^n (\UU_nf)(\UU_ng)&=0,\qquad\qqqquad \hbox{ if } m\not=m'.\nonumber
\end{align}
(If the functions $f$ and $g$ are not symmetric, then the
second equation needs correction.)
The variance of $\UU_nf$ for a general measurable symmetric function 
$f: \X^m\to \RR$ can be obtained from this formula by decomposing
$f$ in its Hoeffding decomposition
$$f(X_1,\ldots, X_m)=\sum_{A\subset \{1,\ldots, m\}}f_{|A|}(X_A),$$
where $f_{|A|}(X_A)$ is the orthogonal projection of $f(X_1,\ldots,X_m)$
onto the set of square integrable random variables that are measurable
functions of $X_A:=(X_i: i\in A)$ 
that are orthogonal to the random variables that are measurable functions
of $X_B$ for any $B\not=A$ (see e.g. \cite{vanderVaart98}, Section~11.4). 
Because the terms of this decomposition are orthogonal and each term with 
$A\not=\emptyset$ is degenerate and symmetric, we have 
$$\var_p \UU_n f=\sum_{l=1}^m {m\choose l}^2\frac1{{n\choose l}}P^l f_l^2.$$
The functions $f_l$ can be expressed in the conditional expectations
$$\bar f_l(X_1,\ldots, X_l)=\E_p\bigl(f(X_1,\ldots, X_m)\given X_1,\ldots,X_l\bigr).$$
While $\bar f_l(X_1,\ldots, X_l)$ is the projection of $f(X_1,\ldots, X_m)$ onto the linear space
of all functions of $(X_1,\ldots, X_l)$, the variable $f_l(X_1,\ldots, X_l)$ is the projection onto the
smaller space of such functions that are also orthogonal to functions of fewer than $l$ variables.
Hence $P^l f_l^2\le P^l\bar f_l^2$ and an upper bound on the variance is obtained by replacing
$f_l$ by  $\bar f_l$. An alternative direct expression of the variance in the functions $\bar f_l$ is
obtained by writing
$$\var_p \UU_n f=\frac1{\binom nm^2}\!\!\sum_{\substack{A\subset\{1,\ldots, n\}\\ |A|=m}} \sum_{\substack{A'\subset\{1,\ldots, n\}\\ |A'|=m}} \cov\bigl(f(X_A),f(X_{A'})\bigr)
=\sum_{l=1}^m\frac{\binom{n-m}{m-l}\binom{m}{l}}{\binom{n}{m}}\z_l, $$ 
for $\z_l=P^l\bar f_l^2-f_0^2$ the covariance of $f(X_A)$ and $f(X_B)$ when $A$ and $B$ have $l$ variables in common
(see e.g.\ \cite{vanderVaart98}, page~163). This leads to the following upper bound.

\begin{lemma}
\label{LemVarUBound}
For any permutation-symmetric, measurable function $f: \X^m\to\RR$ and $n\ge 2m$,
\begin{equation}
\label{EqVarU}
\var_p\UU_nf\le \sum_{l=1}^m \frac{2^lm^{2l}}{n^l} P^l\bar f_l^2.
\end{equation}
\end{lemma}

\begin{proof}
We use the formula in the display preceding the lemma, with the further bounds $\z_l\le P^l\bar f_l^2$,
$\binom ml\le m^l$,  and, for $l\le m\le n$,
\begin{align*}
\frac{\binom{n-m}{m-l}}{\binom{n}{m}}
&=\frac{1}{n(n-1)\cdots(n-l+1)}\frac{m!}{(m-l)!}\times\\
&\qquad\times\frac{(n-m)(n-m-1)\cdots (n-m-(m-l-1))}{(n-l)(n-l-1)\cdots(n-(m-1))}
\le \frac{(2m)^l}{n^l},
\end{align*}
for $n\ge 2l$. We use here that $n-k\ge n/2$ for $k\le n/2$, $m-k\le m$ for $k\ge 0$, and 
that the fraction at the beginning of the second line is a product of numbers smaller than 1.
\end{proof}

For easy bounds on sums the following lemma is useful.

\begin{lemma}
\label{LemSumAverage}
For any random variables $Y_1,\ldots, Y_m$
\begin{itemize}
\item[(i)] $\var(Y_1+\cdots+ Y_m)\le \sum_{j=1}^m 2^j\var Y_j$.
\item[(ii)]$\E \bigl(m^{-1}(Y_1+\cdots +Y_m)\bigr)^2\le \max_j \E Y_j^2$.
\end{itemize}
\end{lemma}

\begin{proof}
For (i) we apply the triangle inequality to see 
that $\sdev(Y_1+\cdots+ Y_m)\le \sum_j\sdev Y_j$. Hence 
$\var(Y_1+\cdots +Y_m)\le \sum_ja_j\var Y_j\sum_j(1/a_j)$, by the Cauchy-Schwarz inequality,
for any $a_j>0$. We take $a_j=2^j$.

For (ii) we write the square as the double sum $m^{-2}\sum_{i,j} Y_iY_j$ and use the Cauchy-Schwarz
inequality to see that the expectation of this is bounded above by $m^{-2}\sum_{i,j}(\E Y_i^2\E Y_j^2)^{1/2}$,
which is bounded by $m^{-2}\sum_{i,j} \max_k \E Y_k^2$.
\end{proof}

\end{document}